\documentclass[11pt,letterpaper]{article}
\usepackage{times}
\usepackage{amsmath,amsfonts,amssymb,amsthm}
\usepackage[colorlinks=true,linkcolor=red,citecolor=blue,urlcolor=red]{hyperref}
\usepackage{paralist}
\usepackage{framed}
\usepackage{verbatim}

\setlength{\parskip}{1ex plus 0.5ex minus 0.5ex}
\setlength{\parindent}{10pt}

\usepackage{mathptmx}
\usepackage[text={6.45in,8.95in}]{geometry}

\newenvironment{reminder}[1]{\smallskip
\noindent {\bf Reminder of #1  }\em}{}

\newtheorem{theorem}{Theorem}[section]
\newtheorem{hypothesis}{Hypothesis}

\newtheorem{proposition}{Proposition}

\newtheorem{lemma}{Lemma}[section]

\newenvironment{proofof}[1]{\smallskip
\noindent {\bf Proof of #1.  }}{\hfill$\Box$
\smallskip}

\def \sgn {\textrm{sign}}

\def \MODp {{\sf MODp}}
\def \MODm {{\sf MODm}}

\def \AC {{\sf AC}}
\def \THR {{\sf THR}}
\def \ETHR {{\sf ETHR}}
\def \NP {{\sf NP}}

\def \TIME {{\sf TIME}}

\def \NTIME {{\sf NTIME}}

\def \MOD {{\sf MOD}}
\def \AND {{\sf AND}}

\def \XOR {\MOD2}

\def\eps{\varepsilon}

\def\poly{\text{poly}}

\def \E {\text{\sf E}}
\def \P {\text{\sf P}}

\def \ACC {{\sf ACC}}
\def \Z {{\mathbb Z}}
\def \R {{\mathbb R}}
\def \F {{\mathbb F}}

\def \NC {{\sf NC}}

\def \Q {{\mathbb Q}}
\newcommand{\ip}[2]{\ensuremath{\left\langle #1,#2\right\rangle}}
\def \LIN {{\sf SUM}}

\title{Limits on representing Boolean functions\\ by linear combinations of simple functions:\\ thresholds, ReLUs, and low-degree polynomials} 

\author{R. Ryan Williams\footnote{EECS and CSAIL, MIT. Supported by NSF CCF-1553288. Any opinions, findings and conclusions or recommendations expressed in this material are those of the authors and do not necessarily reflect the views of the National Science Foundation.} 
}

\begin{document}
\date{}
\maketitle

\begin{abstract} We consider the problem of representing Boolean functions exactly by ``sparse'' linear combinations (over $\R$) of functions from some ``simple'' class ${\cal C}$. In particular, given ${\cal C}$ we are interested in finding low-complexity functions lacking sparse representations. When ${\cal C}$ is the set of PARITY functions or the set of conjunctions, this sort of problem has a well-understood answer; the problem becomes interesting when ${\cal C}$ is ``overcomplete'' and the set of functions is not linearly independent. We focus on the cases where ${\cal C}$ is the set of linear threshold functions, the set of rectified linear units (ReLUs), and the set of low-degree polynomials over a finite field, all of which are well-studied in different contexts. 

Building on the new easy witness lemma of Cody Murray and the author, we provide generic tools for proving lower bounds on representations of this kind. Applying these, we give several new lower bounds for ``semi-explicit'' Boolean functions. Let $\alpha(n)$ be an unbounded function such that $n^{\alpha(n)}$ is time constructible (e.g. $\alpha(n) = \log^{\star}(n)$). We show:
\begin{itemize}
\item Functions in $\NTIME[n^{\alpha(n)}]$ that require super-polynomially many linear threshold functions to represent (depth-two neural networks with sign activation function, a special case of depth-two threshold circuit lower bounds). 
\item Functions in $\NTIME[n^{\alpha(n)}]$ that require super-polynomially many ReLU gates to represent (depth-two neural networks with ReLU activation function).
\item Functions in $\NTIME[n^{\alpha(n)}]$ that require super-polynomially many $O(1)$-degree $\F_p$-polynomials to represent exactly, for every prime $p$ (related to problems regarding Higher-Order Uncertainty Principles). We also obtain a function in $\E^{\NP}$ requiring $2^{\Omega(n)}$-size linear combinations.
\item Functions in $\NTIME[n^{\poly(\log n)}]$ that require super-polynomially many $\ACC \circ \THR$ circuits of polynomial size to represent exactly (further generalizing the recent lower bounds of Murray and the author).
\end{itemize}
We also obtain ``fixed-polynomial'' lower bounds for functions in $\NP$, for the first three representation classes.
\end{abstract}

\thispagestyle{empty}
\newpage
\setcounter{page}{1}

\section{Introduction}
Given $f : \{0,1\}^n\rightarrow \{0,1\}$ and a class ${\cal C}$ of ``simple'' functions, when can $f$ be represented exactly as a \emph{short} $\R$-linear combination of functions from ${\cal C}$? When ${\cal C}$ forms a basis for $B_n$ (the set of all Boolean functions on $n$ inputs) the question has a unique answer that is generally easy to obtain, by analyzing the appropriate linear system (the cases where ${\cal C}$ is the set of all parity functions or the set of all conjunctions are canonical examples). For $|{\cal C}| \gg 2^n$, the situation becomes much more interesting, as there can be many possible representations. The general problem of understanding which functions do and do not have sparse representations for simple ${\cal C}$ arises in many different mathematical topics. Three relevant to TCS are depth-two threshold circuits, depth-two neural networks with various activation functions, and higher-order Fourier analysis. We use the notation \[\LIN \circ {\cal C}\] to denote the class of $\R$-linear combinations of ${\cal C}$-functions; for example, $\LIN \circ \XOR$ denotes $\R$-linear combinations of PARITY functions. The relevant complexity measure for a ``circuit'' in $\LIN \circ {\cal C}$ is the fan-in of the $\LIN$ gate, which we call the \emph{sparsity} of the circuit. 

\paragraph{Sums of Threshold Circuits.} Let $\LIN \circ \THR$ be linear combinations of linear threshold functions (LTFs).\footnote{From here on, ``linear combination'' means ``$\R$-linear combination'', unless otherwise specified.} As there are $2^{\Theta(n^2)}$ $n$-variate threshold functions~\cite{Winder62}, a function $f: \{0,1\}^n \rightarrow \{0,1\}$ has \emph{many} possible representations as a $\LIN \circ \THR$. Such circuits are also known in the machine learning literature as \emph{depth-two neural networks with sign activation functions}. 

In 1994, Roychowdhury, Orlitsky, and Siu~\cite{RoychowdhuryOS94} noted that no interesting size lower bounds were known for computing Boolean functions with $\LIN \circ \THR$ circuits (beyond the few that are/were known for $\THR \circ \THR$~\cite{Hajnal93,RoychowdhuryOS94,KaneW16,ChenSS15,Tamaki16,AlmanCW16}). The problem was raised again more recently in CCC'10 by Hansen and Podolskii~\cite{HP10}. In particular, the following remains largely unanswered:

{\narrower

{\bf Problem:} {\em Find an explicit $f : \{0,1\}^{\star} \rightarrow \{0,1\}$ without polynomially-sparse $\LIN \circ \THR$, i.e., every linear combination of LTFs computing $f$ on $n$-bit inputs needs $n^{\omega(1)}$ LTFs, for infinitely many $n$.}

}

Because of prior lower bounds in weaker settings (such as majority-of-majority~\cite{Hajnal93} and majority-of-thresholds~\cite{Nisan94}), it is natural to think that correlation bounds against linear threshold functions should help.\footnote{That is, one wants to show that a function cannot be $(1/2+\eps(n))$-approximated by a linear threshold function, for the tiniest $\eps(n) > 0$ possible.} Correlation bounds do imply lower bounds for $\LIN \circ \THR$, but only when the weights in the linear combination are not too large (i.e., the weights must be in $[-2^{\delta n},2^{\delta n}]$ for small $\delta < 1$). However, if arbitrary weights are allowed, interesting lower bounds on $\LIN \circ \THR$ (beyond $\Omega(n^{2.5}$ wires~\cite{KaneW16}) were open, to the best of our knowledge. In Section~\ref{section-SUM-THR}, we prove arbitrary polynomial lower bounds for $\NP$  functions: 

\begin{theorem} \label{NP-LIN-THR} For all $k$, there is an $f_k \in \NP$ without $\LIN \circ \THR$ circuits of $n^k$ sparsity. Furthermore, for every unbounded $\alpha(n)$ such that $n^{\alpha(n)}$ is time constructible, there is a function in $\NTIME[n^{\alpha(n)}]$ that does not have $\LIN \circ \THR$ circuits of polynomial sparsity.
\end{theorem}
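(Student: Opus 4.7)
The plan is to instantiate the Williams program for nondeterministic-time circuit lower bounds in the $\LIN \circ \THR$ setting, leaning on the Murray-Williams easy witness lemma advertised in the introduction. That program has two reusable ingredients: (i) a non-trivial satisfiability algorithm for $\LIN \circ \THR$ circuits, running in time $2^n / n^{\omega(1)}$ on $n$-variable circuits of polynomial sparsity, and (ii) the easy witness lemma, which says that if every function in $\NTIME[n^{\alpha(n)}]$ has small ${\cal C}$-representations then accepting witnesses of $\NTIME[n^{\alpha(n)}]$ computations can be forced to admit small ${\cal C}$-representations as well. Feeding these into the standard speedup-of-nondeterministic-computation argument contradicts the nondeterministic time hierarchy, giving the super-polynomial part.

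For the $\NTIME[n^{\alpha(n)}]$ statement I would first fold a polynomially-sparse $\LIN$ gate into a $\THR \circ \THR$-like circuit, after discretizing the real weights to polynomial bit-complexity while preserving Boolean outputs on $\{0,1\}^n$ (standard doubling or small-perturbation arguments should suffice, since each LTF is uniquely determined by its behavior on the cube). Invoking known fast SAT algorithms for $\THR \circ \THR$ (e.g.\ Alman-Chan-Williams) then yields the required savings. Next, assume for contradiction that every function in $\NTIME[n^{\alpha(n)}]$ has polynomial-sparsity $\LIN \circ \THR$; the easy witness lemma forces accepting witnesses to be describable by such circuits, so a nondeterministic algorithm can guess the circuit and run the $\LIN \circ \THR$-SAT algorithm, placing $\NTIME[n^{\alpha(n)}]$ into $\NTIME[n^{o(\alpha(n))}]$ and contradicting the nondeterministic time hierarchy.

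For the fixed-polynomial $\NP$ bound, the strategy is a Kannan-style diagonalization coupled with a Karp-Lipton-type collapse. After discretization the number of $n^k$-sparse $\LIN \circ \THR$ circuits on $n$ variables is at most $2^{\poly(n)}$, so a $\Sigma_2$ machine can find, via guess-and-verify, a truth table disagreeing with every such circuit on some input; this is a Kannan-style hard function in $\Sigma_2$. If in addition $\NP$ had $n^k$-sparsity $\LIN \circ \THR$, then the easy witness lemma (used as a Karp-Lipton substitute) collapses $\Sigma_2$ into $\NP$ and the hard function itself lands in $\NP$; otherwise the conclusion is already immediate. The main obstacle will be ingredient (i): real weights in a $\LIN$ gate resist the bit-level arguments underlying known $\THR$-SAT algorithms, so one must discretize weights to polynomial bit-complexity without disturbing the Boolean behavior, and then track the resulting overhead carefully to stay below the $2^n/n^{\omega(1)}$ threshold. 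A secondary concern is verifying that the Murray-Williams lemma goes through for a class parameterized by continuous weights rather than a purely combinatorial circuit template, though this should require only a light re-examination of their generic argument once weights are discretized, since $\LIN \circ \THR$ has polynomial-size $\TC^0$ evaluators in the discretized regime.
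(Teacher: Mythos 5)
Your high-level framing (Williams program: nontrivial circuit-analysis algorithm plus easy witness lemma plus nondeterministic time hierarchy) is the right one and matches the paper's architecture. But the proposal hinges on ingredient (i) — a satisfiability algorithm for $\LIN\circ\THR$ circuits of arbitrary \emph{polynomial} sparsity running in time $2^{n}/n^{\omega(1)}$ — and no such algorithm is known. Folding a Boolean-valued polynomial-sparsity $\LIN$ gate into a top $\THR$ gate gives a $\THR\circ\THR$ circuit with polynomially many gates, and the SAT algorithms you cite (Alman--Chan--Williams, Chen--Santhanam--Srinivasan, Tamaki, and relatives) only give nontrivial savings for circuits with sub-quadratic or at best slightly-super-linear wire or gate counts, not for arbitrary $n^k$-size $\THR\circ\THR$. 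Moreover, in the Williams framework you do not get to solve SAT for a clean $\LIN\circ\THR$ circuit: you must solve SAT/UNSAT for the guessed small circuit \emph{composed with Boolean circuitry} (the 3SAT clause enumerator or the circuit-evaluation gadget), and that composition is no longer depth-two $\THR$. So the step ``invoke known fast SAT algorithms for $\THR\circ\THR$'' is a genuine gap.

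The paper avoids the need for such a SAT algorithm entirely. Its meta-theorem (Theorem~\ref{generic-LBs}) shows it suffices to compute $\sum_{x\in\{0,1\}^n}\prod_{i=1}^k f_i(x)$ for $k\le 4$ functions $f_i\in\mathcal{C}$ faster than $2^n$. For $\mathcal{C}=\THR$ this is handled in Theorem~\ref{sum-prod-THR} by converting each $\THR$ to a sum of $\poly(n)$ $\ETHR$ gates (Hansen--Podolskii), collapsing a product of $O(1)$ $\ETHR$ gates into a single $\ETHR$, and then counting solutions to the resulting exact threshold as a $\#$Subset-Sum instance via Horowitz--Sahni in $2^{n/2}$ time. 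This ``Sum-Product of a \emph{constant} number of gates'' primitive is dramatically weaker than a SAT algorithm for a polynomial-size $\THR\circ\THR$ circuit, and that weakening is exactly what makes the lower bound provable. All the verifications needed in Lemma~\ref{nondet-UNSAT} — checking the guessed $\LIN\circ\mathcal{C}$ circuit is Boolean-valued (Theorem~\ref{check-boolean}, via $\sum_a f(a)^2(1-f(a))^2=0$), checking the output gate, checking input gates, checking gate consistency — are expanded by distributivity into polynomially many Sum-Product calls with at most $4$ factors each.

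A second, smaller gap: your plan doesn't address the Boolean-valuedness issue. A nondeterministically guessed $\LIN\circ\THR$ circuit (with discretized weights) can fail to output $\{0,1\}$ values, in which case ``folding into $\THR\circ\THR$'' is unjustified and the downstream SAT instance is ill-defined. You need an efficient test that the guessed circuit is Boolean-valued, and in the paper this test is precisely another Sum-Product computation (Theorem~\ref{check-boolean}) — it cannot be dispensed with by weight discretization alone. Your discretization concern (addressed in the paper by Proposition~\ref{prop-weights}) is real but orthogonal to this issue.

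Finally, for the fixed-$k$ $\NP$ bound the paper does not run a Kannan-style $\Sigma_2$ diagonalization with a Karp--Lipton collapse; it directly invokes the Murray--Williams connection (Theorem~\ref{NP-from-UNSAT}), which already packages ``nondeterministic CircuitUNSAT in $2^{n-\eps n}$ time $\Rightarrow$ fixed-polynomial $\NP$ lower bounds'' and is built on the new easy witness lemma. Your route could in principle be made to work, but it would re-derive that theorem rather than use it, and it would still need the (unavailable) SAT algorithm.
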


Note that for arbitrary circuits (even for $\THR \circ \THR$ circuits) the best known complexity for such functions without $n^k$-size circuits (for fixed $k$) is $\sf MA/1$ (\cite{Santhanam09}) and $S_2^p$.

\paragraph{Sums of ReLU Gates.} A ReLU (rectified linear unit) gate is a function $f : \{0,1\}^t \rightarrow \R^+$ such that there is a vector $w \in \R^t$ and scalar $a \in \R$ such that for all $x$, \[f(x) = \max\{0,\langle x,w\rangle + a\}.\] It is important to note that ReLU gates might not be Boolean-valued, but they must output non-negative numbers on all Boolean inputs. Linear combinations of ReLU gates are also known as \emph{depth-two neural networks with ReLU activation functions}, and they are intensely studied in machine learning. Several lower bounds for Sums-of-ReLU functions (which for consistency we call $\LIN \circ {\sf ReLU}$) have recently been shown for functions with \emph{real-valued} inputs and outputs (examples include~\cite{EldanS16,Telgarsky16,arora2016understanding,Daniely17,safran2017depth}) but none of the methods extend to Boolean functions, to the best of our knowledge. Recently, Mukherjee and Basu~\cite{Basu-Mukherjee17} have proved $\Omega(n^{1-\delta})$-gate lower bounds for $\THR\circ {\sf ReLU}$ circuits computing the Andreev function, extending ideas in~\cite{KaneW16}.

Observing that for $|\langle x,w\rangle| \geq 1$ we have \[\max\{0,\langle x,w\rangle + 1\}-\max\{0,\langle x,w\rangle\} = \sgn(\langle x,w\rangle),\] it follows that every $\LIN\circ\THR$ circuit can be simulated by a $\LIN\circ {\sf ReLU}$ circuit with only a doubling of the sparsity. In Section~\ref{section-ReLU} we extend our lower bounds to Sums-of-ReLU circuits:

\begin{theorem}\label{NP-ReLU}
For all $k$, there is an $f_k \in \NP$ without $\LIN \circ {\sf ReLU}$ circuits of $n^k$ sparsity. Furthermore, for every unbounded $\alpha(n)$ such that $n^{\alpha(n)}$ is time constructible, there is a function in $\NTIME[n^{\alpha(n)}]$ that does not have $\LIN \circ {\sf ReLU}$ circuits of polynomial sparsity. 
\end{theorem}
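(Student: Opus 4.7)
The plan is to reduce $\LIN\circ{\sf ReLU}$ sparsity lower bounds to $\LIN\circ\THR$ sparsity lower bounds on Boolean inputs, with only a factor-$O(n)$ blow-up. Given such a reduction, both statements of Theorem~\ref{NP-ReLU} follow immediately by invoking Theorem~\ref{NP-LIN-THR} (with the fixed-polynomial exponent shifted by one): a hypothetical polynomial-sparsity $\LIN\circ{\sf ReLU}$ representation would yield a polynomial-sparsity $\LIN\circ\THR$ representation of the same function, contradicting Theorem~\ref{NP-LIN-THR}.

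The core claim is that a single ReLU gate $R(x) = \max\{0,\langle x,w\rangle + a\}$ evaluates, on $\{0,1\}^n$, to an $\R$-linear combination of at most $n+1$ linear threshold functions. Let $T(x) := [\langle x,w\rangle + a > 0]$, which is itself an LTF. On Boolean inputs,
\[
 R(x) \;=\; T(x)\cdot(\langle x,w\rangle + a) \;=\; a\cdot T(x) + \sum_{i=1}^n w_i \cdot (x_i\, T(x)).
\]
Each product $x_i\, T(x)$ is again a single LTF via an ``$M$-trick'': choose $M > \max_{y\in\{0,1\}^n}|\langle y,w\rangle + a|$ and set $T'_i(x) := [M x_i + \langle x,w\rangle + a > M]$. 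When $x_i=1$, this reduces to $[\langle x,w\rangle + a > 0] = T(x)$; when $x_i=0$, the right-hand side cannot be exceeded and $T'_i(x) = 0$. Hence $R(x) = a\cdot T(x) + \sum_i w_i\cdot T'_i(x)$ is a $\LIN\circ\THR$ of sparsity $n+1$.

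Applying this gate-by-gate, any $\LIN\circ{\sf ReLU}$ representation of $f$ of sparsity $s$ converts into a $\LIN\circ\THR$ representation of sparsity at most $s(n+1)$. Contrapositively, a super-polynomial $\LIN\circ\THR$ lower bound transfers to a super-polynomial $\LIN\circ{\sf ReLU}$ lower bound, and an $n^{k+1}$ fixed-polynomial $\LIN\circ\THR$ lower bound transfers to an $\Omega(n^k)$ lower bound on $\LIN\circ{\sf ReLU}$. Since Theorem~\ref{NP-LIN-THR} supplies such bounds for every $k$ and in the $\NTIME[n^{\alpha(n)}]$ regime, Theorem~\ref{NP-ReLU} follows on the very same witness functions.

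The only mild subtlety concerns the strict-versus-nonstrict inequality in the LTF defining $T(x)$: the decomposition uses $> 0$, and the $M$-trick must be stated with the matching convention. This is cosmetic, since on Boolean inputs $\langle x,w\rangle + a$ takes only finitely many values, so strict and non-strict LTFs are interchangeable up to an infinitesimal perturbation of the bias. I do not anticipate any other technical obstacle; the entire content of the argument is the observation that multiplying a single LTF by one Boolean coordinate stays inside the class $\THR$, which is precisely what $\LIN\circ\THR$ lower bounds are built to exploit.
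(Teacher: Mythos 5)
Your proposal is correct, and it takes a genuinely different and more elementary route than the paper. The paper proves Theorem~\ref{NP-ReLU} by designing a new Sum-Product algorithm for ${\sf ReLU}$ gates (Theorem~\ref{sum-prod-ReLU}): it has to cope with the real-valued gate outputs via a split-and-list argument that sorts the second half of the cube by the $0$-coordinate and pre-aggregates $(2^k+1)$-dimensional ``moment'' vectors so that the weighted count reduces to a sum of inner products. You short-circuit all of that by observing that, on Boolean inputs, the single identity $R(x) = T(x)\cdot(\langle x,w\rangle + a) = a\,T(x) + \sum_i w_i\, x_i T(x)$ holds with $T(x)=[\langle x,w\rangle+a>0]$, and that each $x_i T(x)$ is itself an LTF by the familiar large-$M$ conjunction trick $T'_i(x)=[Mx_i + \langle x,w\rangle + a > M]$. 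I checked the three cases ($\langle x,w\rangle+a$ positive, zero, negative) and the $M$-trick ($x_i=0$ forcing $T'_i(x)=0$; $x_i=1$ reducing to $T(x)$), and they all go through; the fact that $M$ may be exponentially large is harmless since LTFs with arbitrary real weights are still LTFs (and Proposition~\ref{prop-weights} plus Maass's bound allow rounding if needed). This gives $\LIN\circ{\sf ReLU} \subseteq \LIN\circ\THR$ with an $(n+1)$-factor sparsity blow-up on $\{0,1\}^n$, which, combined with the reverse containment noted in the paper's introduction, means the two classes are polynomially equivalent over the Boolean cube. Your argument then imports Theorem~\ref{NP-LIN-THR} wholesale, yielding both the fixed-polynomial $\NP$ bound (apply the $\THR$ theorem with exponent $k+2$, say, to absorb the $(n+1)$ factor) and the $\NTIME[n^{\alpha(n)}]$ bound. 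What your route buys is simplicity and a structural insight; what the paper's route buys is a self-contained $2^{n/2}\cdot n^{O(k)}$-time Sum-Product algorithm for ${\sf ReLU}$ that can be reused in other contexts where an explicit counting algorithm (not just a class containment) is needed.
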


\paragraph{Representing Boolean Functions With Higher-Order Polynomials.} Higher-order Fourier analysis of Boolean functions deals with representing Boolean functions by $\R$-linear combinations of $\F_2$-polynomials of degree higher than one (see~\cite{HigherOrderSurvey16} for a survey of some applications in CS theory). The question of which (if any) explicit functions lack \emph{sparse} representations, even for degree-two polynomials, has been wide open. Letting $\MOD2$ be the class of parity functions, this question asks to find lower bounds for $\LIN \circ \MOD2 \circ \AND_2$ circuits (in our notation, $\AND_k$ denotes ANDs of fan-in at most $k$). Such lower bound problems appear much more difficult than the degree-one case of $\LIN \circ \MOD2$. Even understanding the sparsity of the $\AND$ function in the quadratic (and in general, degree-$O(1)$) setting is a prominent open problem:

\begin{hypothesis}[Quadratic Uncertainty Principle~\cite{SimonsOpenProblems14}]\label{deg-2-uncertainty} There is an $\eps > 0$ such that the $\AND$ function on $n$ variables does not have $\LIN \circ \MOD2 \circ \AND_2$ circuits of $2^{\eps n}$ sparsity.
\end{hypothesis}

Although it is believed that $\AND$ needs exponential sparsity, to our knowledge the \emph{only} lower bound known for an explicit function in $\LIN \circ \MOD2 \circ \AND_2$ was $\Omega(n)$-sparsity. For completeness we include a proof provided to us by Lovett~\cite{Lovett-personal17}) in Appendix~\ref{AND-linear-LB}. Again, when the weights in the linear combination are required to be small (magnitudes are $2^{\eps n}$ for small $\eps > 0$), correlation bounds yield some results: one example (among many) is the work of Green~\cite{Green04} showing that a majority vote of quadratic $\F_3$-polynomials needs $2^{\Omega(n)}$ polynomials to compute PARITY. (Other works in this vein include~\cite{HastadG91,CaiGT96,Bourgain05,GalT12}; see Viola~\cite{Viola09} for a survey.) However, for arbitrary weights, no non-trivial lower bounds have been reported (to our knowledge).

In Section~\ref{section-low-degree}, we prove polynomial sparsity lower bounds for Boolean functions in $\NP$ and $2^{\Omega(n)}$-size lower bounds for $\E^{\NP}$, against linear combinations of polynomials over any prime field with any constant degree: 

\begin{theorem} \label{NP-polys} For every integer $k,d \geq 1$ and prime $p$, there is an $f_k \in \NP$ without $\LIN \circ \MODp \circ \AND_d$ circuits of $n^k$ sparsity. Furthermore, for every unbounded $\alpha(n)$ such that $n^{\alpha(n)}$ is time constructible, there is a function in $\NTIME[n^{\alpha(n)}]$ that does not have $\LIN \circ \MODp \circ \AND_d$ circuits of polynomial sparsity.
\end{theorem}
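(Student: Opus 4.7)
The plan follows the Murray–Williams easy witness framework cited in the introduction. Two ingredients are needed: (a) a nontrivial algorithm that, given a $\LIN \circ \MODp \circ \AND_d$ circuit of polynomial sparsity, decides its satisfiability (or computes its truth table) faster than the trivial $2^n \cdot \poly(n)$ bound; and (b) the easy witness lemma, which says that if $\NTIME[n^{\alpha(n)}]$ has polynomially-sparse $\LIN \circ \MODp \circ \AND_d$ circuits infinitely often, then nondeterministic $2^n$-time witnesses are themselves representable by polynomially-sparse $\LIN \circ \MODp \circ \AND_d$ circuits on $n$-bit indices. Combined, (a) and (b) let one guess the witness-circuit and verify it with the algorithm faster than brute force, yielding an $\NTIME[2^n] \subseteq \NTIME[o(2^n)]$-style simulation that contradicts the nondeterministic time hierarchy.

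The core work is ingredient (a). Represent each $\MODp \circ \AND_d$ gate as an $\F_p$-polynomial of degree $d(p-1)$ using Fermat's little theorem: $[q(x) \equiv 0 \pmod p] = 1 - q(x)^{p-1} \bmod p$, which is Boolean-valued on $\{0,1\}^n$ and supported on at most $n^{O(d)}$ monomials. A sparsity-$s$ circuit $C$ then equals $C(x) = \sum_i c_i - \sum_i c_i \cdot q_i(x)^{p-1}$. Splitting $x = (y,z)$ into halves and factoring each $q_i^{p-1}(x) = \sum_j u_{ij}(y)\, v_{ij}(z)$, the evaluation of $C$ on all $2^n$ inputs reduces to a rectangular matrix product of inner dimension $s \cdot n^{O(d)}$ between matrices of dimensions $2^{n/2} \times (s\cdot n^{O(d)})$ and $(s\cdot n^{O(d)}) \times 2^{n/2}$; this is computable in time $2^n \cdot \poly(s,n)$ and, via Coppersmith-style fast rectangular matrix multiplication when $s$ is polynomial and $d$ is constant, with the super-polynomial savings required by the Murray–Williams framework.

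For the fixed-polynomial $\NP$ case, apply the analogous $\NP$-variant of the framework: for each fixed $k$ and $d$, combine the algorithm of (a) with a Santhanam/Williams-style $n^k$ diagonalization to produce an $\NP$ function hard against all $n^k$-sparsity $\LIN \circ \MODp \circ \AND_d$ circuits. To enable $\NP$ guessing of circuits, one argues (by a standard linear-algebraic manipulation of the $2^n$ constraints $f(x) = \sum_i c_i g_i(x)$, in the spirit of Muroga's bound for LTFs) that any exact sparse representation can be taken with integer weights of polynomial bit-complexity.

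The main obstacle is (a): because the outer combination is over $\R$ rather than $\F_p$, one cannot collapse the $s$ inner gates into a single low-degree $\F_p$-polynomial, and the algorithm must juggle mixed real and modular arithmetic. The remedy is to evaluate each $q_i^{p-1}(x)$ entirely in $\F_p$ (recovering a $\{0,1\}$-valued output) and defer the $\R$-scaled summation to the very end, after batching all input evaluations through a single rectangular matrix product; this decoupling of the mod-$p$ polynomial structure from the outer real arithmetic is exactly what makes the speedup go through uniformly in $p$ and $d$.
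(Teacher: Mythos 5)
Your high-level skeleton (fast circuit-analysis algorithm plus the Murray--Williams easy-witness connections, plus a Muroga-style bound to make the $\R$-weights guessable) matches the paper's strategy, but there are two genuine gaps in the core of the argument.

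First, ingredient (a) as you have set it up cannot deliver the required savings. Your rectangular matrix product is $2^{n/2}\times \poly(n)$ times $\poly(n)\times 2^{n/2}$, which has a $2^{n/2}\times 2^{n/2}$ output; merely writing it down costs $2^n$ time. Coppersmith-style rectangular multiplication reduces $N\times N^\alpha\times N$ from $N^\omega$ to $N^{2+o(1)}$; it does not break the $N^2$ barrier, and here $N^2=2^n$. You yourself write the bound ``$2^n\cdot\poly(s,n)$,'' which is exactly brute force. The Murray--Williams conversions (Theorems~\ref{NP-from-UNSAT}, \ref{NP-from-UNSAT2}) need a nondeterministic Circuit UNSAT algorithm running in roughly $2^{n-\eps n}$ time on $2^{\eps n}$-size circuits; a $2^n\cdot\poly(s)$ algorithm, when $s=2^{\eps n}$, is $2^{(1+O(\eps))n}$, strictly worse than brute force. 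There is no such rectangular-matrix speedup to be had here, and Williams' ACC result is not a precedent for one (there the speedup comes from enumerating fewer variables before the matrix step, not from the matrix step alone).

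Second, and independently, you never address the central obstruction that drives the paper's design: when a nondeterministic procedure \emph{guesses} a $\LIN\circ\MODp\circ\AND_d$ circuit with $\R$-coefficients, it must first verify that the circuit is Boolean-valued; otherwise ``satisfiability'' is not even well-defined and the circuit cannot be composed with Boolean gadgets. The paper handles this (Theorem~\ref{check-boolean}) by observing that $f$ is Boolean on $\{0,1\}^n$ iff $\sum_{a} f(a)^2(1-f(a))^2=0$, which after distributing reduces to $O(s^4)$ \emph{Sum-Product} instances of $k\le 4$ base functions. This is why the whole framework is phrased around the Sum-Product problem for $k=1,\dots,4$ rather than around SAT or truth-table evaluation of the full $\LIN\circ{\cal C}$ circuit, and the same Sum-Product calls also verify (via degree-$3$ gate-consistency polynomials) that the guessed circuit correctly simulates Circuit Evaluation (Lemma~\ref{nondet-UNSAT}). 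Your proposal, by targeting satisfiability of the outer linear combination directly, has no analogue of this Booleanness check and so the nondeterministic algorithm cannot be made sound.

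Finally, even setting aside the above, the actual $\MODp\circ\AND_d$ Sum-Product algorithm in the paper (Theorem~\ref{sum-prod-polys}) bears no resemblance to a matrix-multiplication argument. It rewrites each $p_i$ as $\sum_{a\in\F_p}a\cdot[p_i(x)=a]$, reduces a conjunction of exact-polynomial constraints to counting roots of a single $\F_p$-polynomial via an averaging over random $\F_p$-linear combinations, applies Beigel--Tarui modulus-amplifying polynomials to fold away $\delta n$ variables, and then uses fast multilinear polynomial evaluation over the remaining $(1-\delta)n$ variables to get time $p^{2k}(1.9^n+2^{n-n/(6dp)})\poly(n)$. That is the genuine $2^{n-\eps n}$ savings the framework needs; the Fermat/rectangular-MM route does not produce it.
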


\begin{theorem} \label{ENP-polys} For every $d \geq 1$ and prime $p$, there is an $\alpha > 0$ and an $f \in \E^{\NP}$ without $\LIN \circ \MODp \circ \AND_d$ circuits of $2^{\alpha n}$ sparsity.  
\end{theorem}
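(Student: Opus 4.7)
The plan is a counting-plus-diagonalization argument in the spirit of classical $\E^{\NP}$-vs.-circuits lower bounds. First I will upper bound the number $|B|$ of $n$-variate Boolean functions representable as a $\LIN \circ \MODp \circ \AND_d$ circuit of sparsity $s$. A $\MODp \circ \AND_d$ gate is specified by the $\F_p$-coefficients of a polynomial of degree $\leq d$ in $n$ variables, giving at most $p^{\binom{n}{\leq d}} \leq 2^{O(n^d \log p)}$ possible gates, and hence at most $2^{O(s n^d \log p)}$ possible $s$-tuples of such gates. For any fixed such tuple, the real span of the chosen gates is a subspace of $\R^{2^n}$ of dimension at most $s$, and hence contains at most $2^s$ distinct Boolean vectors: any Boolean function in the span is determined by its values on $s$ inputs at which the gate outputs are linearly independent. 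Multiplying, $|B| \leq 2^{O(s n^d \log p)}$. Fixing $d$ and $p$ and setting $s = 2^{\alpha n}$ for any constant $\alpha < 1$ gives $|B| \leq 2^{O(2^{\alpha n} n^d)} \ll 2^{2^n}$ for large $n$, so many truth tables lie outside $B$.

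Next I will produce such a truth table in $\E^{\NP}$. The key observation is that the language $\{T \in \{0,1\}^{2^n} : T \in B\}$ is in $\NP$ relative to input size $2^n$: a verifier can guess the $s$ gate descriptions together with rational coefficients of polynomial bit-complexity (which suffices by Cramer's rule applied to the defining linear system $\sum_i c_i g_i(x) = T[x]$) and then check in $\poly(2^n) = 2^{O(n)}$ time that the linear combination matches $T$ at every input. Using this oracle, on input $x \in \{0,1\}^n$ the $\E^{\NP}$ machine computes (in $2^{O(n)}$ time and oracle queries) a specific truth table $T^* \in \{0,1\}^{2^n}$ with $T^* \notin B$, and returns $T^*[x]$. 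The construction of $T^*$ follows the standard template: set the bits of $T^*$ one at a time using oracle queries of the form ``does some $s$-sparse $\LIN \circ \MODp \circ \AND_d$ circuit have truth table beginning with the currently committed prefix followed by bit $b$?''. Whenever exactly one of the two extensions is consistent with a surviving circuit, commit to the other; in the ``contested'' case where both extensions remain consistent, use the fact (from counting) that at most $\log_2 |B| = O(2^{\alpha n} n^d) \ll 2^n$ such contested decisions can occur before the candidate-circuit set is exhausted.

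The main technical delicacy is implementing the diagonalization with only an $\NP$ oracle (rather than $\Sigma_2^{\P}$), which is the same hurdle encountered in the classical proof that $\E^{\NP}$ admits $\Omega(2^n/n)$-size lower bounds for general circuits; once the $\NP$-verifier for $B$ and the counting bound above are in place, the rest of the argument is an application of that standard template. The algebraic structure of $\LIN \circ \MODp \circ \AND_d$ circuits enters only through these two ingredients, and any constant $\alpha > 0$ small enough that $s \cdot n^d \cdot \log p$ and the number of rounds multiply to $2^{O(n)}$ yields the desired $f \in \E^{\NP}$.
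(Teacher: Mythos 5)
Your approach has a genuine gap at the diagonalization step, and it is the very gap the paper's machinery is built to get around. The counting bound and the observation that ``$T \in B$'' is decidable in $\NP$ relative to input length $2^n$ are both fine. But the bit-by-bit construction of $T^{\star}$ with only an $\NP$ oracle does not work: when both extensions of a prefix are ``contested'' (some circuit in $B$ is consistent with each), an $\NP$ oracle does not tell you which choice eliminates more circuits. Committing to either side may eliminate as few as one circuit, so a contested decision does not halve the surviving set, and the bound of $\log_2 |B| = O(2^{\alpha n} n^d)$ contested rounds is incorrect; you could require up to $|B| = 2^{\Omega(2^{\alpha n})} \gg 2^n$ rounds, far more than the $2^n$ bits available. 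In fact, if you resolve contested cases by, say, always choosing the bit~$0$ extension, you can end with a truth table that \emph{is} in $B$. Finding the lexicographically first $T \notin B$ via prefix extension needs queries of the form ``is there an extension $T$ of~$\pi$ with $T \notin B$?'', which is a $\Sigma_2$ question (exists $T$, for all circuits $C$, $C$ does not compute $T$), not an $\NP$ question. This is exactly why the classical result of Miltersen, Vinodchandran, and Watanabe lands in $\E^{\Sigma_2 \P}$, not $\E^{\NP}$; contrary to what you write, there is \emph{no} classical proof that $\E^{\NP}$ contains functions of maximum (or even $2^{\Omega(n)}$) circuit complexity, and the paper explicitly calls reducing $\E^{\Sigma_2 \P}$ to something smaller ``a longstanding open problem.'' Since your counting and $\NP$-verifiability steps apply equally well to general fan-in-$2$ circuits, if your argument were correct it would resolve that open problem.

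The paper's actual route is algorithmic rather than information-theoretic. It proves Theorem~\ref{generic-LBs2}: assuming a $2^{n - \eps n}$-time Sum-Product algorithm for ${\cal C}$ and that ${\cal C}$ is efficiently closed under $\NC^0_c$, one shows that if $\E^{\NP}$ \emph{did} have $2^{\alpha n}$-sparse $\LIN\circ{\cal C}$ circuits, then (via the easy witness lemma of~\cite{Williams10} and the local 3SAT reduction of~\cite{JMV13}) every $L \in \NTIME[2^n]$ could be decided in nondeterministic $o(2^n)$ time by guessing a sparse witness circuit, checking that it is Boolean-valued using Theorem~\ref{check-boolean}, and verifying the guessed witness with the Sum-Product algorithm --- contradicting the nondeterministic time hierarchy theorem. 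Theorem~\ref{ENP-polys} then follows by plugging in the Sum-Product algorithm of Theorem~\ref{sum-prod-polys} together with the fact that $\MODp\circ\AND_d$ is efficiently closed under $\NC^0_c$. It is precisely this ``algorithms-to-lower-bounds'' reasoning, and not a counting argument, that pushes the complexity of the hard function down to $\E^{\NP}$.
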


Note the ``smallest'' known complexity class for a function lacking $2^{\Omega(n)}$-size circuits is $\E^{\Sigma_2 \P}$~\cite{MiltersenVW99}, and it is a longstanding open problem to reduce the complexity class for such a function, even against depth-3 $\AC^0$ circuits.

\subsection{Intuition}

Here we give an overview of some of the ideas used to prove the lower bounds in this work. The lower bounds of this paper follow the high-level strategy of proving circuit lower bounds by designing circuit-analysis (satisfiability) algorithms~\cite{Williams10,WilliamsJACM14,WilliamsTHR14}. However, in this work we must execute this strategy differently. All previous lower bounds proved in this framework utilize the ``polynomial method'' from circuit complexity in various ways (representing a circuit by a low-degree polynomial of some kind), combined with fast matrix multiplication and/or fast polynomial evaluation. These approaches do not seem to work for solving SAT on linear combinations of thresholds, low-degree polynomials, or ReLU gates. For example, we do not know how to get a sparse (probabilistic or approximate) polynomial (over any field) for computing an OR of many $\LIN \circ \THR$s, and it is likely that any reasonable approach via polynomials would fail to yield non-trivial results. However, we are able to adapt some bits of the polynomial method to the setting of low-degree polynomials (see Section~\ref{section-low-degree}). 

Another complication is that, in the prior lower bound arguments, a nondeterministic procedure \emph{guesses} a small circuit $C$ of the kind one wishes to prove a lower bound against, and composes $C$ with other Boolean circuitry to form a SAT instance. In our case, if we guess some  arbitrary $\LIN \circ {\cal C}$ circuit, we first need to know if this circuit is actually computing a Boolean function; if not, then the satisfiability question itself is not well-defined, and it will not be possible to meaningfully compose such a circuit with other Boolean circuits. Thus we need a way to efficiently check whether a linear combination is Boolean-valued.

We give a generic way to ``lift'' non-trivial algorithms for counting SAT assignments to short products of ${\cal C}$ circuits to non-trivial algorithms for detecting if a given $\LIN \circ {\cal C}$ circuit is Boolean-valued and for counting SAT assignments. More precisely, we show that in order to prove lower bounds for linear combinations of ${\cal C}$-functions, it suffices to solve a certain sum-product task faster than exhaustive search:

{\narrower

{\bf Sum-Product over ${\cal C}$:} Given $k$ functions $f_1,\ldots,f_k$ from ${\cal C}$, each on Boolean variables $x_1,\ldots,x_n$, compute \[\sum_{x \in \{0,1\}^n} \prod_{i=1}^k f_i(x).\]

}

Note the Sum-Product is computed over $\R$, and the task makes sense even if the functions $f_1,\ldots,f_k$ output \emph{non-Boolean} values. Further note that if the functions $f_1,\ldots,f_k$ are Boolean-valued, then the product of $k$ of them is simply the $\AND$ of $k$ of them. In general, the Sum-Product problem will be $\NP$-hard for most interesting representation classes: for example, it is already equivalent to Subset Sum when ${\cal C}$ is the set of \emph{exact} threshold functions (see Section~\ref{prelims} for a definition). Our meta-theorem states that mild improvements over exhaustive search for Sum-Product over ${\cal C}$ imply strong lower bounds for $\LIN \circ {\cal C}$:

\begin{theorem}\label{generic-LBs}
Suppose every $C \in {\cal C}$ has a $\poly(n)$-bit representation, where each $C$ can be evaluated on a given input in $\poly(n)$ time. Assume there is an $\eps > 0$ and for $k=1,\ldots,4$ there is an $n^{O(1)}\cdot 2^{n-\eps n}$-time algorithm for computing the Sum-Product of $k$ functions $f_1(x_1,\ldots,x_n),\ldots,f_k(x_1,\ldots,x_n)$ from ${\cal C}$. Then:
\begin{compactenum}
\item For every $k$, there is a function in $\NP$ that does not have $\LIN \circ {\cal C}$ circuits of sparsity $n^k$. 
\item For every unbounded $\alpha(n)$ such that $n^{\alpha(n)}$ is time constructible, there is a function in $\NTIME[n^{\alpha(n)}]$ that does not have $\LIN \circ {\cal C}$ circuits of polynomial sparsity. 
\end{compactenum}
\end{theorem}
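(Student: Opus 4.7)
The plan is to follow the ``algorithmic method'' for circuit lower bounds pioneered by Williams, using the hypothesized Sum-Product algorithms (for $k \le 4$) to construct two primitives for arbitrary $\LIN \circ \mathcal{C}$ circuits $L = \sum_{i=1}^s a_i C_i$ of sparsity $s$, and then feeding those primitives into the Murray--Williams easy-witness framework.

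First, I would build two auxiliary algorithms. Primitive \textbf{(A)} decides whether a given $L$ is Boolean-valued. The key observation is that
\[
\sum_{x \in \{0,1\}^n} L(x)^2(L(x)-1)^2 \;=\; 0 \quad \iff \quad L(x) \in \{0,1\} \text{ for every } x,
\]
since each summand is a nonnegative real and vanishes iff $L(x) \in \{0,1\}$. Expanding $L^2(L-1)^2 = L^4 - 2L^3 + L^2$ and interchanging sums reduces this test to computing $\sum_x L(x)^k$ for $k \in \{2,3,4\}$; each such quantity equals a linear combination with coefficients $a_{i_1} \cdots a_{i_k}$ of quantities $\sum_x C_{i_1}(x) \cdots C_{i_k}(x)$, which are exactly Sum-Product instances on $k \le 4$ functions from $\mathcal{C}$. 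By the hypothesis, each inner sum is solvable in $n^{O(1)} \cdot 2^{n - \eps n}$ time; combined over $s^{O(1)}$ tuples this yields a total running time of $s^{O(1)} \cdot 2^{n - \eps n}$ for (A). Primitive \textbf{(B)} counts the satisfying assignments of a Boolean-valued $L$ via $\sum_x L(x) = \sum_i a_i \sum_x C_i(x)$, using $s$ calls to Sum-Product at $k=1$; by fixing any subset of input bits before summing, the same computation gives $\LIN \circ \mathcal{C}$ SAT at restricted inputs within the same time bound.

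Second, I would plug (A) and (B) into the Murray--Williams framework. Assume for contradiction that every function in $\NTIME[n^{\alpha(n)}]$ has a $\LIN \circ \mathcal{C}$ circuit of polynomial sparsity. The easy witness lemma, scaled to this subexponential time regime, then yields an $\NTIME[n^{\alpha(n)}]$-complete problem whose accepting witnesses are succinctly describable as the outputs of polynomial-sparsity $\LIN \circ \mathcal{C}$ circuits. A faster nondeterministic algorithm for this complete problem would guess such a circuit $L$, apply (A) to verify that $L$ is Boolean-valued, and then apply (B) to the composition of $L$ with a Cook--Levin verifier circuit in order to detect a valid witness. The $2^{-\eps n}$ savings in (A) and (B) propagate through to beat the nondeterministic time hierarchy on the complete problem, producing the contradiction. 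For the fixed-polynomial $\NP$ lower bound of item~1, the same primitives combine with a Santhanam-style case analysis on whether a suitable $\PSPACE$-hard language has $\LIN \circ \mathcal{C}$ circuits of sparsity $n^k$, refined (using the fact that (A) and (B) are deterministic rather than randomized) so that the witnessing hard function lands in $\NP$ rather than in the usual $\prMA$ or $S_2^{\P}$.

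The main obstacle will be the composition step in the second part: a Cook--Levin verifier couples many $L$-evaluations (each producing one bit of the guessed witness) with ordinary Boolean circuitry, so the resulting SAT question is not, at first sight, a single $\LIN \circ \mathcal{C}$ instance of low sparsity to which (B) applies. Handling this requires re-expressing the composed predicate, summed over the free variables, as a product of only constantly many $\mathcal{C}$-functions with polynomially bounded coefficients; this is precisely why the Sum-Product hypothesis is invoked for all $k \le 4$ rather than just $k=1$. Once this composition is set up correctly so that (A) and (B) can be applied uniformly, the remainder of the argument is a careful but essentially routine execution of the Williams / Murray--Williams template in both the subexponential and fixed-polynomial regimes.
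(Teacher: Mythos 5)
Your primitive (A) is exactly the paper's Theorem~\ref{check-boolean}, and your observation that $\sum_x L(x)$ decomposes into $k=1$ Sum-Products is correct as far as it goes. But the verification step has a genuine gap, which you flag yourself and then wave away. You propose to verify a guessed witness circuit $L$ by ``applying (B) to the composition of $L$ with a Cook--Levin verifier circuit,'' and you acknowledge that the composed object is not a single low-sparsity $\LIN \circ \mathcal{C}$ circuit. Your proposed patch --- ``re-expressing the composed predicate \ldots{} as a product of only constantly many $\mathcal{C}$-functions with polynomially bounded coefficients'' --- is not something the hypotheses of the theorem let you do. Composing a $\LIN \circ \mathcal{C}$ circuit with arbitrary fan-in-$O(1)$ Boolean gates produces a $\LIN \circ \mathcal{C} \circ {\sf ANY}_{O(1)}$ circuit, and there is no assumption here that $\mathcal{C}$ is closed under bottom-fan-in $\NC^0$ composition. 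The paper explicitly isolates that closure property as an \emph{additional} hypothesis, invoked only for the $\E^{\NP}$ result (Theorem~\ref{generic-LBs2}), and for $\THR$ and ${\sf ReLU}$ that closure does not hold, so your route would not cover two of the three applications.

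The paper sidesteps this composition problem entirely by working with Circuit Unsatisfiability rather than witness verification (Lemma~\ref{nondet-UNSAT}). Given a fan-in-2 circuit $C$ with $s$ gates, one nondeterministically guesses a single $\LIN \circ \mathcal{C}$ circuit $D(x,i)$ intended to output the value of gate $i$ of $C$ on input $x$. After the Boolean-value check (your (A), $k \le 4$), correctness of $D$ decomposes into purely \emph{local} constraints: $\sum_a D(a,s)=0$ checks the output gate ($k=1$); $\sum_a (D(a,i)-x_i)^2=0$ checks input gates ($k=2$); and for each internal gate $i$ with feeders $i_1,i_2$, $\sum_a p_i(D(a,i),D(a,i_1),D(a,i_2))=0$ with $p_i$ a degree-3 polynomial checks gate consistency ($k\le 3$). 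Every Boolean gate of $C$ is absorbed into the \emph{guess} $D$, not composed with it, so no closure of $\mathcal{C}$ is ever needed. That nondeterministic UNSAT algorithm is then fed into the Murray--Williams connection (Theorems~\ref{NP-from-UNSAT} and~\ref{NP-from-UNSAT2}) as a black box. Relatedly, for item~1 you suggest a Santhanam-style dichotomy, but Santhanam's argument lands in $\sf MA/1$ rather than $\NP$; it is precisely the Murray--Williams easy-witness machinery --- entered through Circuit-UNSAT, not through direct witness verification --- that keeps the hard function inside $\NP$.
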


Theorem~\ref{generic-LBs} is used to prove lower bounds against $\LIN \circ \THR$, $\LIN \circ {\sf ReLU}$, and $\LIN \circ \MOD2 \circ \AND_{O(1)}$. For the $\E^{\NP}$ lower bounds, we use a closure property of $\LIN \circ \MOD2 \circ \AND_{O(1)}$ combined with standard ideas from this line of work (see Theorem~\ref{generic-LBs2}). 

Theorem~\ref{generic-LBs} (and its components) can also be used to easily ``lift'' existing circuit lower bounds to \emph{linear combinations} of those circuits:

\begin{theorem} \label{lin-acc-thr} For every $d,m \geq 1$, there is a $b \geq 1$ and an $f \in {\sf NTIME}[n^{\log^b n}]$ that does not have $\LIN \circ \AC^0_d[m] \circ \THR$ circuits of $n^a$ size, for every $a$.
\end{theorem}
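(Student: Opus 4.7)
The plan is to apply the generic meta-theorem (Theorem~\ref{generic-LBs}, or a natural quasi-polynomial scaling of it) to the class ${\cal C} := \AC^0_d[m] \circ \THR$. The task reduces to exhibiting a non-trivial Sum-Product algorithm on this class; the rest is bookkeeping within the framework already set up in the paper.

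For the Sum-Product, the key observation is that every $C \in {\cal C}$ is Boolean-valued, so for $k \leq 4$ circuits $C_1, \ldots, C_k \in {\cal C}$ each of size $s$, the product $\prod_{i=1}^k C_i(x)$ equals the pointwise AND $\bigwedge_{i=1}^k C_i(x)$, which is itself an $\AC^0_{d+1}[m] \circ \THR$ circuit of size $O(ks)$. Thus $\sum_{x \in \{0,1\}^n} \prod_{i=1}^k C_i(x)$ is the \#SAT count of a single $\AC^0_{d+1}[m] \circ \THR$ circuit of size polynomial in $n$. Now the Murray--Williams-style \#SAT algorithm for $\AC^0[m] \circ \THR$ circuits (the algorithmic heart of the easy-witness lemma cited throughout the excerpt) solves such instances in time $2^{n - n^{\Omega(1)}}$, uniformly over constant depth. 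This meets --- and indeed exceeds --- the $n^{O(1)} \cdot 2^{n - \eps n}$ requirement in the hypothesis of Theorem~\ref{generic-LBs}.

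Feeding this algorithm into the meta-theorem, scaled to accommodate circuit sizes up to $n^a$ for arbitrary fixed $a$, produces the desired function in $\NTIME[n^{\log^b n}]$ that avoids $\LIN \circ \AC^0_d[m] \circ \THR$ circuits of size $n^a$ for every $a$, where $b$ depends only on $d$ and $m$. The main obstacle is aligning scales: Theorem~\ref{generic-LBs} as literally stated yields \emph{super}-polynomial size lower bounds at nondeterministic time $n^{\alpha(n)}$, while here we instead want the complementary ``every fixed polynomial size'' flavor at quasi-polynomial time. This requires re-running the meta-theorem's internal argument with circuits of quasi-polynomial size and checking that the $2^{n - n^{\Omega(1)}}$ savings persist at that scale --- which they do, since the Murray--Williams algorithm degrades only polylogarithmically with circuit size. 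This recalibration is standard within the Williams-style lower-bound framework, so once the Product-equals-AND reduction of the first step is in hand, the argument should go through routinely.
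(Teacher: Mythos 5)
Your proposal takes essentially the same route as the paper: the product of Boolean-valued $\AC^0_d[m]\circ\THR$ circuits is just their $\AND$, so Sum-Product reduces to a single \#SAT instance solvable in $2^{n-n^{\eps}}$ time via the known $\AC^0[m]\circ\THR$ \#SAT algorithm of~\cite{WilliamsTHR14}, and the quasi-polynomially scaled meta-theorem (Theorem~\ref{nqp-generic}) combined with the Murray--Williams UNSAT-to-$\NTIME[n^{\poly(\log n)}]$ connection yields the lower bound. The only minor slip is attributing the $\AC^0[m]\circ\THR$ \#SAT algorithm to Murray--Williams rather than to Williams~\cite{WilliamsTHR14}; otherwise the argument matches the paper's.
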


That is, we obtain super-polynomial sparsity lower bounds on representing nondeterministic quasi-polynomial-time functions with $\R$-linear combinations of $\ACC \circ \THR$ circuits (each of polynomial size). This applies the fact that we can solve the Sum-Product problem on $\ACC \circ \THR$ circuits (because we can count SAT assignments to them), with an analogous running time as the best SAT algorithm. More details on Theorem~\ref{lin-acc-thr} can be found in Section~\ref{section-meta-theorem}.

\paragraph{Outline.} The next section is the Preliminaries, which gives background knowledge. Section~\ref{section-meta-theorem} proves Theorem~\ref{generic-LBs}. In Sections~\ref{section-SUM-THR}, \ref{section-ReLU}, and \ref{section-low-degree}, Sum-Product algorithms for $\THR$, ${\sf ReLU}$, and $\MODp \circ \AND_d$ (degree-$d$ $\F_p$-polynomials) are provided which beat exhaustive search. The algorithms for $\THR$ and ${\sf ReLU}$ (Theorems~\ref{sum-prod-THR} and \ref{sum-prod-ReLU}) build upon and extend old Subset-Sum algorithms (Theorem~\ref{subset-sum}). The algorithm for $\MODp \circ \AND_d$ (Theorem~\ref{sum-prod-polys}) uses tools from the polynomial method in a new way. Applying Theorem~\ref{generic-LBs} to each of these algorithms, we obtain strong lower bounds for $\LIN \circ {\cal C}$ for all three classes ${\cal C}$. 

\section{Preliminaries}\label{prelims}

Let ${\cal C}$ be a class of functions of the form $f : \{0,1\}^n \rightarrow \R$. Each member $C \in {\cal C}$ has a number of inputs $n$ and a size, which is the length of the representation of $C$ in bits. 
For the classes $\THR$, $\MOD2 \circ \AND_{O(1)}$, and ${\sf ReLU}$, the size $|C|$ of a representation is $\poly(n)$ bits, without loss of generality; see Proposition~\ref{prop-weights}. (For classes such as $\MOD2 \circ \AND_{\log_2(n)}$, a member of the class takes $\Omega(n^{\log n})$ bits to represent, in the worst case.) We assume that for all $n$, our class ${\cal C}$ contains the projection functions $f_i(x_1,\ldots,x_n) = x_i$ for all $i=1,\ldots,n$. We also assume that ${\cal C}$ is \emph{evaluatable}, meaning that there is a universal $k \geq 1$ such that every $C \in {\cal C}$ can be evaluated on a given input in $O(|C|^k)$ time. All classes we consider have this property.

As is standard, we let ${\sf ANY}_c$ denote the class of Boolean functions with $c$ inputs (the class contains ``any'' such function).

An arbitrary $\LIN \circ {\cal C}$ circuit $C$ over $n$ variables represents some function $f : \{0,1\}^n \rightarrow \R$. We say that $C$ is \emph{Boolean-valued} if for all $x \in \{0,1\}^n$, the output of $C$ on $x$ is in $\{0,1\}$. The following proposition is useful to keep in mind, as it shows that every sparse linear combination of Boolean functions implementing another Boolean function has an equivalent linear combination with ``reasonable'' coefficients.

\begin{proposition} \label{prop-weights} Let ${\cal C}$ be a class of functions with co-domain $\{0,1\}$, and let $C$ be an $\LIN \circ {\cal C}$ circuit of sparsity $s$ that is Boolean-valued. There is an equivalent $\LIN \circ {\cal C}$ circuit $C'$ such that every weight in the linear combination of $C'$ has the form $j/k$, where both $j$ and $k$ are integers in $[-s^{s/2},s^{s/2}]$. 
\end{proposition}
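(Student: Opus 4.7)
The plan is to recast the representation constraint as a linear system $M\alpha = b$, where $M$ is the $2^n \times s$ matrix with $M[x,i] = f_i(x)$ for the $s$ functions $f_1,\dots,f_s \in \mathcal{C}$ used in $C$, and $b \in \{0,1\}^{2^n}$ is the output vector of $C$. Since $C$ is Boolean-valued, such a $b$ is well-defined and by hypothesis the system has a solution. The idea is to use rank reduction plus Cramer's rule to extract a particular solution whose coordinates are ratios of small integer determinants.

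First I would let $r = \mathrm{rank}(M) \le s$ and choose $r$ linearly independent columns of $M$ indexed by $T \subseteq \{1,\dots,s\}$; since the column space of $M_T$ equals the column space of $M$, the vector $b$ lies in the column space of $M_T$. Next I would pick $r$ row indices $S$ such that the $r\times r$ submatrix $M_{S,T}$ is invertible (possible because $M_T$ has rank $r$). Any solution $\alpha_T$ to $M_T \alpha_T = b$ must also satisfy $M_{S,T} \alpha_T = b_S$; but the latter subsystem has the unique solution $\alpha_T = M_{S,T}^{-1} b_S$, so this is forced. Setting $\alpha_i = 0$ for $i \notin T$ and using $\alpha_T$ on $T$ then gives an equivalent $\LIN \circ \mathcal{C}$ representation with sparsity at most $r \le s$. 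Finally, Cramer's rule expresses each coordinate of $\alpha_T$ as $\det(M^{(i)}_{S,T}) / \det(M_{S,T})$, where $M^{(i)}_{S,T}$ denotes $M_{S,T}$ with the appropriate column replaced by $b_S$. All entries of these matrices are in $\{0,1\}$, so both determinants are integers, and by Hadamard's inequality their absolute values are at most $(\sqrt{r})^r \le s^{s/2}$, which is the claimed bound.

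The only subtle step is the one where I must ensure that a solution to the $r\times r$ subsystem $M_{S,T}\alpha_T = b_S$ automatically solves the full $2^n$-equation system $M_T \alpha_T = b$. This follows because \emph{some} solution to the full system exists (by Boolean-valuedness of $C$ and choice of $T$), and any such solution projects to a solution of the $S$-indexed subsystem; invertibility of $M_{S,T}$ then forces uniqueness, so the Cramer solution coincides with a valid global solution. Everything else is routine linear algebra: the rank-reduction step, the determinant bound, and verifying that the new weights are of the form $j/k$ with $|j|,|k|\le s^{s/2}$.
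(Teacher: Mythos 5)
Your proof is correct and takes essentially the same approach as the paper's: set up the linear system over all Boolean inputs, extract a full-rank square subsystem, and apply Cramer's rule together with Hadamard's determinant bound for $\{0,1\}$-matrices. The only presentational difference is that the paper first assumes WLOG the $s$ functions are linearly independent, whereas you perform the rank reduction to $r$ columns explicitly; these amount to the same thing, and your version spells out the step (which the paper leaves implicit) that the unique solution of the invertible $r\times r$ subsystem must coincide with any global solution.
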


\begin{proof} (See also \cite{Muroga-Toda-Takasu61,Babai-weights10}.) Let $C$ be a linear combination of $s$ functions from ${\cal C}$. WLOG, the set of $s$ Boolean functions from ${\cal C}$ is a linearly independent set (otherwise, we could obtain a smaller linear combination representing the same function). The problem of finding coefficients for the Boolean-valued $C$ is equivalent to solving a certain linear system $Ax=b$ in $s$ unknowns over the rationals, where $b \in \{0,1\}^{2^n}$ and $A \in \{0,1\}^{s \times 2^n}$. Take a linearly independent subsystem of $s$ of these $2^n$ equations. Since the determinant of any $s \times s$ Boolean matrix is in $[-s^{s/2},s^{s/2}]$~\cite{Hadamard1893}, the result follows from Cramer's rule.
\end{proof}

The relevant theorem for sums of ReLU gates is more involved, but Maass~\cite{Maass97} shows how the weights for a circuit of size $s$ need only $\poly(s,n)$ bits of precision. Such ``analog-to-digital'' results are crucial for our work, as in our lower bound proofs we will need a discrete nondeterministic algorithm to \emph{guess} a $\LIN \circ {\cal C}$ circuit and check various properties of it.

\paragraph{Useful Results For Thresholds.} We draw from several algorithms and representation theorems from past work. For $\LIN \circ \THR$, we eventually appeal to a classic result from exact algorithms:

\begin{theorem}[Horowitz and Sahni~\cite{Horowitz-Sahni74}] \label{subset-sum} The number of Subset Sum solutions to any arbitrary instance of $n$ items with integer weights of magnitude $[-2^W,2^W]$ can be computed in $2^{n/2}\cdot \poly(W)$ time.
\end{theorem}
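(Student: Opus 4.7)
The plan is to adapt the classical meet-in-the-middle technique of Horowitz and Sahni to the counting version. First, partition the $n$ items into two halves $A, B$ with $|A|, |B| \leq \lceil n/2 \rceil$. Enumerate all $2^{|A|}$ subsets $S \subseteq A$, tabulating the sums $\sigma(S) = \sum_{i \in S} w_i$ in an array $L_A$; do the same for $B$, producing $L_B$. Each such sum is an integer of magnitude at most $n \cdot 2^W$, so it fits in $W + O(\log n) = \poly(W)$ bits (we may assume $n \leq 2^W$, else $n = \poly(W)$ and the claimed bound is trivial).

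Next, sort $L_B$ and collapse equal sums into a list of (value, multiplicity) pairs $(v_1,m_1),(v_2,m_2),\ldots$, which takes $2^{|B|}\cdot \poly(W)$ time. Let $t$ be the target sum. For each entry $\sigma(S)$ in $L_A$, binary-search the sorted list for the value $t-\sigma(S)$, and add the corresponding multiplicity (or $0$ if absent) to a running total. The total equals
\[
\sum_{S \subseteq A}\; \bigl|\{T \subseteq B : \sigma(T) = t - \sigma(S)\}\bigr|
= \bigl|\{R \subseteq [n] : \textstyle\sum_{i \in R} w_i = t\}\bigr|,
\]
which is the desired count. The running time is $O(2^{n/2})$ enumerations plus $O(2^{n/2})$ binary searches on a sorted array of size $O(2^{n/2})$, each step manipulating $\poly(W)$-bit integers, yielding the claimed $2^{n/2}\cdot \poly(W)$ bound.

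The only subtleties are that we need exact multiplicities rather than mere existence (handled by grouping equal values in $L_B$ as above) and that weights may be negative (handled automatically since we perform exact integer arithmetic throughout; negativity never enters the algorithm). Thus there is no real obstacle beyond bookkeeping: the hardest part is simply noting that collapsing $L_B$ into (value, multiplicity) pairs turns the decision-version meet-in-the-middle algorithm into a counting algorithm with the same asymptotic running time.
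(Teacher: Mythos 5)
Your proposal is correct and is exactly the construction the paper has in mind; the paper does not actually prove this statement (it is cited as a classical result of Horowitz and Sahni, with only a one-line remark that the find-a-solution algorithm ``can be easily adapted to count solutions as well''), so your write-up simply fills in that standard adaptation. The split-and-list decomposition into two halves, sorting one half and collapsing equal partial sums into (value, multiplicity) pairs, and then binary-searching from the other half is precisely the standard way to turn the meet-in-the-middle decision procedure into a counting procedure at no asymptotic cost, and your correctness identity is right.

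One very small slip: your parenthetical ``else $n = \poly(W)$'' is backwards --- if $n > 2^W$ then $n$ is \emph{not} polynomial in $W$, so that clause does not justify absorbing the $\log n$ term into $\poly(W)$. The honest running time of your algorithm is $2^{n/2}\cdot\poly(n,W)$, since partial sums are $(W+O(\log n))$-bit integers. This is immaterial here because in every use of the theorem in the paper the weight parameter $W$ is already $\Omega(n\log n)$ (Theorem~\ref{sum-prod-THR} passes in weights of magnitude $n^{\Theta(kn)}$), so $\poly(W)$ dominates $\poly(n)$; but you should either state the bound as $2^{n/2}\cdot\poly(n,W)$ or explicitly note that the interesting regime is $W\geq n$. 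Everything else is fine.
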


Theorem~\ref{subset-sum} is usually stated in terms of \emph{finding} a subset sum solution, but the algorithm can be easily adapted to count solutions as well.

A Boolean function $f$ is called an \emph{exact threshold function} if there are real-valued $\alpha_1,\ldots,\alpha_n$ and $t$ such that for all $x$, \[f(x) = 1 ~\iff~ \sum_i \alpha_i x_i = t.\] Let $\ETHR$ be the class of exact threshold functions. For our $\LIN \circ \THR$ circuit results, the following transformation is extremely useful: 

\begin{theorem}[Hansen and Podolskii~\cite{HP10}] \label{THR2ETHR} Every linear threshold function in $n$ variables can be represented as an linear combination of $\poly(n)$ exact threshold functions, each with coefficient $1$.
\end{theorem}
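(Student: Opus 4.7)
The plan is to partition the satisfying set $S:=\{x\in\{0,1\}^n: f(x)=1\}$ into $\poly(n)$ affine slices, each of the form $\{x\in\{0,1\}^n : a\cdot x=b\}$; summing the ETHRs corresponding to these slices with coefficient $+1$ each then recovers $f$ on Boolean inputs. As a first step, I would normalize the LTF via the Muroga--Toda--Takasu theorem, so that $f(x)=[\sum_i w_i x_i\geq T]$ with integer $w_i$ and $T$ of magnitude at most $2^{O(n\log n)}$. By negating inputs and reordering, one may assume $0\leq w_1\leq\cdots\leq w_n$.

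To build the partition, I would stratify $S$ by the index $i(x):=\max\{j:x_j=1\}$ of the largest set coordinate (handling $x=\mathbf{0}$ as a trivial case). For each $i\in\{1,\ldots,n\}$, stratum $S_i$ is cut out by the equations $x_i=1$, $x_{i+1}=\cdots=x_n=0$, together with the residual LTF inequality $\sum_{j<i}w_j x_j\geq T-w_i$ on $i-1$ variables. If $T-w_i\leq 0$, the residual is identically true and $S_i$ is a single affine slice: the pinning equations merge into one ETHR via weight amplification (encoding a conjunction ``$A(x)=a$ and $B(x)=b$'' as the single ETHR ``$A(x)+M\cdot B(x)=a+Mb$'' for $M$ exceeding the attainable range of $A(x)-a$ on Boolean inputs). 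If $T-w_i>\sum_{j<i}w_j$, the stratum is empty. Otherwise, one recursively decomposes the residual LTF and then merges each residual ETHR with the pinning equations of $S_i$ into a single ETHR via the same amplification trick.

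The main obstacle is bounding the total number of ETHRs by $\poly(n)$. A naive bound on the recursion tree gives $n^{O(n)}$ slices (depth $n$, branching $n$), which is far too large. The crux must be to show that overwhelmingly many recursive subcalls land in the trivial regime (residual identically true or empty), leaving only $\poly(n)$ genuinely ``boundary'' subcalls; an amortization argument exploiting the sortedness of the weights and the shrinking gap between the residual threshold $T-w_i$ and the remaining weight budget $\sum_{j<i}w_j$ should control this. Establishing the polynomial bound---either via the recursion sketched above, or via a cleaner direct construction that avoids recursion by explicitly listing $O(n^c)$ hyperplanes from the sorted-weight structure---is the heart of the argument.
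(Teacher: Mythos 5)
Your recursion is the wrong decomposition and is provably exponential, not merely hard to analyze. Stratifying by the leading $1$-bit $i(x)=\max\{j:x_j=1\}$ and then recursing on the residual LTF produces, on $n$-bit $\mathrm{MAJ}$ (all weights $1$, threshold $t$), exactly $\binom{n}{t}$ exact threshold functions: writing $N(n,t)$ for the number of ETHRs your recursion emits for $[\sum_{j\le n}x_j\ge t]$, one has $N(n,0)=1$, $N(n,t)=0$ for $t>n$, and for $t\ge 1$ the stratum $S_i$ recurses iff $i\ge t$, giving $N(n,t)=\sum_{m=t-1}^{n-1}N(m,t-1)$; by the hockey-stick identity $N(n,t)=\binom{n}{t}$, so $N(n,\lceil n/2\rceil)=2^{\Theta(n)}$. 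Thus the "crux" you flag is not a missing amortization argument: no clever accounting of boundary subcalls can rescue this scheme, because the blowup occurs already for the simplest nontrivial LTF, one for which the correct decomposition $\mathrm{MAJ}_n(x)=\sum_{c\ge \lceil n/2\rceil}[\sum_i x_i=c]$ is a \emph{length-$O(n)$ family of parallel hyperplanes with the same weight vector}, precisely the structure the leading-bit stratification destroys.

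The paper itself does not re-prove this statement; it cites Hansen and Podolskii and sketches the very different "parallel hyperplanes" intuition just described: cover the positive side $\{x:w\cdot x\geq T\}$ by a disjoint union of level sets $[v\cdot x=c]$. Be aware, though, that even that sketch is not a complete proof as stated, and your proposed fallback of "explicitly listing $O(n^c)$ hyperplanes from the sorted Muroga--Toda--Takasu weights" does not work off the shelf either: with integer weights of magnitude $2^{\Theta(n\log n)}$, the linear form $w\cdot x$ can attain exponentially many distinct values above $T$, so one cannot simply enumerate level sets of $w\cdot x$. The content of the Hansen--Podolskii theorem is exactly that one can choose the $\poly(n)$ exact-threshold hyperplanes (with varying weight vectors and thresholds) in a way that covers the halfspace disjointly while sidestepping this count; neither your recursion nor a naive level-set enumeration reaches that. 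The merging step you describe (encoding a conjunction of linear equations as a single ETHR by weight amplification) is fine and matches Theorem~\ref{ANDETHR2ETHR}, but that is the only salvageable piece.
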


It follows that every $\LIN \circ \THR$ of sparsity $s$ has an equivalent $\LIN \circ \ETHR$ of sparsity $\poly(s)$. The idea is that a $\THR$ function defines a set of points in the Boolean hypercube lying on one side of a given hyperplane; we can ``cover'' all the points lying on one side by a \emph{disjoint} sum of $\poly(n)$ ``parallel'' hyperplanes, which function as $\ETHR$ gates. Thus each coefficient in the linear combination is simply $1$.

Another useful property of $\ETHR$ gates is that they are closed under AND:

\begin{theorem}[Hansen and Podolskii~\cite{HP10}] \label{ANDETHR2ETHR} Every conjunction of $t$ exact threshold functions in $n$ variables with integer weights in $[-W,W]$ can be converted in $\poly(t,n)$ time to an equivalent \emph{single} exact threshold gate, with weights in $[-(nW)^{\Theta(t)},(nW)^{\Theta(t)}]$.
\end{theorem}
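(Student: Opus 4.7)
The plan is to reduce the conjunction of $t$ equalities to a single equality by packing all the linear forms into one ``base-$B$'' combination, for a sufficiently large integer $B$ depending on $n, W, t$.

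First I would rewrite each $\ETHR$ gate $f_i$ as an equation: there exist integer weights $\alpha_{i,1},\ldots,\alpha_{i,n} \in [-W,W]$ and integer threshold $t_i$ (WLOG we can make everything integer since the original weights are) such that $f_i(x) = 1$ iff $g_i(x) := \sum_{j=1}^{n}\alpha_{i,j}x_j - t_i$ equals $0$. Since $x \in \{0,1\}^n$, the integer value $g_i(x)$ lies in $[-N,N]$ for $N := (n+1)W$ (absorbing $|t_i|$ into the bound; WLOG $|t_i| \leq nW$, else $f_i \equiv 0$ and the conjunction is trivial).

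Next I would pick a base $B := 2N+1$ and define the single linear form
\[
G(x) \;=\; \sum_{i=1}^{t} B^{i-1}\,g_i(x) \;=\; \sum_{j=1}^{n}\Bigl(\sum_{i=1}^{t} B^{i-1}\alpha_{i,j}\Bigr) x_j \;-\; \sum_{i=1}^{t} B^{i-1} t_i,
\]
and output the exact threshold gate ``$\sum_j \beta_j x_j = T$'' with $\beta_j := \sum_i B^{i-1}\alpha_{i,j}$ and $T := \sum_i B^{i-1}t_i$. Clearly if all $g_i(x) = 0$, then $G(x) = 0$. Conversely, suppose $G(x) = 0$ but some $g_i(x) \ne 0$, and let $i^\star$ be the largest such index. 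Then $|B^{i^\star-1} g_{i^\star}(x)| \ge B^{i^\star-1}$, while the remaining contribution is bounded by
\[
\Bigl|\sum_{i < i^\star} B^{i-1} g_i(x)\Bigr| \;\le\; N\cdot \frac{B^{i^\star-1}-1}{B-1} \;<\; B^{i^\star-1},
\]
since $B-1 = 2N \ge N$. This contradicts $G(x) = 0$, so $G(x) = 0$ iff every $g_i(x) = 0$, iff the conjunction fires.

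Finally I would verify the weight bounds and running time. Each new coefficient satisfies $|\beta_j| \le W \sum_{i=1}^{t} B^{i-1} \le W \cdot t \cdot B^{t-1} = (nW)^{\Theta(t)}$, and similarly $|T| \le (nW)^{\Theta(t)}$. Producing the $\beta_j$ and $T$ requires computing $O(n)$ sums of $t$ terms of magnitude at most $(nW)^{O(t)}$, which is $\poly(t,n,\log W) = \poly(t,n)$ bit operations (treating weights as having $\poly(n)$-bit representations as assumed in the preliminaries). The main technical delicacy is simply the base-$B$ no-cancellation argument; once $B$ is chosen larger than twice the maximum possible magnitude of any single $g_i$, the unique-representation argument goes through cleanly, and everything else is bookkeeping.
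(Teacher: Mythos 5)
Your proof is correct and follows exactly the approach the paper sketches just after the theorem statement (``if we multiply the $i$th exact threshold gate's linear form by the factor $(nW)^i$, no linear form will interfere with the other sums''): you linearly combine the $t$ centered forms $g_i(x)$ using powers of a base $B$ larger than twice the maximum magnitude of any single form, and then argue no cancellation can occur. Your treatment is slightly more careful with constants (choosing $B = 2(n+1)W+1$ so that the leading nonzero term strictly dominates the tail of the geometric series), and the weight and running-time bounds are handled correctly, so this matches the paper's intended argument.
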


The idea is simple: if we multiply the $i$th exact threshold gate's linear form by the factor $(nW)^i$, no linear form will ``interfere'' with the other sums, and we can determine if all of them are satisfied simultaneously with one exact threshold. 

\paragraph{Useful Results for Finite Field Polynomials.} Two tools from the literature will be helpful for our results on linear combinations of polynomials. The first is \emph{modulus-amplifying polynomials}, which have been used in Toda's Theorem~\cite{Toda91}, representations of $\ACC$ and $\ACC$-SAT algorithms~\cite{Beigel-Tarui,WilliamsJACM14}, algorithms for All-Pairs Shortest Paths~\cite{ChanW16}, and algorithms for solving polynomial systems~\cite{LokshtanovPTWY17}:

\begin{lemma}[Beigel and Tarui~\cite{Beigel-Tarui}]\label{mod-amplifying}
For all $\ell \in \Z^+$, the degree-$(2\ell-1)$ polynomial (over $\Z$)
\[P_{\ell}(y)=1-(1-y)^{\ell}\sum_{j=0}^{\ell-1}\binom{\ell+j-1}{j}y^j
\] has the property for all integers $m \geq 2$, 
\begin{compactitem}
\item if $y=0 \bmod m$ then $P_{\ell}(y)=0 \bmod m^{\ell}$, 
\item if $y=1 \bmod m$ then $P_{\ell}(y)=1 \bmod m^{\ell}$. 
\end{compactitem}
Furthermore, each coefficient in $F_{\ell}$ has magnitude at most $2^{O(\ell)}$.
\end{lemma}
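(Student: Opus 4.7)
The plan is to recognize the inner sum as a truncated formal power series. Specifically, the generalized binomial theorem gives
$$(1-y)^{-\ell} = \sum_{j \geq 0} \binom{\ell+j-1}{j} y^j$$
as a formal identity in $\Z[[y]]$. So the sum $S(y) := \sum_{j=0}^{\ell-1} \binom{\ell+j-1}{j} y^j$ is simply the degree-$(\ell-1)$ truncation of $(1-y)^{-\ell}$. First I would verify this expansion (or quote it) and then write $(1-y)^{-\ell} = S(y) + y^{\ell} T(y)$, where $T(y) = \sum_{j \geq 0} \binom{2\ell+j-1}{\ell+j} y^j \in \Z[[y]]$ has non-negative integer coefficients.

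Next, I would multiply both sides of the previous display by $(1-y)^{\ell}$ to obtain the clean polynomial identity
$$1 = (1-y)^{\ell} S(y) + y^{\ell}(1-y)^{\ell} T(y),$$
which rearranges to $P_{\ell}(y) = 1 - (1-y)^{\ell} S(y) = y^{\ell}(1-y)^{\ell} T(y)$. (A small sanity check: the left factor is a polynomial of degree $2\ell-1$, so $T$ effectively appears only through its coefficients of degree $\leq -1 + $ something small; alternatively one can work with $T$ as a formal power series throughout and note that after multiplying by $(1-y)^{\ell}$ everything collapses to a polynomial.) This factorization is the heart of the lemma: it expresses $P_{\ell}$ in a form where both the factor $y^{\ell}$ and the factor $(1-y)^{\ell}$ are manifest.

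With this identity, both modular claims become one-liners. If $y \equiv 0 \pmod m$, then $y^{\ell} \equiv 0 \pmod{m^{\ell}}$, and since $P_{\ell}(y) = y^{\ell}(1-y)^{\ell} T(y)$ is an integer multiple of $y^{\ell}$, we conclude $P_{\ell}(y) \equiv 0 \pmod{m^{\ell}}$. If instead $y \equiv 1 \pmod m$, then $(1-y)^{\ell} \equiv 0 \pmod{m^{\ell}}$, and using the original form $P_{\ell}(y) = 1 - (1-y)^{\ell} S(y)$ we get $P_{\ell}(y) \equiv 1 \pmod{m^{\ell}}$.

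The only remaining step is the coefficient bound. Since $S(y)$ has degree $\ell-1$ with coefficients $\binom{\ell+j-1}{j} \leq \binom{2\ell-2}{\ell-1} = 2^{O(\ell)}$, and $(1-y)^{\ell}$ has coefficients of magnitude at most $\binom{\ell}{\lfloor \ell/2 \rfloor} = 2^{O(\ell)}$, each coefficient of the product $(1-y)^{\ell} S(y)$ is a sum of at most $2\ell$ products of such terms and hence has magnitude $2^{O(\ell)}$. The same bound then holds for $P_{\ell}(y) = 1 - (1-y)^{\ell} S(y)$. I expect no real obstacle here; the only mildly delicate point is spotting the power-series identity, after which everything is forced.
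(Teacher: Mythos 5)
The paper states this as a known result cited to Beigel and Tarui and gives no proof of its own, so there is nothing to compare against there; your argument is the standard derivation and is correct. One cosmetic point worth making cleaner: when you write $P_\ell(y)=y^\ell(1-y)^\ell T(y)$ with $T$ a formal power series, the observation you actually need for the $y\equiv 0$ case is that $P_\ell$ is divisible by $y^\ell$ \emph{in} $\Z[y]$, i.e.\ $P_\ell(y)=y^\ell R(y)$ for an integer polynomial $R$ of degree $\ell-1$; this follows because $(1-y)^\ell S(y)\equiv (1-y)^\ell(1-y)^{-\ell}=1 \pmod{y^\ell}$ in $\Z[[y]]$, so $P_\ell=1-(1-y)^\ell S(y)$ has vanishing coefficients in degrees $0,\dots,\ell-1$, and then the rest of your argument goes through verbatim.
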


Recall that a multivariate polynomial is \emph{multilinear} if it contains no powers larger than one. The second tool is a classic result on rapidly evaluating a multilinear polynomial on all points in the Boolean hypercube. 

\begin{theorem}[cf.~\cite{BjorklundHK09}, Section 2.2] \label{poly-eval} Given the $2^n$-coefficient vector of a multilinear polynomial $p \in \Z[x_1,\ldots,x_n]$ where each coefficient is in $[-W,W]$, the value of $p$ on all points in $\{0,1\}^n$ can be computed in $2^n \cdot \poly(n,\log W)$ time.
\end{theorem}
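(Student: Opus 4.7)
The plan is to recognize this as a zeta transform over the Boolean lattice $2^{[n]}$ and apply Yates' dynamic programming algorithm. Since $p$ is multilinear, write $p(x_1,\ldots,x_n) = \sum_{S \subseteq [n]} c_S \prod_{i \in S} x_i$ with $c_S \in [-W,W]$. For the Boolean point $\mathbf{1}_T$ (the indicator of $T \subseteq [n]$), monomials with $S \not\subseteq T$ vanish, so $p(\mathbf{1}_T) = \sum_{S \subseteq T} c_S$. The task thus reduces to computing, for every $T \subseteq [n]$, the down-closed partial sum $\sum_{S \subseteq T} c_S$.

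First, I would set up an array $f[\cdot]$ indexed by subsets of $[n]$, initialized with $f[S] := c_S$. Then, for each $i = 1,\ldots,n$, I would perform the in-place sweep: for every $S \subseteq [n]$ with $i \in S$, update $f[S] \leftarrow f[S] + f[S \setminus \{i\}]$ (using any enumeration of subsets that reads $f[S \setminus \{i\}]$ before overwriting $f[S]$, e.g., iterating over the $2^{n-1}$ subsets of $[n] \setminus \{i\}$ and processing $R$ and $R \cup \{i\}$ together). A straightforward induction on $i$ establishes the loop invariant that after round $i$, $f[S] = \sum c_T$ summed over $T \subseteq S$ with $T \cap \{i+1,\ldots,n\} = S \cap \{i+1,\ldots,n\}$. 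After round $n$, this gives $f[S] = \sum_{T \subseteq S} c_T = p(\mathbf{1}_S)$, as required.

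Next, I would bound the running time by controlling intermediate bit-lengths. At every stage, each $f[S]$ is a sum of at most $2^n$ coefficients from $[-W,W]$, so $|f[S]| \leq 2^n W$, representable in $O(n + \log W)$ bits. Each of the $n$ rounds performs $O(2^n)$ integer additions on such numbers, and each addition costs $\poly(n,\log W)$ time. Aggregating gives total cost $n \cdot 2^n \cdot \poly(n,\log W) = 2^n \cdot \poly(n,\log W)$, matching the claimed bound.

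There is essentially no serious obstacle here: the algorithm is a folklore instance of the subset-sum transform (indeed the excerpt points to Björklund--Husfeldt--Koivisto). The only places that merit explicit attention are (i) fixing a sweep order that makes the in-place update correct, and (ii) explicitly tracking the bit-growth, since the statement charges $\poly(n,\log W)$ per arithmetic operation rather than assuming unit-cost arithmetic.
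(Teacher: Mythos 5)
Your proof is correct and is precisely the dynamic-programming (zeta transform / Yates) route that the paper itself points to via the citation to Bj\"orklund--Husfeldt--Koivisto, Section 2.2; the paper gives no proof of its own but notes the result follows either by this method or by divide-and-conquer, so your approach matches one of the paper's two suggested derivations, with the bit-length accounting handled appropriately.
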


The algorithm of Theorem~\ref{poly-eval} can be obtained by divide-and-conquer (as described in~\cite{WilliamsSIGACT11}) or by dynamic programming (as in~\cite{BjorklundHK09}, Section 2.2).

\paragraph{Connections Between Nondeterministic Circuit UNSAT Algorithms and Circuit Lower Bounds.} We also appeal to several known connections between circuit UNSAT algorithms that beat exhaustive search and circuit lower bounds against nondeterministic time classes, which build on prior work~\cite{Williams10,JMV13,SanthanamWilliams13,Ben-Sasson-Viola14}. 

\begin{theorem}[\cite{Murray-Williams17}]\label{NP-from-UNSAT} If there is an $\eps > 0$ such that Circuit Unsatisfiability for (fan-in 2) circuits with $n$ inputs and $2^{\eps n}$ size is solvable in $O(2^{n-\eps n})$ nondeterministic time, then for every $k$ there is a function in $\NP$ that does not have $n^k$-size (fan-in 2) circuits. 
\end{theorem}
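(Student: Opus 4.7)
The approach is the contrapositive: assume there is a fixed $k$ with $\NP \subseteq \SIZE[n^k]$, and derive a contradiction with the nondeterministic time hierarchy theorem using the assumed CircuitSAT algorithm. Following the Williams algorithms-to-lower-bounds paradigm, the plan is to use the small-circuit assumption to succinctly represent witnesses of $\NTIME$ computations, then to use the fast CircuitSAT algorithm to search for those witnesses more quickly than the hierarchy permits.

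The first key step is an easy witness lemma tailored to NP: under the hypothesis $\NP \subseteq \SIZE[n^k]$, every language $L \in \NTIME[n^c]$ has the property that for every yes-instance $x$, some valid witness $w$ is the truth table of a Boolean circuit $D$ of size polynomial in $|x|$. The proof is a Karp--Lipton style bit-by-bit self-reduction: the predicate ``does there exist a size-$s$ circuit $D$ whose truth table extends a given prefix and is a valid witness for $x$'' is itself in $\NP$, so by hypothesis it has a small circuit and is decidable in polynomial time; one then reads off the gates of $D$ bit-by-bit in polynomial time.

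Given easy witnesses, pick a language $L^{*}$ at the top of an appropriate nondeterministic time hierarchy. To decide $L^{*}$ on input $x$, nondeterministically guess a candidate witness-generating circuit $D$ and build a verification circuit $E_x$ whose input bits are the bits of $D$, which expands $D$ to its truth table and simulates the $L^{*}$-verifier on $(x,\text{tt}(D))$. Feeding $E_x$ into the assumed CircuitSAT algorithm (equivalently, running the Circuit UNSAT algorithm on $\neg E_x$ inside a nondeterministic wrapper) yields a nondeterministic decision procedure for $L^{*}$; with appropriate choice of parameters, its running time is strictly smaller than the hierarchy lower bound for $L^{*}$, yielding the contradiction.

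The hard part is the parameter balancing. To obtain fixed-polynomial lower bounds for $\NP$ (as opposed to the easier $\NEXP$-level lower bounds of Impagliazzo--Kabanets--Wigderson plus Williams), the witness-generating circuit $D$ must have size a fixed polynomial in the input length, with exponent independent of the target $k$ --- a polynomial-level easy witness lemma. Achieving this is exactly the Murray--Williams technical innovation, and it drives the quantitative form of the hypothesis (CircuitSAT for $2^{\eps n}$-size circuits on $n$ inputs in nondeterministic time $2^{n-\eps n}$) needed to close the argument at the $\NP$ scale.
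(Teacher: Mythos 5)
The paper does not prove Theorem~\ref{NP-from-UNSAT}; it imports it directly from Murray and Williams~\cite{Murray-Williams17}. So the comparison is really against that source. Your high-level architecture matches theirs: contrapositive; an NP-scale easy witness lemma; use a fast Circuit UNSAT algorithm plus a local (Cook--Levin style) reduction so that checking a guessed witness circuit does not require expanding its truth table; contradict the nondeterministic time hierarchy. You also correctly identify the crux --- the fixed-polynomial-scale easy witness lemma is the key technical novelty.

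However, your proposed proof of that easy witness lemma is not correct. You suggest a Karp--Lipton bit-by-bit self-reduction: query the NP predicate ``is there a size-$s$ circuit $D$ extending this prefix whose truth table is a valid witness for $x$?'' and read off $D$'s description bit by bit. This argument is circular. The search succeeds only if some size-$s$ circuit $D$ with a valid-witness truth table \emph{exists}; establishing that existence is precisely the content of the easy witness lemma. If no such $D$ exists, the predicate is identically false and the search returns nothing. Said differently, the self-reduction gives a way to \emph{find} a small witness circuit in $\P/\poly$ when one is promised to exist --- a search-to-decision reduction --- but it carries no existential content. A symptom of the same problem: if instead you tried to search over the $n^{c}$ witness \emph{bits} rather than over $D$'s description, you would build a circuit of size proportional to the witness length $n^{c}$, which is trivial (every string of length $m$ has a circuit of size $O(m/\log m)$) and useless for the time hierarchy contradiction, which needs the witness circuit size to be a fixed polynomial independent of $c$.

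The actual Murray--Williams proof of the NP easy witness lemma is an entirely different kind of argument. It is in the lineage of Impagliazzo--Kabanets--Wigderson: assume towards contradiction that some verifier has infinitely many yes-instances whose \emph{every} witness has high circuit complexity, use those hard truth tables as the hard function in a hardness-vs-randomness construction to derandomize $\MATIME$/$\prMA$ at the appropriate scale, combine with Karp--Lipton-style collapses and a careful ``win-win'' over parameter regimes to derive a contradiction with the hierarchy. Getting this to work at fixed-polynomial $\NP$ scale (rather than $\NEXP$) requires new quantitative ideas --- in particular controlling the blow-up in the PRG/collapse steps so that the exponents do not accumulate --- and that is the part your sketch elides entirely. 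So: right skeleton and right identification of the hard step, but the proof offered for the hard step does not work.
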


\begin{theorem}[Corollary 12 in Tell~\cite{Tell18}, following \cite{Murray-Williams17}]\label{NP-from-UNSAT2} If there is a $\delta > 0$ and $c \geq 1$ such that Circuit Unsatisfiability for (fan-in 2) circuits with $n$ variables and $m$ gates is solvable in $O(2^{n(1-\delta)}\cdot m^c)$ nondeterministic time, then for every unbounded $\alpha(n)$ such that $n^{\alpha(n)}$ is time-constructible, there is a function in $\NTIME[n^{\alpha(n)}]$ that is not in $\P/\poly$.
\end{theorem}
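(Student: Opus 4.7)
The overall strategy is the standard Williams-style framework refined by Murray and Williams: we argue by contradiction. Assume that for some unbounded, time-constructible $\alpha$, $\NTIME[n^{\alpha(n)}] \subseteq \P/\poly$, and combine this assumption with the hypothesized nondeterministic UNSAT algorithm to simulate a hierarchy-hard $\NTIME[n^{\alpha(n)}]$ language in strictly less nondeterministic time, contradicting the nondeterministic time hierarchy.

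The first step is to invoke the Murray--Williams easy witness lemma in a quantitative form: if $\NTIME[n^{\alpha(n)}] \subseteq \SIZE[n^k]$, then for every nondeterministic machine $M$ with time bound $n^{\alpha(n)}$ and every accepting input $x$ of length $n$, there is an accepting witness $w \in \{0,1\}^{n^{\alpha(n)}}$ and a circuit $W$ of size $n^{O(k)}$ with $w_i = W(i)$ for all $i$. The crucial feature is that the bound on $|W|$ depends only on the assumed circuit-size exponent $k$, not on $\alpha(n)$, so a witness of length $n^{\alpha(n)}$ is compressed into a circuit of size polynomial in $n$.

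Next, fix a particular $L^\ast \in \NTIME[n^{\alpha(n)}]$ that, by the nondeterministic time hierarchy, is not decidable in $\NTIME[n^{\alpha(n)}/\log^2 n]$; let $M^\ast$ be its machine. I would design the following faster nondeterministic algorithm for $L^\ast$ on input $x$: (i) guess a circuit $W$ of size $n^{O(k)}$; (ii) build a Cook--Levin-style verification circuit $V$ whose input indexes a ``cell'' in the computation trace of $M^\ast(x)$, with all witness queries answered by evaluating $W$, such that $V$ is unsatisfiable iff $W$ encodes a valid accepting witness; (iii) invoke the hypothesized nondet UNSAT algorithm on $V$ and accept iff it reports unsatisfiable. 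The key engineering requirement is that $V$ has $v = (1+o(1))\log n^{\alpha(n)}$ input bits and size $m = \poly(n)$. Under these bounds, the UNSAT call runs in $O(2^{v(1-\delta)} m^c) = O(n^{\alpha(n)(1-\delta)} \cdot n^{O(kc)})$ nondeterministic time; since $\alpha(n)$ is unbounded, for all sufficiently large $n$ we have $\alpha(n)\delta \gg O(kc)$, so the total running time is $o(n^{\alpha(n)}/\log^2 n)$, placing $L^\ast$ in a strictly smaller nondeterministic time class and yielding the desired contradiction.

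The main obstacle is engineering $V$ so that its input length is $(1+o(1))\log n^{\alpha(n)}$ rather than the naive $2\log n^{\alpha(n)}$ and its size is $\poly(n)$ rather than $\poly(n^{\alpha(n)})$. A straightforward two-dimensional Cook--Levin tableau doubles the input length, and that doubling defeats $n^{\alpha(n)}$ via the UNSAT algorithm only when $\delta > 1/2$, whereas the theorem covers every $\delta > 0$. Getting a genuinely one-dimensional verification---for example by first reducing $M^\ast$ to an oblivious Turing machine (so the head position is a precomputable function of the time step) and then threading in pointer-based consistency checks so that each local check reads only $O(1)$ bits of the guessed circuit $W$---is the heart of the argument. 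Once the $v = (1+o(1))\log n^{\alpha(n)}$, $m = \poly(n)$ verification is in hand, the rest follows by routine parameter manipulation using the fact that $\alpha(n) \to \infty$ absorbs all the fixed-constant overheads from the easy-witness encoding and the polynomial overhead $m^c$ of the UNSAT algorithm.
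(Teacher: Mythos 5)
The paper does not give its own proof of this theorem---it is imported as a citation (Tell's Corollary 12, building on Murray--Williams)---so the comparison is against the known argument rather than against text in this paper. Your reconstruction captures all of the essential ingredients: assume for contradiction that $\NTIME[n^{\alpha(n)}] \subseteq \P/\poly$; apply the Murray--Williams \emph{improved} easy witness lemma, whose crucial quantitative feature (which you correctly emphasize) is that witness circuits have size $n^{O(k)}$ with the exponent depending only on the assumed circuit-size exponent $k$ and not on $\alpha$; compose the guessed witness circuit with a succinct verification circuit; invoke the hypothesized nondeterministic UNSAT algorithm; and appeal to the nondeterministic time hierarchy. You also correctly identify the crux of the parameter bookkeeping: the verification circuit must have $(1+o(1))\log n^{\alpha(n)}$ input bits rather than the $2\log n^{\alpha(n)}$ a naive tableau would give. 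This is exactly what the ultra-efficient reduction to succinct 3SAT of Jahanjou--Miles--Viola provides (the very Lemma~\ref{Succinct-3SAT} the paper uses elsewhere), and your sketch via oblivious TMs and pointer-based local consistency is the right picture of how that reduction works.

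Two small points worth tightening, neither fatal. First, you pass from $\NTIME[n^{\alpha(n)}]\subseteq \P/\poly$ directly to $\NTIME[n^{\alpha(n)}]\subseteq\SIZE[n^k]$ for a fixed $k$. This step requires a standard but non-trivial remark: one extracts a single $k$ by applying the $\P/\poly$ assumption to a specific complete (or otherwise canonical) language arising in the easy-witness argument, rather than to the whole class at once. Second, when invoking the time hierarchy you should confirm the lower time bound $n^{\alpha(n)}/\log^2 n$ (or whatever margin you use) is time-constructible, which is why the theorem explicitly assumes $n^{\alpha(n)}$ is; this is routine but deserves a sentence. With those caveats, your proposal is a faithful reconstruction of the Tell/Murray--Williams argument.
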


\begin{theorem}[\cite{Murray-Williams17}]\label{NQP-from-UNSAT} If there is an $\eps > 0$ such that Circuit Unsatisfiability for (fan-in 2) circuits with $n$ inputs and $2^{n^{\eps}}$ size is solvable in $O(2^{n-n^{\eps}})$ nondeterministic time, then for every $k$ there is a function in $\NTIME[n^{\poly(\log n)}]$ that does not have $n^{\log^k n}$-size (fan-in 2) circuits. 
\end{theorem}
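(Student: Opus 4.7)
\emph{Proof proposal for Theorem~\ref{NQP-from-UNSAT}.}

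The plan is to follow the ``algorithmic approach'' to circuit lower bounds: assume the circuit upper bound and derive a contradiction with the nondeterministic time hierarchy by using the hypothesized UNSAT algorithm as the final ingredient. Specifically, I would assume for contradiction that for every unbounded $\alpha$ there is a fixed $k$ with $\NTIME[n^{\alpha(n)}] \subseteq \SIZE[n^{\log^k n}]$; it suffices to derive a contradiction from $\NTIME[n^{\poly(\log n)}] \subseteq \SIZE[n^{\log^k n}]$ for a single $k$. The central enabling tool is the easy witness lemma of Murray and the author: under the above containment, every $L \in \NTIME[T(n)]$ with $T(n) = n^{\poly(\log n)}$ has satisfying witnesses (for inputs $x \in L$) whose truth tables are described by circuits of size $n^{O(\log^k n)}$ with $\lceil \log T(n)\rceil$ input bits.

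With this in hand, I would run the standard ``speedup'' on a specific hard language. Set $m = \log^{b+1} n$ for a parameter $b$ to be chosen, and let $L$ be the natural Succinct-3SAT problem at scale $m$: given a circuit $D$ of $\poly(m)$ size on $m$ input bits encoding a 3-CNF $\phi_D$ of $2^m$ clauses, decide whether $\phi_D$ is satisfiable. Then $L \in \NTIME[2^m \cdot \poly(m)] \subseteq \NTIME[n^{\log^b n}]$ and is hard for this class under polynomial-time reductions (Cook-style). By the easy witness lemma applied to $L$, whenever $\phi_D$ is satisfiable there is a satisfying assignment $y$ encoded by a circuit $C$ of size $n^{\log^k n}$ on $m$ input bits. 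The faster nondeterministic procedure for $L$ guesses such a $C$, then constructs the check circuit $E$ on $m$ inputs that, on input $i \in \{0,1\}^m$, uses $D$ to read off the three literals of clause $i$, uses $C$ to read off their assignment values, and outputs $1$ iff clause $i$ is satisfied. Then $C$ encodes a satisfying assignment iff $E \equiv 1$, which is equivalent to $\neg E$ being unsatisfiable; the nondeterministic UNSAT algorithm hypothesized by the theorem is then invoked on $\neg E$.

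For the UNSAT algorithm to apply, the size of $E$ must be at most $2^{m^\eps}$. We have $|E| = \poly(|C|,|D|) = 2^{O(\log^{k+1} n)}$, while $m^\eps = \log^{(b+1)\eps} n$, so we need $(b+1)\eps \geq k+1$; I would choose $b = \lceil (k+1)/\eps \rceil$. The UNSAT algorithm then runs in $O(2^{m - m^\eps}) = O\!\left(2^{\log^{b+1} n - \log^{k+1} n}\right)$ nondeterministic time, and the whole procedure (including the $n^{O(\log^k n)}$-bit guess of $C$) runs in $\NTIME[2^m / g(n)]$ for some super-polynomial $g$. Since $L$ is $\NTIME[2^m]$-hard, this contradicts the nondeterministic time hierarchy theorem, yielding the stated lower bound by an infinitary ``for every $k$'' argument.

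The main obstacle is the joint parameter balance: $b$ must be taken large enough relative to $k$ and $\eps$ that the size-$2^{O(\log^{k+1}n)}$ check circuit $E$ fits inside the UNSAT algorithm's $2^{m^\eps}$-size regime, yet the saved factor $2^{m^\eps} = 2^{\log^{k+1}n}$ must remain super-polynomial in $n$ so that the nondeterministic hierarchy actually supplies a contradiction. A secondary subtlety is invoking the easy witness lemma at the right scale --- one must apply it to Succinct-3SAT at scale $m$ (so that witnesses are $2^m$-bit assignments described by $m$-input circuits), rather than at the outer input length $n$, in order for the number of inputs to $E$ to equal the ``$n$'' in the UNSAT hypothesis.
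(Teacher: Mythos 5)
The paper does not prove Theorem~\ref{NQP-from-UNSAT}; it cites it directly from Murray--Williams~\cite{Murray-Williams17}, so there is no ``paper's own proof'' to compare against. Your outline matches the architecture of the actual Murray--Williams argument: negate the conclusion, invoke their easy witness lemma to get polylogarithmic-input witness circuits, form a consistency-check circuit, feed it to the hypothesized nondeterministic UNSAT algorithm, and contradict the (strong) nondeterministic time hierarchy. Your parameter balance ($b = \lceil (k+1)/\eps\rceil$, $m = \log^{b+1}n$, so $|E| = 2^{O(\log^{k+1}n)} \leq 2^{m^{\eps}}$ and $2^{m^{\eps}}$ is super-polynomial in $n$) is essentially correct and is the key accounting step.

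The one place your write-up is genuinely muddled is the declaration of the hard language $L$. You define ``$L$ as the Succinct-3SAT problem at scale $m$'' with inputs that are circuits of size $\poly(m)$ on $m = \log^{b+1}n$ inputs; but then the input length of $L$ would be $\poly(\log n)$, not $n$, so ``$L \in \NTIME[n^{\log^b n}]$'' and ``witness circuits of size $n^{\log^k n}$'' are using $n$ for a quantity that nowhere corresponds to the input length of $L$. The clean version is to fix a hard $L \in \NTIME[n^{\log^b n}]$ (existing by the nondeterministic time hierarchy theorem, diagonalizing against $\NTIME[n^{\log^b n}/n]$), then, for each input $x$ of length $n$, use the efficient (JMV-style, as in Lemma~\ref{Succinct-3SAT}) reduction to produce a $\poly(n)$-size circuit $D_x$ on $m = O(\log^{b+1}n)$ inputs encoding a 3-CNF $F_x$ with $x\in L \iff F_x$ satisfiable. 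The easy witness lemma is applied to $L$ itself at input length $n$ (not to Succinct-3SAT as a standalone language): it gives a circuit $C$ on $m$ inputs of size $n^{O(\log^k n)}$ encoding a satisfying assignment of $F_x$ whenever one exists. Then $|E| = \poly(|C|,|D_x|) = 2^{O(\log^{k+1}n)}$ as you say, and the rest of your calculation goes through. You should also flag explicitly that the easy witness lemma at the NQP scale is itself a substantial theorem (the whole content of Murray--Williams) and that you are invoking it as a black box rather than reproving it; as written, that point is made only glancingly.
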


In fact, all of these algorithms-to-lower-bounds connections still hold when we replace Circuit Unsatisfiability with the promise problem of distinguishing unsatisfiable circuits from circuits with $2^{n-1}$ satisfying assignments. 

\paragraph{The Power of Linear Combinations of Low-Degree Polynomials.} We note that classical work suggests that $\R$-linear combinations of higher-degree $\F_2$-polynomials can be quite powerful. For example, applying Valiant's depth reduction~\cite{Valiant77} and using the representation of the AND function in the Fourier basis, it is easy to show that every $O(n)$-size $O(\log n)$-depth circuit can be represented by a linear combination of $2^{O(n/\log \log n)}$ $\F_2$-polynomials of degree $O(n^{\eps})$, for any desired $\eps > 0$. Moreover, one can represent any $O(n)$-size ``Valiant series-parallel'' circuit (see \cite{Calabro08}) by a linear combination of $2^{\eps n}$ $\F_2$-polynomials of degree $2^{2^{O(1/\eps)}}$. Hence there is a natural barrier to proving exponential-sparsity lower bounds for linear combinations of ``somewhat-low'' degree polynomials.

\section{Meta-Theorem for Lower Bounds on Linear Combinations of Simple Functions}\label{section-meta-theorem}

In this section, we prove our generic theorem which is applied in subsequent sections to prove lower bounds against linear combinations of threshold functions, ReLU gates, and constant-degree polynomials. Recall (from the Introduction) the Sum-Product problem:

{\narrower

{\bf Sum-Product over ${\cal C}$:} Given $k$ functions $f_1,\ldots,f_k$ from ${\cal C}$, each on Boolean variables $x_1,\ldots,x_n$, compute \[\sum_{x \in \{0,1\}^n} \prod_{i=1}^k f_i(x).\]

}

\begin{reminder}{Theorem~\ref{generic-LBs}}
Suppose every $C \in {\cal C}$ has a $\poly(n)$-bit representation, where each $C$ can be evaluated on a given input in $\poly(n)$ time. Assume there is an $\eps > 0$ and for $k=1,\ldots,4$ there is an $n^{O(1)}\cdot 2^{n-\eps n}$-time algorithm for computing the Sum-Product of $k$ functions $f_1(x_1,\ldots,x_n),\ldots,f_k(x_1,\ldots,x_n)$ from ${\cal C}$. Then:
\begin{compactenum}
\item For every $k$, there is a function in $\NP$ that does not have $\LIN \circ {\cal C}$ circuits of sparsity $n^k$. 
\item For every unbounded $\alpha(n)$ such that $n^{\alpha(n)}$ is time constructible, there is a function in $\NTIME[n^{\alpha(n)}]$ that does not have $\LIN \circ {\cal C}$ circuits of polynomial sparsity. 
\end{compactenum}
\end{reminder}

The remainder of this section is devoted to proving Theorem~\ref{generic-LBs}, and an extension to $\E^{\NP}$ in some cases. We are able to use much of the earlier arguments~\cite{Williams10,WilliamsJACM14,Murray-Williams17} as black boxes. However we need several modifications. 

The first new component needed is a method for checking that a given linear combination of ${\cal C}$ circuits actually encodes a Boolean function (i.e. is Boolean-valued on all Boolean inputs). This is provided by the following theorem:

\begin{theorem} \label{check-boolean} Assume there is an $\eps > 0$ and for $k=1,\ldots,4$ there is an $n^{O(1)}\cdot 2^{n-\eps n}$-time algorithm for computing the Sum-Product of $k$ functions $f_1(x_1,\ldots,x_n),\ldots,f_k(x_1,\ldots,x_n)$ from ${\cal C}$.\\ Then there is an $2^{n-\eps n} \cdot \poly(n,s)$-time algorithm that, given $f(x_1,\ldots,x_n)$ which is an arbitrary linear combination of $s$ functions from ${\cal C}$, determines whether or not $f(a) \in \{0,1\}$ for all $a \in \{0,1\}^n$. 
\end{theorem}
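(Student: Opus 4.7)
The plan is to reduce the question ``is the given $f$ Boolean-valued on $\{0,1\}^n$?'' to a single zero-test on a real quantity that can be computed by $O(s^4)$ invocations of the Sum-Product algorithm. The key identity is that $f(a) \in \{0,1\}$ iff $f(a)^2 - f(a) = 0$, and since $(f(a)^2 - f(a))^2 \geq 0$ this holds for every $a \in \{0,1\}^n$ iff
\[ S(f) \;:=\; \sum_{a \in \{0,1\}^n} \bigl(f(a)^2 - f(a)\bigr)^2 \;=\; 0. \]
Expanding the square gives $S(f) = \sum_a f(a)^4 \;-\; 2\sum_a f(a)^3 \;+\; \sum_a f(a)^2$, a fixed $\R$-linear combination of three ``power sums'' of $f$ of degrees $2$, $3$, and $4$. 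The whole algorithm will compute each of these three power sums, add them with the appropriate coefficients, and accept iff the result is $0$.

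Writing $f = \sum_{i=1}^{s} w_i\, g_i$ with each $g_i \in {\cal C}$ and multiplying out, each power sum $\sum_a f(a)^k$ becomes a weighted combination of at most $s^k$ quantities of the form $\sum_{a \in \{0,1\}^n} g_{i_1}(a)\,g_{i_2}(a)\cdots g_{i_k}(a)$, each of which is an instance of Sum-Product over ${\cal C}$ on $k$ functions. By hypothesis, for $k = 2, 3, 4$ each such Sum-Product is solvable in $n^{O(1)} \cdot 2^{n-\eps n}$ time. Combining the $O(s^4)$ Sum-Product outputs with their corresponding weight products $w_{i_1}\cdots w_{i_k}$ yields $S(f)$ in total time $\poly(n,s) \cdot 2^{n-\eps n}$, and the algorithm accepts iff $S(f) = 0$.

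The main obstacle I expect is handling precision so that the final equality test $S(f) = 0$ is a genuine polynomial-time rational zero-test rather than an approximate real-number comparison. Invoking Proposition \ref{prop-weights} for ${\cal C}$ with $\{0,1\}$ co-domain, and Maass's bit-precision result for the ReLU setting, we may assume WLOG that every candidate Boolean-valued $\LIN \circ {\cal C}$ is presented with weights $w_i$ that are rationals with $\poly(n,s)$-bit numerator and denominator, and that each $g_i(a)$ evaluates to a $\poly(n)$-bit rational (or integer) in $\poly(n)$ time. Then every Sum-Product output is a $\poly(n,s)$-bit rational, $S(f)$ itself is a $\poly(n,s)$-bit rational, and testing whether it equals $0$ takes $\poly(n,s)$ time, preserving the overall $2^{n-\eps n} \cdot \poly(n,s)$ bound claimed in the theorem. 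A minor check is that the degree-$4$ identity $(f^2-f)^2$ is the smallest-degree non-negative polynomial in $f$ vanishing exactly on $\{0,1\}$, which is precisely why the hypothesis on Sum-Product is only needed up to $k = 4$.
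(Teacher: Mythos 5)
Your proposal is correct and essentially identical to the paper's proof: you use the same degree-4 polynomial $(f^2-f)^2 = f^2(1-f)^2 = f^2 - 2f^3 + f^4$, the same non-negativity and vanishing argument reducing the Boolean-valuedness check to the zero-test $\sum_a h(a) = 0$, and the same expansion into $O(s^4)$ Sum-Product calls with $k = 2, 3, 4$. The bit-precision remarks you add are reasonable housekeeping (though note that Proposition~\ref{prop-weights} concerns Boolean-valued circuits, whereas here the whole point is that $f$ might not be; what actually matters is just that the input weights and ${\cal C}$-representations are given with $\poly(n,s)$ bits, which the paper's setup already assumes) but do not change the argument.
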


\begin{proof} Suppose we are given $f = \sum_{i=1}^s \alpha_i c_i$, where $\alpha_i \in \R$ and $c_i \in {\cal C}$ each have $n$ inputs. Consider the polynomial \[h(x) := f(x)^2\cdot (1-f(x))^2 = f(x)^2 - 2f(x)^3 + f(x)^4.\] Observe that:
\begin{compactitem}
\item If $f(a) \in \{0,1\}$ for all $a \in \{0,1\}^n$, then $h(a) = 0$ for all $a$.
\item $f(b) \notin \{0,1\}$ implies $h(b) > 0$.  
\item For all $a \in \{0,1\}^n$, $h(a) \geq 0$.
\end{compactitem}
Therefore $\sum_{a \in \{0,1\}^n} h(a) = 0$ if and only if $f(a) \in \{0,1\}$ for all $a \in \{0,1\}^n$.
By applying the distributive law to each of $f(x)^2$, $f(x)^3$, $f(x)^4$, and exchanging the order of summation, we have
\begin{align*}
\sum_{a \in \{0,1\}^n} h(a) &= \sum_{i_1,i_2} \beta_{i_1,i_2} \left(\sum_{a \in \{0,1\}^n} f_{i_1}(x)\cdot f_{i_2}(x)\right)\\
& ~~ + \sum_{i_1,i_2,i_3} \gamma_{i_1,i_2,i_3} \left(\sum_{a \in \{0,1\}^n} f_{i_1}(x)\cdot f_{i_2}(x)\cdot f_{i_3}(x)\right)\\
& ~~ + \sum_{i_1,i_2,i_3,i_4} \delta_{i_1,i_2,i_3,i_4} \left(\sum_{a \in \{0,1\}^n} f_{i_1}(x)\cdot f_{i_2}(x)\cdot f_{i_3}(x)\cdot f_{i_4}(x)\right)
\end{align*}
for $\beta_{i_1,i_2} = \alpha_{i_1}\cdot \alpha_{i_2}$, $\gamma_{i_1,i_2,i_3} = -2\alpha_{i_1}\cdot \alpha_{i_2}\cdot \alpha_{i_3}$, 
$\delta_{i_1,i_2,i_3,i_4} = \alpha_{i_1}\cdot \alpha_{i_2}\cdot \alpha_{i_3}\cdot \alpha_{i_4}$.

Observe that each sum over $a \in \{0,1\}^n$ on the RHS is precisely a Sum-Product task over ${\cal C}$, with products ranging from $k=2$ to $k=4$. Therefore we can check that the sum $\sum_{a \in \{0,1\}^n} h(a)$ is zero with $O(s^4)$ calls to Sum-Product over ${\cal C}$. By assumption, this can be done in $O(2^{n-\eps n} \cdot \poly(n,s))$ time.
\end{proof}

The second crucial component yields the ability to solve Circuit Unsatisfiability efficiently with nondeterminism, under the hypotheses (in fact, weaker hypotheses). This is provided by the following lemma, which is similar to (but more complicated than) Lemma 3.1 in \cite{WilliamsJACM14}:

\begin{lemma}\label{nondet-UNSAT} Assume:
\begin{compactitem}
\item There is an $\eps > 0$ and for $k=1,\ldots,4$ there is an $n^{O(1)}\cdot 2^{n-\eps n}$-time algorithm for computing the Sum-Product of $k$ functions from ${\cal C}$.
\item The Circuit Evaluation problem has $\LIN \circ {\cal C}$ circuits of sparsity $n^k$, for some $k > 0$. 
\end{compactitem}
Then there is a nondeterministic $2^{n-\eps n}\cdot\poly(n,s)$-time algorithm for Circuit Unsatisfiability, on arbitrary fan-in-2 circuits with $n$ inputs and $s$ gates. 
\end{lemma}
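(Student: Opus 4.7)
The plan is to follow the ``guess-and-verify'' paradigm of Williams~\cite{WilliamsJACM14}, using the fast Sum-Product algorithm (and Theorem~\ref{check-boolean}, built from it) to certify a guessed \emph{gate-by-gate} simulation of the input circuit $C$. Given a fan-in-2 circuit $C$ on $n$ inputs with gates $g_1,\ldots,g_s$ in topological order, the nondeterministic algorithm first guesses $s$ circuits $D_1,\ldots,D_s \in \LIN \circ {\cal C}$ on $n$ inputs of sparsity $s' := \poly(s,n)$, where $D_i$ is intended to compute the value of $g_i$ as a function of the input. Such $D_i$ exist by hypothesis: the $\LIN \circ {\cal C}$ circuit of sparsity $N^k$ that computes Circuit Evaluation on inputs of length $N = |\langle C_{g_i}\rangle|+n$ can be specialized by hard-coding the bits of $\langle C_{g_i}\rangle$ (where $C_{g_i}$ is the sub-circuit of $C$ rooted at $g_i$), yielding a $\LIN \circ {\cal C}$ circuit for $g_i$ of sparsity $\poly(s,n)$. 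This step uses that ${\cal C}$ is closed under setting variables to constants, which is immediate for $\THR$, $\sf ReLU$, and $\MODp \circ \AND_d$.

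Next, for each guessed $D_i$, I would perform two verifications. First, apply Theorem~\ref{check-boolean} to $D_i$ to confirm it is Boolean-valued on $\{0,1\}^n$. Second, perform a local consistency check with gate $g_i$: if $g_i$ reads input variable $x_j$, verify $\sum_x (D_i(x)-x_j)^2 = 0$; if $g_i = \text{op}(g_a, g_b)$ for $\text{op} \in \{\AND,\OR\}$, verify $\sum_x (D_i(x)-F_{\text{op}}(D_a(x),D_b(x)))^2 = 0$ where $F_{\AND}(u,v)=uv$ and $F_{\OR}(u,v)=u+v-uv$; if $g_i = \neg g_a$, verify $\sum_x (D_i(x)-(1-D_a(x)))^2 = 0$. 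Each such verification is a polynomial identity of degree at most $4$ in $\LIN \circ {\cal C}$ primitives; expanding by linearity of the $\LIN$ layer, exactly as in the proof of Theorem~\ref{check-boolean}, decomposes each into an $\R$-combination of at most $O((s')^4)$ Sum-Products over ${\cal C}$ with $k \in \{1,\ldots,4\}$ factors. By hypothesis, each such Sum-Product is solvable in $n^{O(1)}\cdot 2^{n-\eps n}$ time. If every gate-check passes, a straightforward induction on the topological order forces $D_i(x)=g_i(x)$ for all $i$ and all $x$, and hence $D_{\text{out}}(x)=C(x)$. The algorithm then computes $\sum_x D_{\text{out}}(x) = \#\text{SAT}(C)$ via one further Sum-Product call with $k=1$ and outputs UNSAT iff the sum is zero. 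Summing the work over the $s$ gates, the total running time is $s \cdot (s')^4 \cdot n^{O(1)} \cdot 2^{n-\eps n} = 2^{n-\eps n}\cdot \poly(n,s)$, as required.

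The main obstacle I anticipate is that one cannot afford to certify $D_i \equiv g_i$ directly by enumerating the $2^n$ inputs within the stated budget. The gate-by-gate local-consistency trick bypasses this by replacing one global equivalence test by $s$ low-degree ($\le 4$) Sum-Product tests; correctness at the input gates forms the base case, and correctness at each internal gate is inherited from its children. A secondary concern is guaranteeing the existence of sparse $D_i$'s to guess: this rests on the lemma's hypothesis that CEval admits polynomial-sparsity $\LIN \circ {\cal C}$ representations and on the closure of ${\cal C}$ under restriction, so that pinning the description bits of the CEval circuit does not blow up its sparsity. Once these two ingredients are in place, all remaining steps are bookkeeping.
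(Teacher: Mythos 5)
Your proof is correct and follows essentially the same guess-and-verify strategy as the paper: nondeterministically guess a sparse $\LIN\circ{\cal C}$ encoding of the gates of $C$, use Theorem~\ref{check-boolean} to certify Boolean-valuedness, and reduce gate-consistency and unsatisfiability to a polynomial number of Sum-Product calls with $k\leq 4$. The only cosmetic differences are that the paper guesses a single circuit $D(x,i)$ on $n+O(\log s)$ inputs (with the gate index as an input) rather than $s$ separate $n$-input circuits $D_1,\ldots,D_s$, and it encodes each gate check with a degree-3 polynomial $p_i(A,B,C)$ that is $\{0,1\}$-valued on Boolean inputs rather than the squared difference $(D_i-F_{\rm op}(D_a,D_b))^2$; both variants stay within the $k\leq 4$ budget and neither affects correctness or the claimed running time.
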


\begin{proof} Suppose we are given a circuit $C$ with $n$ inputs and $s$ gates of fan-in 2, and wish to nondeterministically prove it is unsatisfiable. Let us index the gates in topological order, so that gates $1,\ldots,n$ are the input gates, and the $s$-th gate is the output gate.

Our nondeterministic algorithm begins by guessing a $\LIN \circ {\cal C}$ circuit $EVAL$ with $n+O(\log s)$ inputs and sparsity at most $(n+s)^{k+1}$, which is intended to encode the Circuit Evaluation function:
\[EVAL(C,x,i) := \text{Evaluate $C$ on $x$, and output the value of the $i$-th gate of $C$.}\] (Note $i$ is encoded as an $O(\log s)$-bit string.) Let \[D(x,i) := EVAL(C,x,i),\] i.e., we think of $C$ as hard-coded in the function, to simplify the notation. Applying Theorem~\ref{check-boolean}, we can check that $D$ encodes a Boolean function in $2^{n-\eps n}\cdot \poly(s,n)$ time.  

Next, we check that $D(a,s) = 0$ for all $a \in \{0,1\}^n$; in other words, $D$ claims that $C$ outputs $0$ on every input. Suppose $D$ has the form
\[D(x,i) = \sum_{j=1}^{(n+s)^{k+1}} \alpha_j \cdot c_j(x,i),\] for some $\alpha_j \in \R$ and $c_j \in {\cal C}$. Since $D$ has already been determined to be Boolean, it suffices to compute $\sum_{a \in \{0,1\}^n} D(a,s)$ to know whether or not $D(x,s)=0$ for all $a$. By exchanging the order of summation, \begin{align*}
\sum_{a \in \{0,1\}^n} D(a,s) &= \sum_{a \in \{0,1\}^n} \left(\sum_j \alpha_j \cdot c_j(a,i)\right) \\
&= \sum_j  \alpha_j \cdot \left(\sum_{a \in \{0,1\}^n} c_j(a,i)\right).
\end{align*}
Therefore we only need to make $(n+s)^{k+1}$ calls to Sum-Product over ${\cal C}$ (with $k=1$) to determine that $D(x,s) = 0$ for all $a \in \{0,1\}^n$. This can be done in $2^{n-\eps n}\cdot \poly(n,s)$ time, by assumption.

Next, we have to check that for every gate $i=1,\ldots,s$, and every $a \in \{0,1\}^n$, $D(a,i)$ correctly reports the output of the $i$-th gate when $C$ evaluates $a$. To check the input gates, we need to check that $D(x,i) = x_i$ for all $i=1,\ldots,n$; we can do this by checking that \[\sum_{a \in \{0,1\}^n} (D(x,i)-x_i)^2 = 0,\] which (by distributivity and re-arranging the order of summation, as in the proof of Theorem~\ref{check-boolean}) can be computed with $O((n+s)^{2(k+1)})$ calls to Sum-Product over ${\cal C}$ (with $k=2$) in $2^{n-\eps n}\cdot \poly(n,s)$ time.

For all gates $i$ other than the input gates, the $i$th-gate takes inputs from previous gates indexed by some $i_1 < i$ and $i_2 < i$, and computes a function of their two outputs. To check the consistency of gate $i$, we can form a degree-3 polynomial $p_i(A,B,C)$ which outputs 0-1 values on all $A,B,C \in \{0,1\}$, such that $p_i(A,B,C) = 0$ if and only if $A$ is the output of gate $i$, given that $B$ is the output of gate $i_1$ and $C$ is the output of gate $i_2$. 

Since $D$ is Boolean-valued, we have reduced our problem to determining that \[\sum_{a \in \{0,1\}^n} p(D(a,i),D(a,i_1),D(a,i_2)) = 0,\] for each gate $i=n+1,\ldots,s$, and each gate $i$'s corresponding input gates $i_1$ and $i_2$. Applying the distributive law to the LHS and exchanging the order of summation (as before), this results in $O((n+s)^{3(k+1)})$ Sum-Product-over-${\cal C}$ computations with up to $k=3$ products, computable in $2^{n-\eps n}\cdot \poly(n,s)$ time.

Our nondeterministic algorithm determines that the input circuit $C$ is unsatisfiable if and only if all of the above checks pass. If $C$ is satisfiable, then every possible $D$ guessed will fail some check. If $C$ is unsatisfiable, then under the hypotheses of the theorem, a $\LIN \circ {\cal C}$ circuit $D$ simulating every gate of $C$ always exists. By guessing this $D$, and running the assumed Sum-Product algorithm, our nondeterministic algorithm accepts.
\end{proof}

After the above preparation, we turn back to the proof of Theorem~\ref{generic-LBs}. At this point, it is simply a matter of applying the above Lemma~\ref{nondet-UNSAT} with the known algorithms-to-lower-bound connections:

\begin{proofof}{Theorem~\ref{generic-LBs}} 
Suppose every $C \in {\cal C}$ has a $\poly(n)$-bit representation, where each $C$ can be evaluated on a given input in $\poly(n)$ time. Recall the hypothesis of the theorem is:

{\narrower 

(A) There is an $\eps > 0$ and for $k=1,\ldots,4$ there is an $n^{O(1)}\cdot 2^{n-\eps n}$-time algorithm for computing the Sum-Product of $k$ functions $f_1(x_1,\ldots,x_n),\ldots,f_k(x_1,\ldots,x_n)$ from ${\cal C}$. 

}

Furthermore, recall that Lemma~\ref{nondet-UNSAT} states:

{\narrower 

Assuming (A) and assuming Circuit Evaluation has $\LIN \circ {\cal C}$ circuits of sparsity $n^k$ for some $k$, there is a nondeterministic $2^{n-\eps n}\cdot\poly(n,s)$-time algorithm for Circuit Unsatisfiability, on arbitrary fan-in-2 circuits with $n$ inputs and $s$ gates. 

}

We can then prove the lower bounds of the theorem readily, as follows.
\begin{compactenum}
\item[(1)] Assume every function in $\NP$ has $\LIN \circ {\cal C}$ circuits of $n^k$ sparsity circuits, for some fixed $k$. Then both hypotheses of Lemma~\ref{nondet-UNSAT} are satisfied (note Circuit Evaluation is in $\P$), and the conclusion implies that there is an $\eps > 0$ such that Circuit Unsatisfiability for (fan-in 2) circuits with $n$ inputs and $2^{\eps n}$ size is solvable in $O(2^{n-\eps n})$ nondeterministic time. Therefore by Theorem~\ref{NP-from-UNSAT}, for every $k$ there is a function in $\NP$ that \emph{does not} have $n^k$-size (fan-in 2) circuits. This is a contradiction because $\LIN \circ {\cal C}$ circuits of $n^k$ sparsity can be simulated with $n^{ck}$-size fan-in-2 circuits, for some universal $c$. 

\item[(2)] The same argument as in (1) and (2) (but with Theorem~\ref{NP-from-UNSAT2} applied) shows that for every unbounded $\alpha(n)$ such that $n^{\alpha(n)}$ is time-constructible, there is a function in $\NTIME[n^{\alpha(n)}]$ that does not have $\LIN \circ {\cal C}$ circuits of polynomial sparsity. 
\end{compactenum}
\end{proofof}

\paragraph{A Note on Lower Bounds for Linear Combinations of ACC Circuits.} Other lower bound consequences of the arguments in Theorem~\ref{generic-LBs} follow easily from combining known results. Here is an example:

\begin{reminder}{Theorem~\ref{lin-acc-thr}} For every $d,m \geq 1$, there is a $b \geq 1$ and an $f \in {\sf NTIME}[n^{\log^b n}]$ that does not have $\LIN \circ \AC^0_d[m] \circ \THR$ circuits of $n^a$ size, for every $a$.
\end{reminder}

This lower bound can be obtained as follows. First, the argument of Lemma~\ref{nondet-UNSAT} also shows:

\begin{theorem} \label{nqp-generic} Assume
\begin{compactitem}
\item There is an $\eps > 0$ and for $k=1,\ldots,4$ there is an $n^{O(1)}\cdot 2^{n-n^{\eps}}$-time algorithm for computing the Sum-Product of $k$ functions from ${\cal C}$.
\item The Circuit Evaluation problem has $\LIN \circ {\cal C}$ circuits of sparsity $n^{a}$, for some $a > 0$. 
\end{compactitem}
Then there is a nondeterministic $2^{n-n^{\eps}}\cdot\poly(n,s)$-time algorithm for Circuit Unsatisfiability, on arbitrary fan-in-2 circuits with $n$ inputs and $s$ gates. 
\end{theorem}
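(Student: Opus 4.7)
The plan is to follow the structure of Lemma~\ref{nondet-UNSAT} essentially verbatim, replacing every occurrence of the ``$\eps n$'' savings by ``$n^{\eps}$'' savings. More precisely, I would first observe that Theorem~\ref{check-boolean} has an immediate quasi-polynomial analogue: if Sum-Product of $k \le 4$ functions from ${\cal C}$ is computable in $n^{O(1)} \cdot 2^{n - n^\eps}$ time, then checking that a given $\LIN \circ {\cal C}$ circuit of sparsity $s$ is Boolean-valued takes $2^{n - n^\eps} \cdot \poly(n, s)$ time, via the identical reduction to $O(s^4)$ Sum-Product calls using $h(x) = f(x)^2 - 2 f(x)^3 + f(x)^4$. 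No new idea is required here; the argument is purely bookkeeping.

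Next, given a fan-in-2 circuit $C$ with $n$ inputs and $s$ gates (ordered topologically), the nondeterministic algorithm guesses a $\LIN \circ {\cal C}$ circuit $D(x, i)$ on $n + O(\log s)$ inputs and sparsity $(n + s)^{a+1}$ that purports to compute the value of the $i$-th gate of $C$ on input $x$. By hypothesis such a $D$ exists when $C$ is unsatisfiable, because Circuit Evaluation itself has $\LIN \circ {\cal C}$ circuits of $n^a$ sparsity. We then verify three things: (i) $D$ is Boolean-valued (via the above adapted form of Theorem~\ref{check-boolean}); (ii) $D(x, s) = 0$ for all $x$, which reduces to computing $\sum_a D(a, s)$, and by linearity this is $(n+s)^{a+1}$ calls to Sum-Product with $k = 1$; (iii) for each gate $i$, the value reported by $D$ is consistent with the values at its two input gates $i_1, i_2$. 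For input gates we check $\sum_a (D(a, i) - x_i)^2 = 0$ as in Lemma~\ref{nondet-UNSAT}, which expands (after distributing and swapping sums) into $\poly(n,s)$ Sum-Product instances with $k \le 2$. For internal gates we use the degree-3 polynomial $p_i(A, B, C)$ that vanishes exactly on consistent triples, and check $\sum_a p_i(D(a, i), D(a, i_1), D(a, i_2)) = 0$, which expands into $\poly(n, s)$ Sum-Product instances with $k \le 3$.

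Each of the $\poly(n, s)$ Sum-Product instances runs in $n^{O(1)} \cdot 2^{n - n^\eps}$ time by assumption, so the total nondeterministic running time is $2^{n - n^\eps} \cdot \poly(n, s)$, as claimed. Completeness holds because when $C$ is unsatisfiable, guessing the honest $\LIN \circ {\cal C}$ representation $D$ of Circuit Evaluation (with $C$ hard-coded) passes every check; soundness holds because any $D$ that passes check (i) encodes a Boolean function, checks (iii) force it to agree with $C$ on all gates, and check (ii) then forces the output gate to be $0$ everywhere.

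The only thing that needs mild verification is that the aggregate ``$\poly(n, s)$'' factor absorbs all $O((n + s)^{O(a)})$ Sum-Product calls without interfering with the $2^{n - n^\eps}$ savings; since $a$ is a fixed constant (the sparsity exponent for Circuit Evaluation) and $s$ is already polynomial in the circuit size, this is automatic. There is no genuine obstacle — the proof is a direct adaptation, and this is exactly why the theorem is stated as a near-verbatim ``quasi-polynomial'' parallel to Lemma~\ref{nondet-UNSAT}.
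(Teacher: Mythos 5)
Your proposal is correct and matches the paper's own treatment: the paper introduces Theorem~\ref{nqp-generic} with the single line ``the argument of Lemma~\ref{nondet-UNSAT} also shows,'' i.e., it is exactly the verbatim adaptation you describe, with the $2^{n-\eps n}$ savings replaced by $2^{n-n^{\eps}}$ throughout and the same three-part verification (Boolean-valuedness via the adapted Theorem~\ref{check-boolean}, zero output gate, gate-by-gate consistency).
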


Now we combine this theorem with the following two facts:

\begin{compactenum}
\item For every depth $d$ and integer $m \geq 2$, there is an $\eps > 0$ such that the Sum-Product of $O(1)$ $\AC^0_d[m] \circ \THR$ circuits of $2^{n^{\eps}}$ size can be computed in $2^{n-n^{\eps}}$ time. This simply applies the algorithm for counting satisfying assignments of $\AC^0_d[m] \circ \THR$ circuits (\cite{WilliamsTHR14}).
\item If for some $\alpha > 0$ there is a nondeterministic $2^{n-n^{\alpha}}$-time Circuit Unsatisfiability algorithm for $2^{n^{\alpha}}$-size circuits, then for every $a \geq 1$, there is a $b \geq 1$ such that ${\sf NTIME}[n^{\log^b n}]$ does not have $n^{\log^a n}$-size circuits (this is a theorem of Murray and Williams~\cite{Murray-Williams17}).
\end{compactenum}

Theorem~\ref{lin-acc-thr} is immediate: Assuming ${\sf NTIME}[n^{\log^b n}]$ has $\LIN \circ \AC^0_d[m] \circ \THR$ circuits of $n^a$ size for some $a\geq 1$, both hypotheses of Theorem~\ref{nqp-generic} are satisfied for ${\cal C} = \AC^0_d[m] \circ \THR$, and the conclusion of Theorem~\ref{nqp-generic} combined with item 2 above yields a contradiction.

\subsection{Lower Bounds for Exponential Time With an NP Oracle}

For classes ${\cal C}$ with a natural closure property, the lower bounds can be extended to $2^{\Omega(n)}$ sparsity for a function in $\E^{\NP}$. Recall ${\sf ANY}_c$ denotes the class of Boolean functions with $c$ inputs (the class contains ``any'' such function).

For an integer $c \geq 1$, we say that ${\cal C}$ is \emph{efficiently closed under $\NC^0_c$} if there is a polynomial-time algorithm $A$ such that, given any circuit $C$ of the form ${\cal C} \circ {\sf ANY}_c$, algorithm $A$ outputs an equivalent circuit $D$ from ${\cal C}$ (which is only polynomially larger). We note this property is true of $O(1)$-degree polynomials: 

\begin{proposition}
For every integer $m \geq 2$ and $c \geq 1$, the class ${\cal C} = \bigcup_{d \geq 1}\MODm \circ \AND_d$ is efficiently closed under $\NC^0_c$.
\end{proposition}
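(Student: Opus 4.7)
The plan is, given a $\MODm \circ \AND_d \circ {\sf ANY}_c$ circuit $C$, to expand each ${\sf ANY}_c$ bottom gate as its unique multilinear integer polynomial, push the product through the middle $\AND$ layer to obtain an integer polynomial per middle gate, sum these and reduce modulo $m$, and finally re-encode the resulting single polynomial as a $\MODm \circ \AND_{cd}$ circuit by gate replication. Concretely, every Boolean function $g:\{0,1\}^c\to\{0,1\}$ admits the identity
\[
g(y) \;=\; \sum_{a\in\{0,1\}^c} g(a)\prod_{i=1}^c y_i^{a_i}(1-y_i)^{1-a_i} \;=\; \sum_{S\subseteq[c]} \gamma_S \prod_{i\in S} y_i,
\]
which is exact over $\Z$ on $\{0,1\}$-inputs, with $2^{O(c)}$ monomials and integer coefficients of magnitude $2^{O(c)}$, computable in constant time from the truth table of $g$.

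Each middle $\AND$ gate then equals the integer product of at most $d$ such polynomials (since an $\AND$ of $\{0,1\}$-values is their product), so it expands to a multilinear integer polynomial $h_i(x)$ of degree at most $cd$ in the original variables, with at most $2^{cd}$ monomials. Summing over all $M$ middle gates of $C$ produces one integer polynomial $T(x)=\sum_i h_i(x)$ of degree at most $cd$, such that $T(a)$ equals the integer input supplied to the top $\MODm$ gate of $C$ on every $a\in\{0,1\}^n$. Therefore $C$ is equivalent to $\MODm(T(x))$. Reducing each coefficient of $T$ mod $m$ yields $T'$ with coefficients $a_S\in\{0,\dots,m-1\}$, and realizing each monomial $a_S\prod_{i\in S}x_i$ of $T'$ by $a_S$ parallel copies of the $\AND$ gate $\bigwedge_{i\in S}x_i$ produces a $\MODm \circ \AND_{cd}$ circuit $D$ equivalent to $C$, and hence an element of ${\cal C}$.

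For the resource bounds, each middle gate contributes at most $2^{cd}$ monomials and each monomial costs at most $m-1$ AND-gate copies, so $|D|\le(m-1)\cdot M\cdot 2^{cd}$; since $m,c,d$ are treated as constants in the applications, this is $\poly(|C|)$ and the construction itself is clearly polynomial-time. The main (very mild) obstacle is the bookkeeping: one must verify that $\MODm$ depends only on the residue modulo $m$ of the multiset sum of its Boolean inputs, so that replication of identical $\AND$ gates correctly realizes integer coefficients (with negative coefficients handled by $-a\equiv m-a\pmod m$), and that the multilinear expansion above is literally an identity over $\Z$ on $\{0,1\}$-inputs, so integer arithmetic through the $\AND$ layer tracks the intended $\MODm$-input faithfully. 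No nontrivial algebraic tools are needed; the proposition follows from this chain of elementary identities together with routine size accounting.
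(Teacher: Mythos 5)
Your proposal is correct and takes essentially the same route as the paper: the paper likewise replaces each ${\sf ANY}_c$ gate by its exact multilinear polynomial representation (viewed as a $\MODm\circ\AND_c$ subcircuit), pushes through the $\AND_d$ layer by distributivity, and collapses into a single $\MODm$ gate over $\AND_{cd}$ gates. You merely carry out the expansion over $\Z$ and reduce modulo $m$ at the end rather than working modulo $m$ throughout — an immaterial difference, since reduction is a ring homomorphism — and you are a bit more explicit about coefficient replication and the size bound.
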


\begin{proof} Every $\MODm \circ \AND_d \circ {\sf ANY}_c$ circuit can be represented by an $\MODm \circ \AND_{dc}$ circuit. In particular, every Boolean function on $c$ inputs has an exact representation as a sum (modulo $m$) of ANDs of fan-in $c$; composing such a sum with a $\MODm \circ \AND$ circuit and applying the distributive law yields the result.
\end{proof}

\begin{theorem}\label{generic-LBs2}
There is a universal $c \geq 1$ satisfying the following. Suppose ${\cal C}$ is efficiently closed under $\NC^0_c$, and suppose every $C \in {\cal C}$ has a $\poly(n)$-bit representation, where each $C$ can be evaluated on a given input in $\poly(n)$ time.\\ Assume there is an $\eps > 0$ and for $k=1,\ldots,4$ there is an $n^{O(1)}\cdot 2^{n-\eps n}$-time algorithm for computing the Sum-Product of $k$ functions $f_1(x_1,\ldots,x_n),\ldots,f_k(x_1,\ldots,x_n)$ from ${\cal C}$.\\ Then there is a function in $\E^{\NP}$ that does not have $\LIN \circ {\cal C}$ circuits of sparsity $2^{\alpha n}$, for some $\alpha > 0$.
\end{theorem}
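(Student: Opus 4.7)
The plan is to parallel the proof of Theorem~\ref{generic-LBs}, with two main extensions: strengthen Lemma~\ref{nondet-UNSAT} so that it handles fan-in-$2$ Boolean circuits of subexponential size $s = 2^{\beta n}$ (for $\beta > 0$ small in terms of $\eps$), and invoke an $\E^{\NP}$-analog of Theorem~\ref{NP-from-UNSAT} to conclude the $2^{\alpha n}$ sparsity lower bound for a language in $\E^{\NP}$.

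For the first step, I would rerun the nondeterministic algorithm of Lemma~\ref{nondet-UNSAT} with the guessed $\LIN \circ {\cal C}$ circuit $D(x, i)$ of sparsity $s' = 2^{O(\beta n)}$. The closure of ${\cal C}$ under $\NC^0_c$ is used to guarantee that such a $D$ actually exists whenever the input circuit $C$ is unsatisfiable: each fan-in-$2$ Boolean gate is an $\NC^0_2$ function of its two input-wire values, so by closure the computation of gate $i$ from the representations of gates $i_1, i_2$ can be absorbed into a single ${\cal C}$-gate with only polynomial blow-up in size. Taking $c=4$ is enough to accommodate both fan-in-$2$ gates and the degree-$3$ consistency polynomials $p_i$ of Lemma~\ref{nondet-UNSAT}. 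The three verification phases of Lemma~\ref{nondet-UNSAT} (Boolean-valuedness via Theorem~\ref{check-boolean}, the output-zero check, and gate consistency) then reduce to $(s')^{O(1)} = 2^{O(\beta n)}$ calls to Sum-Product over ${\cal C}$, each costing $n^{O(1)} \cdot 2^{n-\eps n}$ by hypothesis, for a total of $2^{n(1-\eps + O(\beta))}$. Choosing $\beta$ sufficiently small yields a nondeterministic Circuit Unsatisfiability algorithm for $2^{\beta n}$-size fan-in-$2$ circuits running in time $2^{n(1-\eps/2)}$.

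For the second step, I would invoke (or re-derive) an $\E^{\NP}$-version of Theorem~\ref{NP-from-UNSAT}, which states that a nondeterministic $2^{n(1-\delta)}$-time Circuit Unsatisfiability algorithm for $2^{\beta n}$-size fan-in-$2$ circuits implies the existence of an $\alpha > 0$ and a language in $\E^{\NP}$ outside $\SIZE[2^{\alpha n}]$. This is a standard Miltersen-Vinodchandran-Watanabe-style diagonalization (refined by Williams and subsequent authors): the $\E^{\NP}$ machine uses its $\NP$ oracle to lexicographically search for a length-$2^n$ truth table not computed by any $2^{\alpha n}$-size circuit, where each binary-search query is resolved by running the assumed fast nondeterministic UNSAT algorithm as a subroutine, so that the ``for all circuits'' quantifier fits in the $\E^{\NP}$ time budget. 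Combined with the fact that every $\LIN \circ {\cal C}$ circuit of sparsity $s$ has a fan-in-$2$ implementation of size $s \cdot \poly(n)$, the size lower bound transfers to a sparsity lower bound of $2^{\alpha' n}$ for a suitable $\alpha' > 0$.

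The main obstacle I expect is making the sparsity accounting in Step~1 genuinely polynomial in $s$ rather than exponential: iterating $\NC^0_c$-closure gate-by-gate naively could compound per-gate polynomial blow-ups across $s = 2^{\beta n}$ gates, producing an uncontrolled global sparsity. I would resolve this by constructing $D$ globally (writing a single $\LIN \circ {\cal C}$ expression whose sparsity sums additively over the $s$ gates of $C$) rather than by induction on circuit depth, yielding overall sparsity $s \cdot \poly(n) = 2^{O(\beta n)}$. A secondary concern is matching parameters carefully between the UNSAT algorithm (savings of $\eps/2$ at circuit size $2^{\beta n}$) and the $\E^{\NP}$-analog of Theorem~\ref{NP-from-UNSAT} (lower bound of size $2^{\alpha n}$): the relationship between $\beta$, $\eps$, and $\alpha$ must be chosen so that the nondeterministic savings genuinely yield an exponential-size lower bound at the $\E^{\NP}$ level.
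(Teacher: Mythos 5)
The core of Step~1 does not work. You propose to build the $\LIN \circ {\cal C}$ circuit $D(x,i)$ for Circuit Evaluation from scratch by iterating $\NC^0_c$-closure gate-by-gate: ``each fan-in-2 Boolean gate is an $\NC^0_2$ function of its two input-wire values, so by closure the computation of gate $i$ from the representations of gates $i_1,i_2$ can be absorbed into a single ${\cal C}$-gate.'' But the closure property defined in the paper converts ${\cal C} \circ {\sf ANY}_c$ to ${\cal C}$---a ${\cal C}$-gate \emph{fed by} constant-fan-in Boolean gates. What you need is the converse, ${\sf ANY}_2 \circ {\cal C} \to {\cal C}$: a constant-fan-in Boolean gate \emph{applied to} two ${\cal C}$-gate outputs. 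These are not the same, and the classes of interest do not satisfy the converse (for example, for ${\cal C}=\MODp \circ \AND_d$, composing on the inside preserves bounded degree, but a product of two degree-$d$ outputs has degree $2d$; and for $\THR$/$\ETHR$ one can certainly apply a linear map to the inputs of a threshold, but the AND of two thresholds is not a threshold). Your ``global'' fallback ($D(x,i)=\sum_j [i=j]\cdot\text{val}_j(x)$) has the same problem: each $\text{val}_j(x)$ is the value of an internal gate as a function of the $n$ circuit inputs, and there is no reason this is a ${\cal C}$-function at all. This is not a bookkeeping issue about compounding polynomial blowups; it is that the required representations need not exist. And note you cannot fall back on the contradiction hypothesis ``$\E^{\NP}$ has $2^{\alpha N}$-sparse $\LIN \circ {\cal C}$ circuits'' to supply $D$ either, because Circuit Evaluation's input includes the description of the $2^{\beta n}$-size circuit, so $N = 2^{\Theta(\beta n)}$ and $2^{\alpha N}$ is doubly exponential in $n$; and hard-coding the circuit makes $D$ a non-uniform family not covered by the uniform assumption (B). Since you cannot obtain an \emph{unconditional} nondeterministic UNSAT algorithm for general circuits from the Sum-Product assumption alone, your Step~2 (MVW-style diagonalization using such an algorithm) cannot be launched.

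The paper routes around this differently. It assumes for contradiction that $\E^{\NP}$ has sub-exponentially sparse $\LIN \circ {\cal C}$ circuits, then invokes the Easy Witness Lemma (Lemma~3.2 of~\cite{Williams10}) to conclude that every $L\in\NTIME[2^n]$ has, for every verifier, witnesses whose truth tables are encoded by small $\LIN\circ{\cal C}$ circuits $C_x$. It pairs this with an ultra-local 3SAT reduction (Lemma~\ref{Succinct-3SAT} from~\cite{JMV13}) whose clause generator $D_x$ consists of $O(n)$ functions each of \emph{bounded} fan-in $c$, so that composing $C_x$ with $D_x$ yields an $\LIN\circ{\cal C}\circ{\sf ANY}_c$ circuit; \emph{that} is where the $\NC^0_c$-closure is applied --- once, in the correct direction --- to get $\LIN\circ{\cal C}$, after which Theorem~\ref{check-boolean} and a Sum-Product call finish the verification. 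The contradiction is with the nondeterministic time hierarchy theorem ($\NTIME[2^n]$ would be simulable in $o(2^n)$ nondeterministic time), not with a separate algorithms-to-lower-bounds connection. In short, the paper never needs a $\LIN\circ{\cal C}$ circuit for general Circuit Evaluation, which is exactly what your Step~1 cannot deliver.
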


The remainder of this section sketches the proof of Theorem~\ref{generic-LBs2}; we give only a sketch, as the argument closely resembles others~\cite{WilliamsJACM14,JMV13}). 

Let $\eps \in (0,1)$. Assume ${\cal C}$ is efficiently closed under $\NC^0$, and
\begin{compactitem}
\item[(A)] There is an $\eps > 0$ and an $O(2^{n-\eps n})$-time algorithm for computing the Sum-Product of $k$ functions from ${\cal C}$, \emph{and}
\item[(B)] For all functions $f \in \TIME[2^{O(n)}]^{\NP}$ and all $\alpha > 0$, $f$ has $\LIN \circ {\cal C}$ circuits of sparsity $2^{\alpha n}$.
\end{compactitem}

We wish to establish a contradiction. In particular, we will show that assumptions (A) and (B) together imply that every problem in $\NTIME[2^n]$ can be simulated by a nondeterministic $o(2^n)$-time algorithm, contradicting the (strong) nondeterministic time hierarchy theorem~\cite{Seiferas-Fischer-Meyer78,Zak}. 

Let $L \in \NTIME[2^n]$. On a given input $x$, our nondeterministic $o(2^n)$-time algorithm for $L$ has two parts:
\begin{compactitem}
\item[(i)] It \emph{guesses a witness for $x$} of $o(2^n)$ size.
\item[(ii)] It \emph{verifies that witness for $x$} in $o(2^n)$ time.
\end{compactitem}
To handle (i), we use assumption (B) to show that one can nondeterministically \emph{guess} a $2^{\alpha n}\cdot \poly(n)$-size $\LIN \circ {\cal C}$ circuit that encodes a witness for $x$, applying a simple ``easy witness'' lemma from~\cite{Williams10}:

\begin{lemma}[Lemma 3.2 in~\cite{Williams10}] Let ${\cal D}$ be any class of circuits. If $\E^{\NP}$ has circuits of size $S(n)$ from class ${\cal D}$, then for every $L \in \NTIME[2^n]$ and every verifier $V$ for $L$, and every $x \in L$ of length $n=|x|$, there is a $y$ of length $O(2^n)$ such that $V(x,y)$ accepts and the ${\cal D}$-circuit complexity of $y$ (construed as a function $f : \{0,1\}^{n+O(1)} \rightarrow \{0,1\}$) is at most $S(n)$. 
\end{lemma}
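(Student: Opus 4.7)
The plan is to construct a single function in $\E^{\NP}$ whose value at the pair $(x,i)$ records the $i$-th bit of the lexicographically first accepting witness for $x$, then apply the hypothesis to this function and hard-wire $x$. First, by padding, I may assume that every $x \in L$ of length $n$ has witnesses of a common length $N = 2^{n+c}$ for a fixed constant $c$; any such witness $y \in \{0,1\}^N$ is naturally regarded as a function $f_y : \{0,1\}^{n+c} \to \{0,1\}$ via $f_y(i) = y_i$. The goal is to produce an accepting $y$ for which $f_y$ is computable by a ${\cal D}$-circuit of size at most $S(n)$.

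Define $g(x,i) = b$, where $b$ is the $i$-th bit of the lexicographically first string $y \in \{0,1\}^N$ such that $V(x,y)$ accepts, with $b=0$ if no such $y$ exists. The total input length is $m := n + (n+c) = 2n + c$. I claim $g \in \E^{\NP}$: on input $(x,i)$ one performs bitwise binary search using the NP oracle, asking at step $j$ whether there exists a completion of the current prefix (extended by $0$) that causes $V$ to accept. After $N \leq 2^{n+c}$ such queries, the lex-first accepting witness is fully determined and we output its $i$-th bit. The overall running time is $2^{O(n)}$, so $g \in \TIME[2^{O(m)}]^{\NP} = \E^{\NP}$.

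By hypothesis, $g$ is computed by a ${\cal D}$-circuit $C$ of size at most $S(m)$. Hard-wiring the $n$-bit input slot to the constant $x$ yields a subcircuit $C_x$ on $n+c$ inputs of size at most $S(m)$, and by construction $C_x$ computes $f_{y^{\ast}_x}$ for the lex-first accepting witness $y^{\ast}_x$ of $x$. Setting $y = y^{\ast}_x$ completes the construction. Note the mild gap between the advertised bound $S(n)$ and the actually obtained bound $S(m) = S(O(n))$; for every $S$ used in this paper, in particular the exponential $S(n) = 2^{\alpha n}$ arising in Theorem~\ref{generic-LBs2}, this constant blow-up in the input length is absorbed by a harmless adjustment of $\alpha$.

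The step I expect to be the main obstacle for a reader is the NP-oracle prefix search used to place $g$ in $\E^{\NP}$: one has to verify that the lex-first witness can indeed be extracted by $O(N)$ adaptive NP queries inside deterministic exponential time, rather than only in $\P^{\NP}$. A secondary, more cosmetic, subtlety is that the hard-wiring step requires ${\cal D}$ to be closed under substitution of constants into inputs; this holds trivially for every circuit class considered in this paper, including $\LIN \circ {\cal C}$ whenever the base class ${\cal C}$ enjoys the property, so it presents no real obstacle to the applications.
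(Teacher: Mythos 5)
Your proof is correct and is the standard argument — in fact it is exactly Williams's original proof of the cited Lemma 3.2: define $g(x,i)$ as the $i$-th bit of the lexicographically first accepting witness for $x$, show $g\in\E^{\NP}$ via an adaptive NP-oracle prefix search taking $2^{O(n)}$ time, invoke the circuit hypothesis to obtain a ${\cal D}$-circuit for $g$, and hard-wire the $x$-inputs to get a witness circuit. The two caveats you flag — that the honest bound is $S(O(n))$ rather than $S(n)$ (since $g$ has $2n+O(1)$ inputs), and that ${\cal D}$ must be closed under fixing inputs to constants — are indeed implicit in the lemma's statement, and you correctly observe that both are harmless for every $S$ (polynomial or $2^{\alpha n}$ with $\alpha$ adjustable) and every circuit class used in this paper.
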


In other words, assumption (B) implies that every yes-instance of $L$ has $S(n)$-size ``witness circuits'': a witness of length $O(2^n)$ that can be represented as an $S(n)$-size $\LIN \circ {\cal C}$ Boolean-valued circuit. Furthermore, this holds for every verifier for $L$. 

To handle (ii), we choose an appropriate verifier, so that verifying witnesses becomes equivalent to a simple Sum-Product call. In particular we use the following extremely ``local'' reduction from $L \in \NTIME[2^n]$ to 3SAT instances of $2^n \cdot \poly(n)$ length:

\begin{lemma}[\cite{JMV13}]\label{Succinct-3SAT}
Every $L \in \NTIME[2^n]$ can be reduced to 3SAT instances of $O(2^n \cdot n^4)$ size. Moreover, there is an algorithm that, given an instance $x$ of $L$ and an integer $i \in [O(2^n \cdot n^4)]$ in binary, {\bf reads only $O(1)$ bits of $x$} and outputs the $i$-th clause of the resulting 3SAT formula, in $O(n^4)$ time. 
\end{lemma}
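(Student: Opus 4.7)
The plan is to give a succinct Cook-Levin-style reduction. First, I would convert the $\NTIME[2^n]$ machine $M$ for $L$ into an \emph{oblivious} nondeterministic machine $M'$ that still runs in time $T = 2^n \cdot \poly(n)$, but whose sequence of head positions at each step is a fixed function $\sigma(t,n)$ depending only on the time-step $t$ and the input length $n$, not on $x$ or on the nondeterministic choices. This can be arranged either by a Pippenger--Fischer-style oblivious simulation of a multi-tape machine (with $O(\log T) = O(n)$ overhead), or by going through a random-access model and interpreting the instructions obliviously. In either case, $\sigma(t,n)$ and the transition function of $M'$ are computable from $t$ in $\poly(n)$ time.

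Next, I would set up a standard Cook-Levin tableau for $M'$ on input $x$. For each of the $O(T)$ cells $(t,p)$ that the oblivious schedule actually touches, introduce $O(1)$ Boolean variables encoding the tape symbol, the control state (if any), and the nondeterministic bit chosen at step $t$ (if any). Each variable is named by a binary index of length $O(n)$. The clauses come in four standard flavors: cell-consistency clauses ensuring each cell encodes a single valid symbol; transition clauses that, for each $t$, enforce the relation between the $O(1)$ cells at time $t$ listed by $\sigma(t,n)$ and their successors at time $t+1$ according to $M'$'s transition relation; initial clauses, one per input bit, forcing the time-$0$ variable for position $j$ to equal $x_j$; and a final clause forcing the output cell to hold ``accept.'' Each $O(1)$-fanin constraint converts to $O(1)$ 3-CNF clauses, so the total is $O(T) \cdot \poly(n) = O(2^n \cdot \poly(n))$, and with a careful oblivious simulation this is $O(2^n \cdot n^4)$.

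The local-generation claim then drops out of the structure. Given an index $i \in [O(2^n \cdot n^4)]$, written as an $O(n)$-bit integer, split the high-order bits of $i$ to determine the \emph{type} of clause (cell-consistency, transition, initial, or final) and the low-order bits to determine the parameters (time $t$, position $p$, sub-index within an $O(1)$-sized cluster). For cell-consistency, transition, and final clauses, the content of the clause depends only on $M'$, on $\sigma(t,n)$, and on arithmetic on the indices, so \emph{zero} bits of $x$ are read. For an initial clause indexed by input position $j$, exactly the single bit $x_j$ is read. All decoding, evaluation of $\sigma$, and writing of the three $O(n)$-bit literals fit comfortably within $O(n^4)$ time on $O(n)$-bit quantities.

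The main obstacle is tightness of the constants: one must pick an oblivious simulation of $M$ and a Cook-Levin encoding whose combined overhead in clause count and in per-clause decoding time is simultaneously at most $n^4$, rather than some larger fixed polynomial. This is the technical core of the construction, and it is essentially the content of the local reduction of Jahanjou--Miles--Viola~\cite{JMV13}, which can be invoked directly; everything above is a summary of why their reduction yields the stated parameters.
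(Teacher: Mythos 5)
The paper does not prove this lemma; it is cited verbatim from Jahanjou--Miles--Viola \cite{JMV13}, so there is no internal proof to compare against. Your reconstruction follows the standard route that underlies their result: make the $\NTIME[2^n]$ machine oblivious, lay out the Cook--Levin tableau, and observe that each clause depends on (a) a constant number of tableau cells whose time/position indices are arithmetic functions of the clause index, and (b) at most one bit of $x$ (for the initialization clauses). That is indeed the right decomposition, and it correctly yields the two separate locality properties the lemma asserts --- $O(1)$ bits of $x$ per clause, and $\poly(n)$ time to decode the clause index.

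Two points deserve a bit more care than you give them. First, you assert without argument that the head-position schedule $\sigma(t,n)$ of the Pippenger--Fischer oblivious simulation is computable in $\poly(n)$ time from the $O(n)$-bit time step $t$; this is true but is precisely the kind of claim that has to be verified, and it is one of the things \cite{JMV13} actually works out (they in fact phrase the result in terms of a highly uniform circuit family, and separately argue that the map from index $i$ to the $i$-th clause is itself computable by very low-complexity circuits, not merely in $\poly(n)$ time). Second, the specific exponent $4$ in $O(2^n \cdot n^4)$ is an artifact of the particular machine model, oblivious simulation, and tableau encoding chosen in \cite{JMV13}; your Pippenger--Fischer route gives $O(2^n \cdot n^c)$ for \emph{some} constant $c$ and $\poly(n)$ decoding, which suffices for every use of the lemma in this paper (the exponent is never exploited), but it does not by itself pin down $c=4$. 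You flag both of these as the ``technical core'' and defer to \cite{JMV13}, which is a fair move given that the paper itself cites the lemma without proof. In short: correct approach, correct understanding of where the real work lies, and no gap beyond the deferred bookkeeping you already acknowledged.
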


Since in Lemma~\ref{Succinct-3SAT} each bit of the output is a function of some $c \leq O(1)$ inputs, each bit of the output is a member of ${\sf ANY}_c$. So for every instance $x$ of length $n$ for the language $L$, we can produce (in deterministic $\poly(n)$ time) a circuit $D_x$ which is an ordered collection of $O(n)$ functions from ${\sf ANY}_c$. The circuit $D_x$ takes $n+O(\log n)$ binary inputs, construes that input as an integer $i$, and outputs the $i$-th clause of a formula $F_x$ which is satisfiable if and only if $x \in L$. 

Our nondeterministic algorithm for $L$ guesses a $2^{O(\alpha n)}$-sparse $\LIN \circ {\cal C}$ circuit $C_x$ that takes $n+O(\log n)$ inputs and is meant to encode a satisfying assignment for the formula $F_x$. 
We can check $C_x$ is Boolean-valued on all $2^n \cdot \poly(n)$ inputs in $2^{n-\eps n/2}$ time, by applying Theorem~\ref{check-boolean} and letting $\alpha > 0$ be sufficiently small.

Composing $C_x$ with the $O(n)$ polynomials forming $D_x$, we obtain a $2^{O(\alpha n)}$-sparse $\LIN \circ {\cal C} \circ {\sf ANY}_c$ circuit $E$ with $n+O(\log n)$ inputs (composed of three copies of $C_x$, and $O(n)$ copies of $D_x$) such that 
\[\text{$E$ is unsatisfiable if and only if $C_x$ encodes a satisfying assignment for $F_x$.}\] (We leave out the details, as they are provided in multiple other papers~\cite{Williams10,WilliamsJACM14}.) To complete the $o(2^n)$-time algorithm for $L$, it suffices to check unsatisfiability of the resulting $2^{O(\alpha n)}$-size circuit $E$ in $o(2^n)$ nondeterministic time. This would yield the desired contradiction. 

Such a nondeterministic UNSAT algorithm is provided by first converting $E$ into an $\LIN \circ {\cal C}$ circuit in $2^{O(\alpha n)}$ time (using the fact that ${\cal C}$ is efficiently closed under $\NC^0$). This yields a sum of $2^{O(\alpha n)}$ ${\cal C}$-circuits. Analogously to the proof of Lemma~\ref{nondet-UNSAT}, checking the unsatisfiability of such an $E$ can be reduced to $2^{O(\alpha n)}$ calls to Sum-Product of ${\cal C}$, by applying distributivity. Applying the Sum-Product algorithm of assumption (A) that runs in $O(2^{n-\eps n})$ time, and setting $\alpha > 0$ to be sufficiently small, the running time is $o(2^n)$.  

This completes the proof of Theorem~\ref{generic-LBs2}.

\section{Sparse Combinations of Threshold Functions}\label{section-SUM-THR}

We now turn to proving $\LIN \circ \THR$ lower bounds. Due to Lemma~\ref{generic-LBs}, it suffices to give a $2^{n-\eps n}$-time algorithm for the Sum-Product Problem over $\THR$:

{\narrower

{\bf Sum-Product over $\THR$:} Given $k$ linear threshold functions $f_1,\ldots,f_k$, each on Boolean variables $x_1,\ldots,x_n$, compute \[\sum_{x \in \{0,1\}^n} \prod_{i=1}^k f_i(x).\]

}

Putting together various pieces (described in the Preliminaries), there is a substantially faster-than-$2^n$ time algorithm:

\begin{theorem} \label{sum-prod-THR} The Sum-Product of $k$ linear threshold functions on $n$ variables (with weights in $[-n^n,n^n]$) can be computed in $2^{n/2} \cdot n^{O(k)}$ time. 
\end{theorem}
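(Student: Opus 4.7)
\medskip

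\noindent\textbf{Proof plan.}
The goal is to compute $\sum_{x\in\{0,1\}^n}\prod_{i=1}^{k}f_i(x)$ where each $f_i$ is an LTF with weights in $[-n^n,n^n]$. Since each $f_i$ is $\{0,1\}$-valued, the product $\prod_i f_i(x)$ equals the AND of the $f_i(x)$, so the quantity we want is simply $|\{x : f_1(x)=\cdots=f_k(x)=1\}|$. My strategy is to reduce this counting problem to (poly-in-$n^{k}$) many instances of counting Subset Sum solutions, and then apply Horowitz--Sahni (Theorem~\ref{subset-sum}). The three ingredients are, in order, Theorem~\ref{THR2ETHR} (convert each LTF into a sum of ETHRs), Theorem~\ref{ANDETHR2ETHR} (collapse an AND of ETHRs into a single ETHR), and Theorem~\ref{subset-sum} (count ETHR solutions in $2^{n/2}\cdot\mathrm{poly}$).

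First, apply Theorem~\ref{THR2ETHR} to each $f_i$ to obtain a decomposition $f_i = \sum_{j=1}^{m_i} g_{i,j}$ with $m_i \le \mathrm{poly}(n)$, where each $g_{i,j}$ is an ETHR gate with coefficient $1$, and the $g_{i,j}$ are disjointly satisfied (parallel hyperplanes). Their weight vectors are the same as that of $f_i$, so their integer weights still have magnitude at most $n^n$. Expanding the product,
\[
\prod_{i=1}^{k} f_i(x) \;=\; \prod_{i=1}^{k}\sum_{j=1}^{m_i} g_{i,j}(x) \;=\; \sum_{(j_1,\dots,j_k)}\; \prod_{i=1}^{k} g_{i,j_i}(x),
\]
where the outer sum ranges over $\prod_i m_i = n^{O(k)}$ tuples. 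Because each $g_{i,j_i}$ is Boolean-valued, the inner product is the AND of $k$ ETHR gates on $n$ variables.

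Second, for each tuple $(j_1,\dots,j_k)$, apply Theorem~\ref{ANDETHR2ETHR} to turn the conjunction $\bigwedge_i g_{i,j_i}$ into a single equivalent ETHR gate $h_{(j_1,\dots,j_k)}$ in $\mathrm{poly}(n,k)$ time. The resulting integer weights have magnitude at most $(n\cdot n^n)^{O(k)} = n^{O(nk)}$, i.e.\ bit length $W = O(nk\log n)$. Exchanging sums, the quantity we wish to compute becomes
\[
\sum_{x\in\{0,1\}^n}\sum_{(j_1,\dots,j_k)} h_{(j_1,\dots,j_k)}(x) \;=\; \sum_{(j_1,\dots,j_k)}\;\#\bigl\{x\in\{0,1\}^n : h_{(j_1,\dots,j_k)}(x)=1\bigr\}.
\]
Counting the satisfying assignments of a single ETHR gate is exactly a Subset Sum counting problem on $n$ items with integer weights of bit length $W = O(nk\log n)$, which by Theorem~\ref{subset-sum} takes $2^{n/2}\cdot \mathrm{poly}(W) = 2^{n/2}\cdot (nk)^{O(1)}$ time.

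Finally, sum over all $n^{O(k)}$ tuples; the total running time is $n^{O(k)}\cdot 2^{n/2}\cdot(nk)^{O(1)} = 2^{n/2}\cdot n^{O(k)}$, as claimed. There is no real obstacle beyond correctly tracking weight magnitudes through the two conversions; the main quantitative check is verifying that $(n\cdot n^n)^{O(k)}$ weights still yield a polynomial $\mathrm{poly}(W)$ factor in the Horowitz--Sahni algorithm, which it does since the bit length $W$ is only $O(nk\log n)$.
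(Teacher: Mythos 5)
Your proposal is correct and follows essentially the same route as the paper's own proof: decompose each LTF into a disjoint sum of $\poly(n)$ ETHRs via Theorem~\ref{THR2ETHR}, distribute the product, collapse each conjunction of $k$ ETHRs to a single ETHR via Theorem~\ref{ANDETHR2ETHR} (tracking the $n^{O(nk)}$ weight blowup), and count satisfying assignments of each resulting ETHR as a $\#$Subset Sum instance via Theorem~\ref{subset-sum}. The weight-magnitude bookkeeping you carry out matches the paper's.
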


Note that having weights in $[-n^n,n^n]$ is without loss of generality (in our lower bound proofs, our nondeterministic algorithm can always guess an equivalent circuit with such weights, as described by Proposition~\ref{prop-weights}).

\begin{proof} Let $f_1,\ldots,f_k$ be $n$-variable threshold functions. Applying Theorem~\ref{THR2ETHR}, we can write each $f_i$ as a sum of $t=\poly(n)$ \emph{exact} threshold functions: \[f_i(x) = \sum_{i=1}^t g_i(x),\] where each $g_i(x)$ is defined by some weights $w_{i,1},\ldots,w_{i,n} \in \R$ and a threshold value $t \in \R$. 
Therefore we can write the product $f_1\cdots f_k$ as \[\prod_{i=1}^k f_i = \sum_{(i_1,\ldots,i_k)\in[t]^k} g_{i_1}\cdots g_{i_k}.\] 
Each term $g_{i_1}\cdots g_{i_k}$ is a conjunction of $k$ exact thresholds. Applying Theorem~\ref{ANDETHR2ETHR}, each such term can be replaced with a single exact threshold gate, with weights of magnitude $n^{O(kn)}$, i.e., each weight is representable with $O(kn \log n)$ bits. Thus 
\[\prod_{i=1}^k f_i = \sum_{(i_1,\ldots,i_k)\in[t]^k} h_{i_1,\ldots,i_k}\] for some exact threshold gates $h_{i_1,\ldots,i_k}$. The desired sum can therefore be written as
\begin{align*}
\sum_{a \in \{0,1\}^n} \prod_{i=1}^k f_i(a) &= \sum_{a \in \{0,1\}^n} \sum_{(i_1,\ldots,i_k)\in[t]^k} h_{i_1,\ldots,i_k}(a)\\
&= \sum_{(i_1,\ldots,i_k)\in[t]^k} \left(\sum_{a \in \{0,1\}^n} h_{i_1,\ldots,i_k}(a)\right).
\end{align*}
Now observe that each sum $\sum_{a \in \{0,1\}^n} h_{i_1,\ldots,i_k}(a)$ on the RHS is equivalent to an instance of $\#$Subset Sum. In particular, each such sum is counting the number of subsets of a given set of $n$ weights in $[-n^{\Omega(kn)}, n^{O(kn)}]$ which sum to zero. By Theorem~\ref{subset-sum}, this can be computed in $\poly(k,n) \cdot 2^{n/2}$ time.  Since there are $n^{O(k)}$ such sums to compute in the outer sum, the total running time is $n^{O(k)} \cdot 2^{n/2}$. \end{proof} 

The following are immediate from Theorem~\ref{generic-LBs}:

\begin{reminder}{Theorem~\ref{NP-LIN-THR}} 
For all $k$, there is an $f_k \in \NP$ without $\LIN \circ \THR$ circuits of $n^k$ sparsity. Furthermore, for every unbounded $\alpha(n)$ such that $n^{\alpha(n)}$ is time constructible, there is a function in $\NTIME[n^{\alpha(n)}]$ that does not have $\LIN \circ \THR$ circuits of polynomial sparsity.
\end{reminder}

\section{Sparse Combinations of ReLU Gates}\label{section-ReLU}

Recall that a function $f : \{0,1\}^n \rightarrow \R$ from the class ${\sf ReLU}$ is defined with respect to a weight vector $w \in \R^n$ and a scalar $a \in R$, such that for all $a \in \{0,1\}^n$, \[f(x) = \max\{0,\langle w,x\rangle + a\}.\] 
To prove $\LIN \circ {\sf ReLU}$ lower bounds, we give a $2^{n-\eps n}$-time algorithm for the Sum-Product Problem over ${\sf ReLU}$:

{\narrower

{\bf Sum-Product over ${\sf ReLU}$:} Given $k$ ReLU functions $f_1,\ldots,f_k$, each on Boolean variables $x_1,\ldots,x_n$, compute \[\sum_{x \in \{0,1\}^n} \prod_{i=1}^k f_i(x).\]

}

\begin{theorem}\label{sum-prod-ReLU} The Sum-Product of $k$ ${\sf ReLU}$ functions on $n$ variables (with weights in $[-W,W]$) can be computed in $2^{n/2} \cdot n^{O(k)}\cdot \poly(k,n,\log W)$ time. 
\end{theorem}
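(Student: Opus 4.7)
The key observation is that for any real $y$, $\max\{0,y\} = y\cdot \mathbf{1}[y\geq 0]$. Hence each ReLU gate factors as
\[
f_i(x) \;=\; \bigl(\langle w_i,x\rangle + a_i\bigr)\cdot g_i(x),
\]
where $g_i(x)=\mathbf{1}[\langle w_i,x\rangle + a_i \geq 0]$ is a linear threshold function. Writing $L_i(x):=\langle w_i,x\rangle+a_i$ and taking the product over $i$, we get
\[
\prod_{i=1}^k f_i(x) \;=\; P(x)\cdot G(x), \qquad P(x):=\prod_{i=1}^k L_i(x), \qquad G(x):=\prod_{i=1}^k g_i(x).
\]
Here $P$ is a polynomial of degree $k$, and $G$ is the $0/1$ indicator of a conjunction of $k$ threshold functions.

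The plan is to handle $P$ and $G$ separately and then combine via the meet-in-the-middle Subset Sum algorithm (Theorem~\ref{subset-sum}), using the same pipeline as the proof of Theorem~\ref{sum-prod-THR}. First, I would expand $P(x)$ symbolically and multilinearize (using $x_j^2=x_j$ for $x_j\in\{0,1\}$), producing
\[
P(x) \;=\; \sum_{S\subseteq[n],\,|S|\leq k} c_S\cdot \prod_{j\in S} x_j,
\]
with $|S|\leq k$ and at most $n^{O(k)}$ terms; the coefficients $c_S$ can be computed in $n^{O(k)}\cdot\poly(k,n,\log W)$ time. Second, following the proof of Theorem~\ref{sum-prod-THR} verbatim, I would apply Theorem~\ref{THR2ETHR} to write each $g_i$ as a sum of $\poly(n)$ exact thresholds, expand the product $G=\prod_i g_i$ into $n^{O(k)}$ products of $k$ exact threshold gates, and use Theorem~\ref{ANDETHR2ETHR} to collapse each such product to a single exact threshold $h_{\mathbf{i}}(x)=\mathbf{1}[\langle u_{\mathbf{i}},x\rangle = T_{\mathbf{i}}]$, with integer weights of magnitude $n^{O(kn)}\cdot W^{O(k)}$. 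Then
\[
\sum_{x\in\{0,1\}^n} P(x)\,G(x) \;=\; \sum_{\mathbf{i}} \sum_{S} c_S \sum_{x\in\{0,1\}^n} \Bigl(\prod_{j\in S}x_j\Bigr)\cdot \mathbf{1}[\langle u_{\mathbf{i}},x\rangle=T_{\mathbf{i}}],
\]
where the outer sums range over $n^{O(k)}$ tuples $\mathbf{i}$ and $n^{O(k)}$ subsets $S$.

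Each inner sum is a restricted Subset Sum count: fix $x_j=1$ for all $j\in S$ (this takes at most $k$ coordinates), and count the number of $\{0,1\}$ assignments to the remaining $n-|S|\geq n-k$ variables that make the linear form hit the shifted target $T_{\mathbf{i}}-\sum_{j\in S}u_{\mathbf{i},j}$. By Theorem~\ref{subset-sum}, each such count is computable in $2^{(n-|S|)/2}\cdot\poly(n,\log W')$ time, where $\log W' = \poly(k,n,\log W)$ bounds the bit-length of the exact-threshold weights produced by Theorem~\ref{ANDETHR2ETHR}. Multiplying through, the total running time is $n^{O(k)}\cdot 2^{n/2}\cdot\poly(k,n,\log W)$, as claimed.

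The main obstacle, relative to the pure threshold case of Theorem~\ref{sum-prod-THR}, is that the contribution of each satisfying assignment is no longer just $1$ but the monomial $x^S$ from the polynomial part $P$. Bookkeeping this correctly is what forces the decomposition into $n^{O(k)}$ restricted Subset Sum instances (one per monomial per exact threshold), rather than a single call; the degree bound $|S|\leq k$ (and the resulting $n^{O(k)}$ count of monomials) is what prevents this overhead from blowing up the $2^{n/2}$ savings.
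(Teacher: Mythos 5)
Your proof is correct, and it takes a genuinely different route from the paper's after the common first steps. Both proofs begin identically: write $\max\{0,y\}=y\cdot\mathbf{1}[y\geq 0]$, apply Theorem~\ref{THR2ETHR} to expand each indicator into a sum of $\poly(n)$ exact thresholds, distribute, and collapse each $k$-fold conjunction of exact thresholds via Theorem~\ref{ANDETHR2ETHR}, reducing to $n^{O(k)}$ computations of the form $\sum_x h(x)\prod_{i=1}^k(\ip{x}{w_i}+a_i)$ with $h$ a single exact threshold. From there the strategies diverge. The paper keeps each such sum as a single meet-in-the-middle instance: it splits the variables in half, expands $\prod_i(v_A[i]+w_{A'}[i])$ into an inner product of two $2^k$-dimensional vectors, and preprocesses the second-half list by aggregating (summing) the $2^k$-dimensional vectors within each bucket sharing a common $0$-coordinate, so that a single $2^{n/2}$-length pass suffices. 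You instead expand the scalar polynomial $\prod_i(\ip{x}{w_i}+a_i)$ \emph{symbolically and up front} into its $\leq n^{O(k)}$ multilinear monomials of degree $\leq k$, observe that $\sum_x x^S h(x)$ is just a $\#$Subset Sum count on $n-|S|$ variables with a shifted target, and call the Horowitz--Sahni routine (Theorem~\ref{subset-sum}) as a black box $n^{O(k)}$ more times. The tradeoff: your version is more modular and avoids reopening the split-and-list internals, at the cost of an extra $n^{O(k)}$ multiplicative factor (from the monomial count) that the theorem statement comfortably absorbs; the paper's version is a single, slightly heavier, generalization of Horowitz--Sahni with a $2^k$ blowup in vector dimension rather than an $n^{O(k)}$ blowup in the number of calls. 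Both yield $2^{n/2}\cdot n^{O(k)}\cdot\poly(k,n,\log W)$ time as claimed.
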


The proof is similar in spirit to the algorithm for Sum-Product of threshold functions (Theorem~\ref{sum-prod-THR}), except that complications arise due to the real-valued outputs of ${\sf ReLU}$ functions. We end up having to solve a problem generalizing $\#$Subset Sum, but which turns out to have a nice ``split-and-list'' $2^{n/2}$-time  algorithm, analogously to $\#$Subset Sum.

\begin{proof} Let $f_1,\ldots,f_k$ be $n$-variable ReLU functions, defined by weight vectors $w_1,\ldots,w_k \in \R^n$ and scalars $a_1,\ldots,a_k \in \R$, respectively. Our task is to compute
\[\sum_{x \in \{0,1\}^n} \max\{0,\ip{x}{w_1}+a_1\} \cdots \max\{0,\ip{x}{w_k}+a_k\}.\]
First, we note the above sum is equal to 
\[\sum_{x \in \{0,1\}^n} [\ip{x}{w_1} \geq -a_1]\cdot (\ip{x}{w_1}+a_1) \cdots [\ip{x}{w_k} \geq -a_k]\cdot (\ip{x}{w_k}+a_k),\] where we are using the Iverson bracket notation $[P]$ to denote a function that outputs $1$ if $P$ is true and $0$ otherwise.
Applying Theorem~\ref{THR2ETHR}, each of the threshold functions $[\ip{x}{w_i} \geq -a_i]$ can be represented as a linear combination of $t=\poly(n)$ exact threshold functions. In particular there are exact thresholds $g_{i,j}$ such that the above sum equals
\[\sum_x \left(\sum_{j=1}^t g_{1,j}(x)\right)\cdot (\ip{x}{w_1}+a_1) \cdots\left(\sum_{j=1}^t g_{k,j}(x)\right)\cdot (\ip{x}{w_k}+a_k).\]
Applying the distributive law, the above sum equals 
\[\sum_x \sum_{j_1,\ldots,j_k \in [t]^k} g_{1,j_1}(x) \cdots g_{k,j_k}(x)\cdot (\ip{x}{w_1}+a_1)\cdots (\ip{x}{w_k}+a_k).\] Re-arranging the summation order yields
\[\sum_{j_1,\ldots,j_k \in [t]^k} \left(\sum_x g_{1,j_1}(x) \cdots g_{k,j_k}(x)\cdot (\ip{x}{w_1}+a_1)\cdots (\ip{x}{w_k}+a_k)\right).\]
Applying Theorem~\ref{ANDETHR2ETHR}, each $g_{1,j_1}(x) \cdots g_{k,j_k}(x)$ can be replaced by a single exact threshold $h_{j_1,\ldots,j_k}(x)$. 

Our task has been reduced to $n^{O(k)}$ computations of the form
\begin{align}\label{ethr-ips}
\sum_{x \in \{0,1\}^n} h_{j_1,\ldots,j_k}(x) \cdot (\ip{x}{w_1}+a_1)\cdots (\ip{x}{w_k}+a_k).
\end{align}
Without the $(\ip{x}{w_1}+a_1)\cdots (\ip{x}{w_k}+a_k)$ term, \eqref{ethr-ips} would be exactly a $\#$Subset Sum instance, as in Theorem~\ref{sum-prod-THR}. In this new situation, we need to count a ``weighted'' sum over the subset sum solutions, where the weights are determined by a product of $k$ inner products of the solution vectors with some fixed vectors.

Let us now describe how to solve the generalized problem given by \eqref{ethr-ips}. To keep the exposition clear, we will walk through an attempted solution and fix it as it breaks. 

Suppose the exact threshold function $h_{j_1,\ldots,j_k}(x)$ of \eqref{ethr-ips} is defined by weights $\alpha_1,\ldots,\alpha_n \in \R$ and threshold value $t \in \R$, so that \[h_{j_1,\ldots,j_k}(x) = 1 \iff \sum_{i=1}^n \alpha_i x_i = t.\] As with the Subset Sum problem, we begin by splitting the set of variables $x$ into two halves, $\{x_1,\ldots,x_{n/2}\}$ and $\{x_{n/2+1},\ldots,x_n\}$ (WLOG, assume $n$ is even). Correspondingly, we split each of the $k$ weight vectors $w_i \in \R^n$ of \eqref{ethr-ips} into two halves, $w^{(1)}_i \in \R^{n/2}$ and $w^{(2)}_i \in \R^{n/2}$ for the first and second halves of variables, respectively.

We list all $2^{n/2}$ partial assignments to the first half, and all $2^{n/2}$ partial assignments to the second. For each partial assignment $A = (A_1,\ldots,A_{n/2})$ to the first half of variables $\{x_1,\ldots,x_{n/2}\}$, we compute a vector $v_A$, as follows:
\begin{compactitem}
\item $v_A[0] := -t + \sum_{i=1}^{n/2} \alpha_i A_i$,
\item for all $j=1,\ldots,k$, $v_A[j] := a_j + \ip{w^{(1)}_j}{(A_1,\ldots,A_{n/2})}$.
\end{compactitem}
For each partial assignment $A'=(A_{n/2+1},\ldots,A_n)$ from the second half, we compute a vector $w_{A'}$:
\begin{compactitem}
\item $w_{A'}[0] :=  \sum_{i=n/2+1}^{n} \alpha_i A_i$,
\item for all $j=1,\ldots,k$, $w_{A'}[j] := \ip{w^{(2)}_j}{(A_{n/2+1},\ldots,A_n)}$.
\end{compactitem}
Notice that $v_A[0]+w_{A'}[0] = 0$ if and only if $h_{j_1,\ldots,j_k}(A,A') = 1$. Thus in our sum, we only need to consider pairs of vectors $v_A$ from the first half and vectors $w_{A'}$ from the second half such that $v_A[0]+w_{A'}[0]=0$. Moreover, note that for all $j=1,\ldots,k$, 
\[v_A[j]+w_{A'}[j] = \ip{x}{w_j}+a_j.\] 
It follows that \eqref{ethr-ips} equals
\[\sum_{(v_A,w_{A'}) ~:~ v_A[0]+w_{A'}[0]=0} (v_A[1]+w_{A'}[1])\cdots (v_A[k]+w_{A'}[k]).\] 

The Subset-Sum algorithm of Horowitz and Sahni~\cite{Horowitz-Sahni74} shows how to efficiently find pairs $(v_A,w_{A'})$ with $v_A[0]+w_{A'}[0]=0$: sorting all vectors in the second half by their $0$-th coordinate, for each vector $v_A$ from the first half we can compute (in $\poly(n)$ time) the number of second-half vectors $w_{A'}$ satisfying $v_A[0]+w_{A'}[0]=0$ (even if there are exponentially many such vectors). However it is unclear how to incorporate the odd-looking $(v_A[1]+w_{A'}[1])\cdots (v_A[k]+w_{A'}[k])$ multiplicative factors into a weighted sum.

To do so, we modify the vectors $v_A$ and $w_B$ as follows. Consider the expansion of $\prod_{i=1}^k (v_A[i]+w_{A'}[i])$ into a sum of $2^k$ products: it can be seen as the inner product of two $2^k$-dimensional vectors, where one vector's entries is a function solely of $v_A$ and the other vector's entries is a function solely of $w_{A'}$. (Furthermore, note that the number of bits needed to describe entries in these new vectors has increased only by a multiplicative factor of $k$.) 

Thus we can assign $(2^k+1)$-dimensional vectors $v'_A$ (in place of the $v_A$) and $w'_B$ (in place of the $w_B$) such that $v'_A[0]=v_A[0]$, $w'_A[0]=w_A[0]$, and for all $A,A'$ we have
\[(v_A[1]+w_{A'}[1])\cdots (v_A[k]+w_{A'}[k]) = \sum_{j=1}^{2^k} v'_A[j]\cdot w'_{A'}[j].\] 
Now our goal is to compute
\begin{align}\label{goal-sum}
\sum_{(v'_A,w'_{A'}) ~:~ v'_A[0]+w'_{A'}[0]=0}
\left(\sum_{j=1}^{2^k} v'_A[j]\cdot w'_{A'}[j]\right).
\end{align}
We can get a more efficient algorithm for the problem defined by \eqref{goal-sum}, by preprocessing the second half of vectors (i.e., the $w'_{A'}$ vectors). For each distinct value $e = w'_A[0] \in \R$ among the $2^{n/2}$ vectors in the second half, we make a new $(2^k+1)$-dimensional vector $W'_e$ where:
\begin{compactitem}
\item $W'_e[0] = e$, and 
\item for all $i=1,\ldots,2^k$, $W'_e[i] = \sum_{w'_A ~:~ w'_A[0]=e} w'_A[i]$.
\end{compactitem}

That is, the coordinates $1,\ldots,2^k$ of $W'_e$ are obtained by component-wise summing all vectors $w'_A$ such that $w'_A[0]=e$. The preparation of the vectors $W'_e$ can be done in $2^{n/2}\cdot \poly(k,n,\log W)$ time, by partitioning all $2^{n/2}$ vectors $w'_A$ from the second half of variables into equivalence classes (where two vectors are equivalent if their $0$-coordinates are equal), then obtaining each $W'_e$ by summing the vectors in one equivalence class. 

Finally, we can use the $W'_{A'}$ vectors to compute the sum~\eqref{goal-sum} in $2^{n/2}\cdot 2^k \cdot \poly(k,n,\log W)$ time. Have a running sum that is initially $0$. Iterate through each vector $v'_A$ from the first half of variables, look up the corresponding second-half vector $W'_e$ (with $v'_A[0] = -W'_e[0]$) in $\poly(k,n,\log W)$ time, and add the inner product \[\sum_{i=1}^{2^k} v'_A[i]\cdot W'_e[i]\] to the running sum.
Because each vector $(W'_e[1],\ldots,W'_e[2^k])$ is the sum of \emph{all} vectors $(w'_{A'}[1],\ldots,w'_{A'}[2^k])$ such that $v'_A[0]+w'_{A'}[0]=0$, each inner product $\sum_{i=1}^{2^k} v'_A[i]\cdot W'_e[i]$ contributes 
\[\sum_{w'_{A'} ~:~ v'_A[0]+w'_{A'}[0]=0} \left(\sum_{j=1}^{2^k} v'_A[j]\cdot w'_A[j]\right)\] to the running sum. Therefore after iterating through all vectors $v'_A$, our running sum has computed \eqref{goal-sum} exactly, in only $2^{n/2}\cdot 2^k\cdot \poly(n,\log W)$ time.
\end{proof}

From the algorithm of Theorem~\ref{sum-prod-ReLU}, we immediately obtain the $\LIN\circ {\sf ReLU}$ lower bounds of Theorem~\ref{NP-ReLU}.

\section{Sparse Combinations of Low-Degree Polynomials over Finite Fields}\label{section-low-degree}

We can also prove lower bounds for linear combinations of low-degree $\F_p$-polynomials in $n$ variables, for any prime $p$, by giving a faster Sum-Product algorithm. In this context, the Sum-Product problem becomes:

{\narrower 

{\bf Sum-Product over $\MOD_p\circ\AND_d$:} Given $k$ polynomials $p_1,\ldots,p_k \in \F_p[x_1,\ldots,x_n]$, each of degree at most $d$, compute \[\sum_{x \in \{0,1\}^n} \left(\prod_{i=1}^k p_i(x)\right),\] where the sum over all $x \in \{0,1\}^n$ is taken over the reals (or rationals). 

}

That is, we treat each $\prod_{i=1}^k p_i(x)$ as a function from $\{0,1\}^n$ to $\{0,1,\ldots,p-1\} \subset \Q$, and wish to compute the sum of these integers over all $x \in \{0,1\}^n$.

In related work, Lokshtanov \emph{et al.}~\cite{LokshtanovPTWY17} showed how to (deterministically) count solutions in $\F_p^n$ to a system of $\ell$ degree-$d$ $\F_p$-polynomials in $p^{n+o(n)-n/O(dp^{6/7})}\cdot \poly(\ell)$ time. For our Sum-Product problem, we need to compute a ``weighted'' sum (the terms can take on values in $\{0,\ldots,p-1\}$), and we need to count the weighted sum over only \emph{Boolean} assignments. We can achieve this, with a comparable runtime savings involving $k$ and $p$:

\begin{theorem}\label{sum-prod-polys}
The Sum-Product of $k$ degree-$d$ polynomials $p_1,\ldots,p_k \in \F_p[x_1,\ldots,x_n]$ can be computed in $p^{2k} \cdot(1.9^n + 2^{n-n/(6dp)})\cdot \poly(n)$ time. 
\end{theorem}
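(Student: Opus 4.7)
The plan is to reduce Sum-Product to a small family of Boolean counting problems and handle each one by combining the polynomial method with a split-and-list fallback.

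\textbf{Reduction to counting.} Since each $p_i(x)\in\{0,1,\ldots,p-1\}$, I would write $p_i(x)=\sum_{j=0}^{p-1}j\cdot[p_i(x)=j]$ and expand the product to obtain
\[
\sum_{x\in\{0,1\}^n}\prod_{i=1}^k p_i(x)=\sum_{v\in\{0,\ldots,p-1\}^k}(v_1\cdots v_k)\cdot N_v,
\qquad N_v:=\bigl|\{x\in\{0,1\}^n:p_i(x)=v_i~\forall i\}\bigr|.
\]
This replaces the task with $p^k$ Boolean counting subproblems. For each tuple $v$, Fermat's little theorem turns the joint indicator into
\[
Q_v(x)=\prod_{i=1}^k\bigl(1-(p_i(x)-v_i)^{p-1}\bigr)\in\F_p[x],
\]
a polynomial of degree at most $D_0:=kd(p-1)$ whose values on $\{0,1\}^n$ already lie in $\{0,1\}$.

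\textbf{Polynomial-method branch.} To recover $N_v$ over $\Z$, I would apply the Beigel-Tarui amplifier $P_\ell$ of Lemma~\ref{mod-amplifying} factor-by-factor, forming $\widetilde Q_v(x):=\prod_i P_\ell\!\bigl(1-(p_i(x)-v_i)^{p-1}\bigr)$. This $\Z$-valued polynomial has degree $D=D_0(2\ell-1)$ and satisfies $\widetilde Q_v(x)\equiv\prod_i[p_i(x)=v_i]\pmod{p^\ell}$ on $\{0,1\}^n$, so its values modulo $p^\ell$ lie in $\{0,1\}$. After reduction by $x_j^2-x_j$ it is multilinear of degree $\le D$ with at most $\binom{n}{\le D}$ nonzero coefficients, and using the identity $\sum_{x\in\{0,1\}^n}x^S=2^{n-|S|}$ together with Theorem~\ref{poly-eval} we obtain $\sum_x\widetilde Q_v(x)\equiv N_v\pmod{p^\ell}$ in $\binom{n}{\le D}\cdot\poly(n,\ell,\log p)$ time. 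Choosing $\ell$ so that $D\le\alpha n$ for the largest $\alpha$ with $\binom{n}{\le\alpha n}\le 2^{n-n/(6dp)}$ gives a single-lift runtime of $2^{n-n/(6dp)}\cdot\poly(n)$. Since one lift recovers only $\ell\log p$ of the $\log N_v\le n$ bits of $N_v$, I would repeat the lift for $p^{O(k)}$ residue classes, peeling off lower digits and re-running on $N_v^{(\text{high})}=(N_v-N_v^{(\text{low})})/p^\ell$; this amortization explains the $p^{2k}$ factor in the final bound (a $p^k$ from the enumeration of $v$ plus an additional $p^{O(k)}$ from the residue peeling).

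\textbf{Split-and-list fallback.} When $d$ or $p$ is large enough that $2^{n-n/(6dp)}$ is essentially $2^n$, the polynomial branch pays nothing, so I would instead split $x=(y,z)$ with $|y|=|z|=n/2$ and enumerate all $2^{n/2}$ partial assignments of $y$. For each $y$ the restricted polynomials $p_i(y,z)$ are degree-$d$ polynomials in $z$, so I would tabulate the $\F_p^k$-valued signature $(p_1(y,z),\ldots,p_k(y,z))$ for $z\in\{0,1\}^{n/2}$ in a hash table indexed by the target vector $v\in\F_p^k$, then count pairs $(y,z)$ whose signatures together realize $v$. The total work is $2^{n/2}\cdot p^{O(k)}\cdot\poly(n)$, which the theorem bounds by $1.9^n\cdot p^{O(k)}$ in the relevant regime. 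Taking the faster branch tuple-by-tuple gives the claimed $(1.9^n+2^{n-n/(6dp)})$ factor.

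\textbf{Main obstacle.} The hard part is managing the tension between the low-degree $\F_p$-representation that enables fast evaluation and the $\Z$-valued count $N_v\in[0,2^n]$ we actually need: inflating the modulus to $p^\ell>2^n$ in a single shot multiplies the degree by $\Theta(n/\log p)$ and destroys any sub-$2^n$ savings. The residue-peeling strategy above keeps $\ell$ small so that $D$ stays below the entropy threshold $\alpha n$, but verifying that the constant $6$ in the exponent of $2^{n-n/(6dp)}$ actually survives the required balancing of $D=kd(p-1)(2\ell-1)$ against $\binom{n}{\le D}$, and that the sort/hash fallback indeed closes the gap when the polynomial branch degenerates, is where the main technical care is needed.
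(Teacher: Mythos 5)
Your opening reduction (writing $p_i=\sum_a a\cdot[p_i(x)=a]$ and expanding to $p^k$ counting problems $N_v$) matches the paper exactly, and your split-and-list fallback is a reasonable idea in its own right. However, the core of the proposal---the way you propose to recover the integer count $N_v$---does not work, and this is precisely where the paper does something you did not anticipate.

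You correctly identify the obstacle: $N_v$ can be as large as $2^n$, so forcing $p^\ell>2^n$ sets $\ell=\Theta(n/\log p)$ and inflates the degree so much that $\binom{n}{\le D}$ gives no savings. But your proposed fix---``residue peeling,'' re-running the lift on $(N_v - N_v^{(\mathrm{low})})/p^\ell$---is not an algorithm. Once you know $N_v\bmod p^\ell$, there is no way to form the quantity $\lfloor N_v/p^\ell\rfloor$ as a sum of polynomial evaluations; you would need $N_v$ itself, which is exactly the unknown. The paper's actual resolution is orthogonal and is the main idea you are missing: rather than summing an amplified indicator over all $2^n$ points, it \emph{pre-sums over a $\delta n$-variable subcube inside the polynomial}, defining $Q(x_1,\ldots,x_{n-\delta n}):=\sum_{a\in\{0,1\}^{\delta n}}P_{\delta n}\bigl(1-q(x,a)^{p-1}\bigr)$. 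Because this inner sum has only $2^{\delta n}<p^{\delta n}$ terms, each value $Q(b)\bmod p^{\delta n}$ is the \emph{exact} integer count of roots in that subcube, with no wraparound---so one can simply evaluate $Q$ multilinearly on the remaining $2^{n-\delta n}$ points via Theorem~\ref{poly-eval} and integer-sum the results. This keeps $\ell=\delta n$ tiny and the degree $<2dp\delta n$, which is why the balancing $\delta=1/(6dp)$ succeeds.

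A second gap: your $Q_v=\prod_i(1-(p_i-v_i)^{p-1})$ has degree $kd(p-1)$, and after amplification $kd(p-1)(2\ell-1)$; this carries an extra factor of $k$ into the exponent, which would contaminate the $2^{n-n/(6dp)}$ bound with a $k$-dependence the theorem does not allow. The paper avoids this by first collapsing the system $\{p_i(x)=a_i\}_{i=1}^k$ into a \emph{single} degree-$d$ equation via the identity $\sum_x\prod_i[p_i(x)=a_i]=p^{-k}\sum_{b\in\F_p^k}\sum_x\bigl([\,\sum_j b_j(p_j(x)-a_j)=0\,]-[\,\sum_j b_j(p_j(x)-a_j)=1\,]\bigr)$. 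That second inner enumeration over $b\in\F_p^k$ is where the additional $p^k$ factor in the stated $p^{2k}$ comes from; your proposal has no analogue of this step and so has no mechanism to keep the degree at $d$ independent of $k$.
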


\begin{proof} Let $p_1,\ldots,p_k$ be given. We wish to compute 
\begin{align}\label{sp-e}
\sum_{x \in \{0,1\}^n} \left(\prod_{i=1}^k p_i(x)\right),
\end{align} where each product outputs an integer in $\{0,1,\ldots,p-1\}$. We first convert the Sum-Product problem of \eqref{sp-e} to an equivalent sum where each ``term'' in the sum is a small system of polynomial equations. 

We say that a function $f : \{0,1\}^n \rightarrow \{0,1\}$ is an \emph{exact $\F_p$-polynomial function} if there is a polynomial $p \in \F_p[x_1,\ldots,x_n]$ and $a \in \F_p$ such that for all $x \in \{0,1\}^n$, 
\[f(x) = 1 \iff p(x) = a.\] 
We use the notation $[p(x)=a]$ to denote such an exact polynomial function. Let us replace each polynomial $p_i(x)$ in the sum-product expression with an equivalent linear combination (over $\Z$) of exact polynomial functions. In particular, replace each $p_i(x)$ with the sum \emph{over the integers} \[\sum_{a \in \F_p} a \cdot [p_i(x)=a].\] That is, we are replacing $p_i(a)$ with an equivalent integer-valued sum of $p$ Boolean functions. Now the desired sum \eqref{sp-e} looks like:
\begin{align}\nonumber
\sum_{x \in \{0,1\}^n} \left(\prod_{i=1}^k \left(\sum_{a \in \F_p} a \cdot [p_i(x)=a]\right)\right) &=
\sum_{x \in \{0,1\}^n} \sum_{(a_1,\ldots,a_k) \in \F_p^k} a_1\cdots a_k \cdot \prod_{i=1}^k [p_i(x)=a_i]~~\text{(by distributivity)}\\
&=  \sum_{(a_1,\ldots,a_k) \in \F_p^k} a_1\cdots a_k \cdot \left(\sum_{x \in \{0,1\}^n} [p_1(x)=a_1]\cdots [p_k(x)=a_k]\right).\label{sp-e2}
\end{align}
Each inner sum in \eqref{sp-e2} counts the number of Boolean solutions to a system of polynomial equations $p_1(x)=a_1,\ldots,p_k(x)=a_k$. We can further reduce this problem to counting the number of Boolean solutions to \emph{one} equation, by applying a simple reduction (from~\cite{WilliamsSOSA18}). Namely, we have the equation
\begin{align}\label{sp-e3}
\sum_{x \in \{0,1\}^n} \prod_{i=1}^k [p_i(x)=a_i] 
= \frac{1}{p^k} \sum_{(b_1,\ldots,b_k) \in \F_p^k} \sum_{x\in \{0,1\}^n} \left(\left[\sum_{j=1}^k b_j \cdot (p_j(x)-a_j) = 0\right] - \left[\sum_{j=1}^k b_j \cdot (p_j(x)-a_j) = 1\right]\right).
\end{align}
To see why \eqref{sp-e3} holds, let $x \in \{0,1\}^n$ such that $[p_1(x)=a_1]\cdots [p_k(x)=a_k] = 1$. Then for \emph{every} $(b_1,\ldots,b_k)\in\F_p^k$, we have $[\sum_{j=1}^k b_j \cdot (p_j(x)-a_j) = 0] = 1$. So every solution $x$ to the system of $k$ equations is counted for $p^k$ times in \eqref{sp-e3}; since the result is divided by $p^k$, each solution contributes $1$ to \eqref{sp-e3}. On the other hand, if $x$ is not a solution to the system, and $[p_1(x)=a_1]\cdots [p_k(x)=a_k] = 0$, then for some $j$, $p_j(a) - a_j \neq 0$. It follows that there are precisely $p^{k-1}$ vectors $(b_1,\ldots,b_k) \in \F_p^k$ such that $[\sum_{j=1}^k b_j \cdot (p_j(x)-a_j) = 0]=1$, and there are precisely $p^{k-1}$ (other) vectors $(b'_1,\ldots,b'_k) \in \F_p^k$ such that $[\sum_{j=1}^k b'_j \cdot (p_j(x)-a_j) = 1]=1$. These two equal counts cancel out in the sum of \eqref{sp-e3}, so non-solutions to the system contribute $0$ to the sum of \eqref{sp-e3}.

Putting \eqref{sp-e2} and \eqref{sp-e3} together, the original Sum-Product problem \eqref{sp-e} can now be reduced to the computation of $O(p^{2k})$ sums, each of the form 
\[\sum_{x\in \{0,1\}^n} [q(x_1,\ldots,x_n) = 0],\] where $q$ is an $\F_p$-polynomial of degree at most $d$. That is, to obtain \eqref{sp-e}, we only need to count the Boolean roots of $O(p^{2k})$ polynomials $q$, and take the appropriate $\R$-linear combination of these counts. 

Let us now focus on counting roots to a single polynomial $q(x_1,\ldots,x_n)$ of degree $d$. Let $P_{\ell}(z)$ be the modulus-amplifying polynomial of degree $2\ell-1$, from Theorem~\ref{mod-amplifying}. 
Let $\delta \in (0,1/2)$ be a parameter, and consider the following ``reduced'' polynomial in $n-\delta n$ variables, over the integers:
\[Q(x_1,\ldots,x_{n-\delta n}) := \sum_{a_1,\ldots,a_{\delta n} \in \{0,1\}} P_{\delta n}(1-q(x_1,\ldots,x_{n-\delta n},a_1,\ldots,a_{\delta n})^{p-1}).\] Note that $Q$ has degree less than $2dp\delta n$. Set $\delta = 1/(6dp)$, and note that $2dp\delta n < (n-\delta n)/2$.
Over $\F_p$, the polynomial $1-q(x)^{p-1}$ equals $1 \bmod p$ if $x$ is a root of $q$, and is $0 \bmod p$ otherwise. 
Applying the modulus-amplifying properties of $P_{\delta n}$, we have:
\begin{compactitem}
\item If $x$ is a root of $q$, then $P_{\delta n}(1-q(x)^{p-1}) = 1 \bmod p^{\delta n}$.
\item If $x$ is not a root of $q$, then $P_{\delta n}(1-q(x)^{p-1}) = 0 \bmod p^{\delta n}$.
\end{compactitem}
As the sum in $Q$ is over only $2^{\delta n}$ such $P_{\delta}(\cdots)$ terms, and $p \geq 2$, we conclude that for all $b_1,\ldots,b_{n-\delta n} \in \{0,1\}$, the quantity $(Q(b_1,\ldots,b_{n-\delta n}) \bmod p^{\delta n})$ \emph{equals} the number of $a_1,\ldots,a_{\delta n} \in \{0,1\}$ such that \[q(b_1,\ldots,b_{n-\delta n},a_1,\ldots,a_{\delta n}) = 0.\] Therefore if we evaluate the polynomial $Q$ over all $2^{n-\delta n}$ Boolean assignments $(b_1,\ldots,b_{n-\delta n})$, compute each value separately modulo $p^{\delta n}$, then sum those values over the integers, we will obtain the number of Boolean roots of $q$. 

Over Boolean assignments, we may assume without loss of generality that $Q$ is multilinear (i.e. $x_i^2 = x_i$ for all $i$). Since $2dp\delta n < (n-\delta n)/2$, standard properties of binomial coefficients imply that the number of monomials of $Q$ is \[O\left(\binom{n-\delta n}{2dp\delta n}\right).\]
By constructing $Q$ term-by-term (expanding each $P_{\delta n}(1-q(x_1,\ldots,x_{n-\delta n},a_1,\ldots,a_{\delta n})^{p-1})$ one-by-one, and adding them to a running sum, similar to~\cite{ChanW16,LokshtanovPTWY17}), we may represent $Q$ as a sum of $O\left(\binom{n-\delta n}{2dp\delta n}\right)$ monomials, constructed in $\poly(n) \cdot \binom{n-\delta n}{2dp\delta n}$ time. Letting $\delta = 1/(6dp)$, the number of monomials of $Q$ is less than $\binom{n}{n/3}\leq 1.9^n$. Applying the fast polynomial evaluation algorithm of Theorem~\ref{poly-eval}, $Q$ can be evaluated on all $2^{n-n/(6dp)}$ Boolean assignments in time $(1.9^n + 2^{n-n/(6dp)})\cdot \poly(n)$ time.
\end{proof}

Therefore, for every \emph{fixed} degree $d$ and prime $p$, there is an $\eps > 0$ such that the relevant Sum-Product problem is in $2^{n-\eps n}\cdot \poly(n)$ time. This immediately implies the lower bounds of Theorems~\ref{NP-polys} and \ref{ENP-polys}. In particular, to prove \ref{ENP-polys} we apply Theorem~\ref{generic-LBs2}. Fix an integer degree $d$, and let $c \geq 1$ be the universal constant (from Theorem~\ref{generic-LBs2}) such that we need to solve Sum-Product for $\MODp \circ \AND_d \circ {\sf ANY}_c$ circuits. Converting to $\LIN \circ \MODp \circ \AND_{dc}$, Theorem~\ref{sum-prod-polys} says that the Sum-Product problem can be solved in $2^{n-n/O(dc)}$ time (omitting low-order terms).

\section{Conclusion}

Applying old and new tools, we have established several strong new lower bounds for representing Boolean functions in different regimes. Among the most interesting open problems remaining, we find the Quadratic Uncertainty Principle (that $\AND$ requires a large $\R$-linear combination of quadratic $\F_2$-polynomials) to be especially intriguing. Quadratic polynomials have special properties that higher degrees do not; for example, one can count the roots of a given quadratic $\F_p$-polynomial in \emph{polynomial time} (see~\cite{WilliamsSOSA18} for a recent application of this phenomenon). Therefore in some cases, our $2^{n-\eps n}$-time algorithms become $\poly(n)$-time algorithms. This \emph{should} imply lower bounds for functions \emph{in $\P$} against linear combinations of quadratic $\F_2$-polynomials, perhaps even lower bounds against the AND function, but so far we have not yet been able to prove such bounds. 

A longstanding problem in circuit complexity---seemingly related to the Quadratic Uncertainty Principle---is the Constant Degree Hypothesis of Barrington, Straubing, and Therien~\cite{BST90}:

\begin{hypothesis}[Constant Degree Hypothesis (CDH)] For every constant $d \geq 1$ and primes $p, q$, there is an $\eps > 0$ such that the $\AND$ function on $n$ variables cannot be computed by $\MOD_p \circ \MOD_q \circ \AND_d$ circuits of $2^{\eps n}$ size.
\end{hypothesis}

The CDH is currently only known to be true for $d=1$, and for $p=q$. Can the techniques of this paper say anything about such problems, even for the case of $d=2$? 

\paragraph*{Acknowledgements.} I thank Lijie Chen, Pooya Hatami, Adam Klivans, Shachar Lovett, and Anirbit Mukherjee for comments and discussions on the topics of this paper. In particular, I am grateful to Shachar for noticing a gap in a lemma in an earlier version of this paper. I am also grateful to Brynmor Chapman for his proofreading, and patience with my explanations regarding this paper.

\bibliographystyle{alpha}
\bibliography{cc-papers.bib}

\appendix 

\section{Linear Lower Bound for AND With Sums of Quadratic Polynomials}\label{AND-linear-LB}

For reference, we report a folklore $\Omega(n)$ lower bound on representing AND with linear combinations of quadratic $\F_2$-polynomials (recall it is conjectured that the sparsity lower bound is $2^{\Omega(n)}$). The below proof was communicated to us by Shachar Lovett.

\begin{theorem}[Lovett~\cite{Lovett-personal17}] The $\AND$ function on $n$ inputs does not have $\LIN \circ \MOD2 \circ \AND_2$ circuits of sparsity less than $n/2$.
\end{theorem}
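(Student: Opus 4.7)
The plan is to assume for contradiction that $\AND_n = \sum_{i=1}^s \alpha_i p_i$ with each $p_i(x) = q_i(x) \bmod 2$ for some quadratic $\F_2$-polynomial $q_i$ and $s < n/2$, and derive a contradiction by localizing the identity near $1^n$ (the unique input where $\AND_n$ is nonzero) and extracting a rank constraint on a Gram-type matrix.

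Concretely, for each $i \in [s]$, I would set $\sigma_i := 1 - 2 p_i(1^n) \in \{\pm 1\}$ and $u_i(j) := (\partial_j q_i)(1^n \setminus j) \in \{0,1\}$, the formal $\F_2$-partial derivative of $q_i$ at the all-ones point. A direct calculation gives $p_i(1^n) - p_i(1^n \oplus e_j) = \sigma_i u_i(j)$, and an analogous two-variable case analysis expresses the second difference $p_i(1^n) - p_i(1^n \oplus e_j) - p_i(1^n \oplus e_k) + p_i(1^n \oplus e_j \oplus e_k)$ in terms of $\sigma_i$, $u_i(j)$, $u_i(k)$ and the coefficient $b^{(i)}_{jk}$ of $x_j x_k$ in $q_i$. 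Applying these to the identity $\AND_n = \sum_i \alpha_i p_i$---and using that $\AND_n(1^n)=1$ while $\AND_n$ vanishes at every $1^n \oplus e_j$ and every $1^n \oplus e_j \oplus e_k$ for $n \geq 3$---the first-order equations become $\sum_i A_i u_i(j) = 1$ for every $j \in [n]$, where $A_i := \alpha_i \sigma_i$.

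The second-order equations, combined with $u_i(j)^2 = u_i(j)$ and the first-order equations above, translate into constraints on the symmetric matrix $G \in \R^{n \times n}$ defined by $G_{jk} := \sum_i A_i u_i(j) u_i(k)$: namely, $G_{jj} = 1$ for all $j$, and each off-diagonal $G_{jk}$ equals $(1 - T_{jk})/2$ where $T = \sum_i A_i T_i$ for ``defect'' matrices $T_i = -\sigma_i D_i B^{(i)} D_i$, with $D_i := \operatorname{diag}(1 - 2 u_i)$ a $\pm 1$ diagonal and $B^{(i)}$ the $\{0,1\}$-valued symmetric matrix of quadratic coefficients of $q_i$. Writing $U \in \{0,1\}^{n \times s}$ for the matrix whose columns are the $u_i$, we then have the factorization $G = U\operatorname{diag}(A) U^T$, so $\operatorname{rank}(G) \leq s$.

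The final step is to show that the diagonal-one condition on $G$ combined with the structural constraints on the off-diagonal coming from the form of $T$ force $\operatorname{rank}(G) \geq n/2$, contradicting $s < n/2$. This last rank lower bound is the main obstacle. Intuitively, $2G = I + J - T$ where $I + J$ has full rank $n$, and $T$ is a sum of $s$ summands of structured low-rank type; the hope is to use the extra consistency $UA = \vec{1}$ tying the columns of $U$ to the signed weights $A_i$ to argue that $T$ cannot reduce the rank of $I + J$ by more than $s$ while simultaneously being expressible in the form $\sum_i A_i T_i$, so that $n \leq \operatorname{rank}(2G) + \operatorname{rank}(T) \leq 2s$.
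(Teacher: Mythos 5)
Your proposal is structurally different from the paper's proof and, as you yourself flag, incomplete at the crucial step. The paper argues via the Chevalley--Warning theorem: replace $\AND$ by NOR (same sparsity), write the representation as $f(x) = \sum_{i=1}^s \alpha_i(-1)^{q_i(x)}$, normalize so each $q_i(0^n)=0$, and observe that if $s < n/2$ then $\sum_i \deg q_i < n$, so Chevalley--Warning forces the common zero set of $\{q_1,\dots,q_s\}$ in $\F_2^n$ to have even cardinality. Since $0^n$ is a common zero there is another, $x^\star \neq 0^n$, at which every $q_i$ also vanishes, so $f(0^n) = f(x^\star)$, contradicting NOR. This is short, purely algebraic, and makes no reference to ranks.

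Your route---localizing around $1^n$, taking first and second finite differences, and packaging the resulting constraints into a Gram-type factorization $G = U\,\mathrm{diag}(A)\,U^{T}$ with $G_{jj}=1$---correctly yields $\mathrm{rank}(G) \le s$, but the final step is not just an obstacle, it appears to be false as stated. The matrix $B^{(i)}$ of quadratic coefficients of $q_i$ is an arbitrary symmetric $\{0,1\}$ matrix and can have rank $\Theta(n)$, so each $T_i = -\sigma_i D_i B^{(i)} D_i$ and hence $T = \sum_i A_i T_i$ need not have rank bounded by $s$; in particular the inequality $n \le \mathrm{rank}(2G) + \mathrm{rank}(T) \le 2s$ does not follow. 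The fact that a matrix with unit diagonal can have rank $1$ (e.g.\ the all-ones matrix) shows that the diagonal condition alone carries no rank information, and the extra consistency $UA = \vec 1$ does not repair this. There is also a sign slip in the first-difference computation: with $\sigma_i = 1 - 2p_i(1^n)$, one has $p_i(1^n) - p_i(1^n \oplus e_j) = -\sigma_i u_i(j)$, not $+\sigma_i u_i(j)$; this is absorbable by a redefinition but signals the bookkeeping needs care. To salvage a rank-based approach you would essentially need to rediscover a Chevalley--Warning-type fact in the guise of a rank inequality, which is a genuinely missing idea rather than a routine gap.
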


\begin{proof} Let $f : \{0,1\}^n \rightarrow \{0,1\}$ be the NOR function (which by DeMorgan's laws has the same sparsity as AND). Suppose we can write 
\[f(x) = \sum_{i=1}^s \alpha_i (-1)^{q_i(x)},\] where the $q_i(x)$ are quadratic $\F_2$-polynomials, and all $\alpha_i \in \R$. Note that without loss of generality we may assume $q_i(0)=0$ for all $i$ (if $q_i(0) = 1$, then replacing $\alpha_i$ by $-\alpha_i$ and $q_i(x)$ by $q_i(x)+1$ yields an equivalent expression). If $s<n/2$, then by the Chevalley–Warning theorem, the number of common roots of $\{q_1,...,q_r\}$ is divisible by 2. But then there is another common root $x^{\star}$, so $f(0)=f(x^{\star})$, contradicting the definition of NOR.
\end{proof}

\end{document}